\newcommand{\noun}[1]{\textsc{#1}}
\renewcommand{\mathbf}[1]{\bm{#1}}
\renewcommand{\[}{$$}
\renewcommand{\]}{$$}
\begin{document}
\title{Discriminated Belief Propagation}
\author{Uli Sorger
\ifpdf 
\thanks{Technical Report  \href{http://wiki.uni.lu/csc/Discriminated+Belief+Propagation.html}{TR-CSC-07-01} of the D\_Max Project funded by the  \href{http://www.uni.lu}{University of Luxembourg}.}
\else
\thanks{Technical Report TR-CSC-07-01 of the D\_Max Project funded by the Universtity of Luxembourg.}
\fi
}
\maketitle
\begin{abstract}
Near optimal decoding of good error control codes is generally
a difficult task. However, for a certain type of (sufficiently) good
codes an efficient decoding algorithm with near optimal performance
exists. These codes are defined via a combination of constituent codes
with low complexity trellis representations. Their decoding algorithm
is an instance of (loopy) belief propagation and is based on an iterative
transfer of constituent beliefs. The beliefs are thereby given by
the symbol probabilities computed in the constituent trellises. Even
though weak constituent codes are employed close to optimal performance
is obtained, i.e., the encoder/decoder pair (almost) achieves the
information theoretic capacity. However, (loopy) belief propagation
only performs well for a rather specific set of codes, which limits
its applicability.\\
In this paper a generalisation of iterative decoding is presented.
It is proposed to transfer more values than just the constituent beliefs.
This is achieved by the transfer of beliefs obtained by independently
investigating parts of the code space. This leads to the concept of
discriminators, which are used to improve the decoder resolution within
certain areas and defines discriminated symbol beliefs. It is shown
that these beliefs approximate the overall symbol probabilities. This
leads to an iteration rule that (below channel capacity) typically
only admits the solution of the overall decoding problem. Via a \noun{Gauss}
approximation a low complexity version of this algorithm is derived.
Moreover, the approach may then be applied to a wide range of channel
maps without significant complexity increase.
\end{abstract}

{\bf Keywords: } Iterative Decoding,
Coupled Codes, Information Theory, Complexity, Belief Propagation,
Typical Decoding, Set Representations, Central Limit Theorem, Equalisation,
Estimation, Trellis Algorithms

\pagestyle{plain} \lettrine{D}{ecoding} error control codes is the
inversion of the encoding map in the presence of errors. An optimal
decoder finds the codeword with the least number of errors. However,
optimal decoding is generally computationally infeasible due to the
intrinsic non linearity of the inversion operation. Up to now only
simple codes can be optimally decoded, e.g., by a simple trellis representation.
These codes generally exhibit poor performance or rate~\cite{DBLP:journals/tit/KschischangS95}. 

On the other hand, good codes can be constructed by a combination
of simple constituent codes (see e.g.,~\cite[pp.567ff]{MacWillian_Sloan}).
This construction is interesting as then a trellis based inversion
may perform almost optimally: \noun{Berrou} et al.~\cite{Berrou}
showed that iterative turbo decoding leads to near capacity performance.
The same holds true for iterative decoding of Low Density Parity Check
(LDPC) codes~\cite{Gallager62}. Both decoders are conceptually similar
and based on the (loopy) propagation of beliefs~\cite{McElice98}
computed in the constituent trellises. However, (loopy) belief propagation
is often limited to idealistic situations. E.g., turbo decoding generally
performs poorly for multiple constituent codes, complex channels,
good constituent codes, and/or relatively short overall code lengths.

In this paper a concept called discrimination is used to generalise
iterative decoding by (loopy) belief propagation. The generalisation
is based on an uncertainty or distance discriminated investigation
of the code space. The overall results of the approach are linked
to basic principles in information theory such as typical sets and
channel capacity~\cite{Shannon_Inf,Massey_Inf,Mackay}. 

\textbf{Overview:} The paper is organised as follows: First the combination
of codes together with the decoding problem and its relation to belief
propagation are reviewed. Then the concept of discriminators together
with the notion of a common belief is introduced. In the second section
local discriminators are discussed. By a local discriminator a controllable
amount of parameters (or generalised beliefs) are transferred. It
is shown that this leads to a practically computable common belief
that may be used in an iteration. Moreover, a fixed point of the obtained
iteration is typically the optimal decoding decision. Section~3 finally
considers a low complexity approximation and the application to more
complex channel maps.

\section{Code Coupling\label{sec:Coupled-Codes}}

To review the combination of constituent codes we here consider only
binary linear codes $\mathbf{C}$ given by the encoding map\[
\mathcal{C}:\mathbf{x}=(x_{1},\ldots,x_{k})\mapsto\mathbf{c}=(c_{1},\ldots,c_{n})=\mathbf{x}\mathbf{G}\mbox{ mod }2\]
with $\mathbf{G}$ the $(k\times n)$ \emph{generator} matrix with
$x_{i}$, $c_{i}$, and $G_{i,j}\in\mathbb{Z}_{2}=\{0,1\}$. 

The map defines for $\mbox{rank}(\mathbf{G})=k$ the \emph{event}
set $\mathbb{E}(\mathbf{C})$ of $2^{k}$ code words $\mathbf{c}$.
The \emph{rate} of the code is given by $R=k/n$ and it is for an
error correcting code smaller than One.

The event set $\mathbb{E}(\mathbf{C})$ is by linear algebra equivalently
defined by a $((n-k)\times n)$ \emph{parity} matrix $\mathbf{H}$
with $\mathbf{H}\mathbf{G}^{T}=\mathbf{0}\mbox{ mod }2$ and thus\[
\mathbb{E}(\mathbf{C})=\{\mathbf{c}:\mathbf{H}\mathbf{c}^{T}=\mathbf{0}\mbox{ mod }2\}.\]
Note that the modulo operation is in the sequel not explicitly stated. 

$\mathbb{E}(\mathbf{C})$ is a subset of the set $\mathbb{S}$ of
all $2^{n}$ binary vectors of length $n$. The restriction to a subset
is interesting as this leads to the possibility to correct corrupted
words. However, the correction is a difficult operation and can usually
only be practically performed for simple or short codes. 

On the other hand long codes can be constructed by the use of such
simple \emph{constituent} codes. Such constructions are reviewed in
this section. \emph{ }

\begin{Def}
(Direct Coupling) The two constituent linear systematic coding maps
\[
\mathcal{C}^{(l)}:\mathbf{x}\mapsto\mathbf{c}^{(l)}=\mathbf{x}\cdot\mathbf{G}^{(l)}=\mathbf{x}\cdot[\mathbf{I}\,\mathbf{P}^{(l)}]\,\textrm{ with }l=1,2\]
 and a direct coupling gives the overall code $\mathbb{E}(\mathbf{C}^{(a)})$
with $\mathbf{c}^{(a)}=\mathbf{x}\cdot[\mathbf{I}\,\mathbf{P}^{(1)}\,\mathbf{P}^{(2)}]\textrm{.}$
\end{Def}
\begin{minipage}[c][1\totalheight]{0.6\columnwidth}%
\begin{exa}
The constituent codes used for turbo decoding~\cite{Berrou} are
two systematic convolutional codes~\cite{Joh_Faltungscodes} with
low \emph{trellis} decoding complexity (See Appendix~\ref{sub:Trellis-Based-Algorithms}).
The overall code is obtained by a direct coupling as depicted in the
figure to the right. The encoding of the non--systematic part $\mathbf{P}^{(l)}$
can be done by a recursive encoder. The $\Pi$ describes a permutation
of the input vector $\mathbf{x}$, which significantly improves the
overall code properties but does not affect the complexity of the
constituent decoders. If the two codes have rate $1/2$ then the overall
code will have rate~$1/3$. 
\end{exa}
\end{minipage}%
\hfill{} %
\begin{minipage}[c][1\totalheight]{0.39\columnwidth}%
\begin{flushright}
\includegraphics{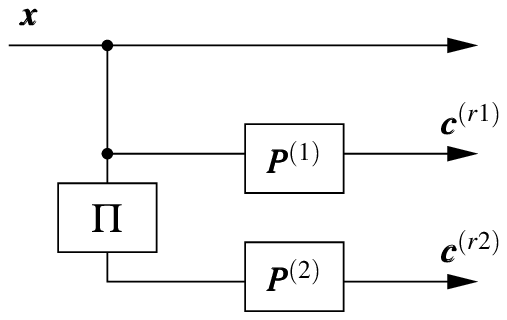}
\par\end{flushright}%
\end{minipage}%

Another possibility is to \emph{concatenate} two constituent codes
as defined below.

\begin{Def}
(Concatenated Codes) By \[
\mathbf{c}^{(1)}=\mathbf{x}\mathbf{G}^{(1)}\textrm{ and }\mathbf{c}^{(a)}=\mathbf{c}^{(1)}\mathbf{G}^{(2)}=\mathbf{x}\mathbf{G}^{(1)}\mathbf{G}^{(2)}\]
(provided matching dimensions, i.e. a $(k\times n^{(1)})$ generator
matrix $\mathbf{G}^{(1)}$ and a $(n^{(1)}\times n)$ generator matrix
$\mathbf{G}^{(2)}$) a concatenated code is given.
\end{Def}
\begin{anm}
(\emph{Generalised Concatenation}) A concatenation can be used to
construct codes with defined properties as usually a large minimum
\noun{Hamming} distance. Note that \emph{generalised} concatenated~\cite{BloZy74,Dumer}
codes exhibit the same basic concatenation map. There distance properties
are investigated under an additional \emph{partitioning} of code $\mathbf{G}^{(2)}$. 
\end{anm}
Another possibility to couple codes is given in the following definition.
This method will show to be very general, albeit rather non intuitive
as the description is based on parity check matrices $\mathbf{H}$.

\begin{Def}
\label{Def:The-dual-coupling} (Dual Coupling) The overall code \emph{\[
\mathbb{C}^{(a)}:=\mathbb{E}(\mathbf{C}^{(a)})=\left\{ \mathbf{c}:\left[\begin{array}{c}
\mathbf{H}^{(1)}\\
\mathbf{H}^{(2)}\end{array}\right]\mathbf{c}^{T}=\mathbf{0}\right\} \]
}is obtained by a dual coupling of the constituent codes $\mathbb{C}^{(l)}:=\mathbb{E}(\mathbf{C}^{(l)})=\{\mathbf{c}:\mathbf{H}^{(l)}\mathbf{c}^{T}=\mathbf{0}\}$
for $l=1,2.$ 
\end{Def}
By a dual coupling the obtained code space is obtained by the \emph{intersection}
$\mathbb{C}^{(a)}=\mathbb{C}^{(1)}\cap\mathbb{C}^{(2)}$ of the constituent
code spaces. 

\begin{exa}
A dually coupled code construction similar to turbo codes is to use
two mutually permuted rate $2/3$ convolutional codes. The intersection
of these two codes gives a code with rate at least $1/3$. To obtain
a larger rate one may employ \emph{puncturing} (not transmitting certain
symbols). However, the encoding of the overall code is not as simple
as for direct coupling codes. A straightforward way is to just use
the generator matrix representation of the overall code. 
\end{exa}
\begin{anm}
(\emph{LDPC} \emph{Codes}) LDPC codes are originally defined by a
single parity check matrix with low weight rows (and columns). An
equivalent representation is via a graph of check nodes (one for each
column) and variables nodes (one for each row). This leads to a third
equivalent representation with two dually coupled constituent codes
and a subsequent puncturing~\cite{Ingmar}. The first constituent
code is thereby given by a juxtaposition of repetition codes that
represent the variable nodes (all node inputs need to be equal). The
second one is defined by single parity check codes representing the
check nodes. The puncturing at the end has to be done such that only
one symbol per repetition code (code column) remains.
\end{anm}
\begin{thm}
\label{thm:Both-direct-coupling-are-dual}Both direct coupling and
concatenated codes are special cases of dual coupling codes.
\end{thm}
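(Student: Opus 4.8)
The plan is to exploit the equivalence, noted above, between a generator description and a parity description of a linear code: a code of length $n$ with generator $\mathbf{G}$ of rank $k$ is equally the kernel of any parity matrix $\mathbf{H}$ satisfying $\mathbf{H}\mathbf{G}^{T}=\mathbf{0}$ and $\mathrm{rank}(\mathbf{H})=n-k$. For each of the two constructions I would therefore produce two parity matrices $\mathbf{H}^{(1)},\mathbf{H}^{(2)}$ whose vertical stack is such a parity matrix of the overall code. By Definition~\ref{Def:The-dual-coupling} this realises the overall code as the intersection $\mathbb{C}^{(1)}\cap\mathbb{C}^{(2)}$ of the two constituent codes $\mathbb{C}^{(l)}=\{\mathbf{c}:\mathbf{H}^{(l)}\mathbf{c}^{T}=\mathbf{0}\}$, i.e.\ as a dual coupling. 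Thus everything reduces to writing down the two parity matrices and checking the orthogonality relation together with a rank count.

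For direct coupling the overall generator is $\mathbf{G}^{(a)}=[\mathbf{I}\,\mathbf{P}^{(1)}\,\mathbf{P}^{(2)}]$, so every codeword has the block shape $(\mathbf{x},\mathbf{x}\mathbf{P}^{(1)},\mathbf{x}\mathbf{P}^{(2)})$. Writing $n_{l}$ for the number of columns of $\mathbf{P}^{(l)}$ and $n=k+n_{1}+n_{2}$, I would take
\[
\mathbf{H}^{(1)}=[\,\mathbf{P}^{(1)T}\;\mathbf{I}\;\mathbf{0}\,],\qquad\mathbf{H}^{(2)}=[\,\mathbf{P}^{(2)T}\;\mathbf{0}\;\mathbf{I}\,],
\]
with identity blocks of sizes $n_{1}$ and $n_{2}$. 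Each $\mathbb{C}^{(l)}$ is then exactly the systematic constituent code $\mathcal{C}^{(l)}$ acting on $\mathbf{x}$ and its own parity part while leaving the foreign parity part free, and a one-line computation over $\mathbb{Z}_{2}$ gives $\mathbf{H}^{(l)}\mathbf{G}^{(a)T}=\mathbf{P}^{(l)T}+\mathbf{P}^{(l)T}=\mathbf{0}$. Since the two identity blocks sit in disjoint columns the stack has rank $n_{1}+n_{2}=n-k$, so it is a parity matrix of $\mathbf{G}^{(a)}$ and the intersection is precisely the direct coupling code.

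For concatenation the overall word is $\mathbf{c}^{(a)}=\mathbf{c}^{(1)}\mathbf{G}^{(2)}$ with $\mathbf{c}^{(1)}=\mathbf{x}\mathbf{G}^{(1)}$. Here two kinds of constraint act on $\mathbf{c}^{(a)}$: it must lie in the inner code generated by $\mathbf{G}^{(2)}$, which I encode by a parity matrix $\mathbf{H}^{(1)}:=\mathbf{H}_{\mathrm{in}}$ with $\mathbf{H}_{\mathrm{in}}\mathbf{G}^{(2)T}=\mathbf{0}$; and its inner preimage $\mathbf{c}^{(1)}$ must be an outer codeword, i.e.\ $\mathbf{H}_{\mathrm{out}}\mathbf{c}^{(1)T}=\mathbf{0}$ for a parity matrix $\mathbf{H}_{\mathrm{out}}$ of the code generated by $\mathbf{G}^{(1)}$. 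The second constraint lives on $\mathbf{c}^{(1)}$, not on $\mathbf{c}^{(a)}$, so it must be transported: assuming $\mathbf{G}^{(2)}$ has full row rank (the nondegenerate case of an injective inner encoder) I would pick a right inverse $\mathbf{G}^{(2)+}$ with $\mathbf{G}^{(2)}\mathbf{G}^{(2)+}=\mathbf{I}$, recover $\mathbf{c}^{(1)}=\mathbf{c}^{(a)}\mathbf{G}^{(2)+}$, and set $\mathbf{H}^{(2)}:=\mathbf{H}_{\mathrm{out}}\mathbf{G}^{(2)+T}$. Both inclusions then follow quickly: the forward one from $\mathbf{H}_{\mathrm{in}}\mathbf{G}^{(2)T}=\mathbf{0}$ and $\mathbf{c}^{(a)}\mathbf{G}^{(2)+}=\mathbf{c}^{(1)}$, the reverse one because $\mathbf{H}_{\mathrm{in}}\mathbf{c}^{(a)T}=\mathbf{0}$ forces $\mathbf{c}^{(a)}$ into the row space of $\mathbf{G}^{(2)}$ and hence fixes the unique preimage on which the lifted check reduces to $\mathbf{H}_{\mathrm{out}}\mathbf{c}^{(1)T}$. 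A short independence argument (multiply a vanishing row combination on the right by $\mathbf{G}^{(2)T}$) then shows the stacked matrix has the correct rank $n-k$.

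The main obstacle I expect is exactly this transport step in the concatenated case: the outer parity checks naturally constrain the hidden intermediate word $\mathbf{c}^{(1)}$, and turning them into honest parity checks on the transmitted word $\mathbf{c}^{(a)}$ requires inverting $\mathbf{G}^{(2)}$ on the inner code. When $\mathbf{G}^{(2)}$ is not of full row rank the preimage is no longer unique; I would handle this by first replacing $\mathbf{G}^{(2)}$ with a full-row-rank generator $\bar{\mathbf{G}}^{(2)}$ of the same inner code and writing $\mathbf{G}^{(2)}=\mathbf{M}\bar{\mathbf{G}}^{(2)}$, so that $\mathbf{G}^{(1)}\mathbf{G}^{(2)}=(\mathbf{G}^{(1)}\mathbf{M})\bar{\mathbf{G}}^{(2)}$ leaves $\mathbb{E}(\mathbf{C}^{(a)})$ unchanged, after which the argument above applies verbatim. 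The direct coupling half, by contrast, is essentially bookkeeping.
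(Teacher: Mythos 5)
Your proposal is correct, and while your direct-coupling half coincides with the paper's argument, your concatenation half takes a genuinely different route. For direct coupling the paper stacks
$\left[\begin{smallmatrix}\mathbf{H}^{(s1)} & \mathbf{H}^{(r1)} & \mathbf{0}\\ \mathbf{H}^{(s2)} & \mathbf{0} & \mathbf{H}^{(r2)}\end{smallmatrix}\right]$
with generic constituent parity matrices $\mathbf{H}^{(l)}=[\mathbf{H}^{(sl)}\,\mathbf{H}^{(rl)}]$; your $\mathbf{H}^{(1)}=[\mathbf{P}^{(1)T}\,\mathbf{I}\,\mathbf{0}]$, $\mathbf{H}^{(2)}=[\mathbf{P}^{(2)T}\,\mathbf{0}\,\mathbf{I}]$ is exactly the systematic instance of this, made explicit with the orthogonality and rank checks the paper leaves implicit. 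For concatenation, the paper exploits systematicity of the inner code $\mathbf{G}^{(2)}=[\mathbf{I}\,\mathbf{P}^{(2)}]$: the intermediate word $\mathbf{c}^{(1)}$ then appears verbatim in the first $n^{(1)}$ coordinates of $\mathbf{c}^{(a)}$, so the outer checks act directly on the transmitted word via the block $[\mathbf{H}^{(1)}\,\mathbf{0}]$, and for non-systematic inner codes the paper resorts to a virtual systematic extension punctured prior to transmission (citing~\cite{Ingmar}), so that strictly speaking the overall code is then represented as a \emph{punctured} dual coupling on a longer word. You avoid both the systematicity assumption and the puncturing by transporting the outer checks through a right inverse, $\mathbf{H}^{(2)}=\mathbf{H}_{\mathrm{out}}\mathbf{G}^{(2)+T}$, with the factorisation $\mathbf{G}^{(2)}=\mathbf{M}\bar{\mathbf{G}}^{(2)}$ handling the degenerate case; your two inclusions and the rank count (multiplying a vanishing row combination by $\mathbf{G}^{(2)T}$) are all sound, and the construction stays at length $n$, which makes it more self-contained and, as a pure statement about set representations, more general. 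What the paper's version buys instead is structural: its two dual constituents are the outer code with free positions appended and the inner code itself, so their trellis complexities are those of the original constituent codes, which is precisely what the decoding algorithms developed in the rest of the paper require; your second constituent, the kernel of $\mathbf{H}_{\mathrm{out}}\mathbf{G}^{(2)+T}$, i.e.\ the preimage of the outer code under $\mathbf{s}\mapsto\mathbf{s}\mathbf{G}^{(2)+}$, is correct as a set but will in general have a dense parity matrix and no low-complexity trellis, so it proves the theorem as stated without supporting its intended algorithmic use.
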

\begin{proof}
The direct coupling code is equivalently described in the parity check
form $\mathbf{H}^{(a)}\mathbf{G}^{(a)T}=\mathbf{0}$  by the parity
check matrix \[
\mathbf{H}^{(a)}=\left[\begin{array}{ccc}
\mathbf{H}^{(s1)} & \mathbf{H}^{(r1)} & \mathbf{0}\\
\mathbf{H}^{(s2)} & \mathbf{0} & \mathbf{H}^{(r2)}\end{array}\right]\textrm{ where }\mathbf{H}^{(l)}=[\mathbf{H}^{(sl)}\,\mathbf{H}^{(rl)}]\mbox{ for }l=1,2\]
is the parity check matrix of $\mathbf{G}^{(l)}$ consisting of systematic
part $\mathbf{H}^{(sl)}$ and redundant part $\mathbf{H}^{(rl)}$.
This is obviously a dual coupling. For a concatenated code with systematic
code $\mathbf{G}^{(2)}=\mathbf{[I}\,\mathbf{P}^{(2)}]$ the equivalent
description by a parity check matrix is \[
\mathbf{H}^{(a)}=\left[\begin{array}{cc}
\mathbf{H}^{(1)} & \mathbf{0}\\
\mathbf{H}^{(s2)} & \mathbf{H}^{(r2)}\end{array}\right]\textrm{ with }\mathbf{H}^{(1)}\textrm{ and }\mathbf{H}^{(2)}=[\mathbf{H}^{(s2)}\,\mathbf{H}^{(r2)}]\]
 the parity check matrix of $\mathbf{G}^{(1)}$ respectively $\mathbf{G}^{(2)}$.
For non-systematic concatenated codes a virtual systematic extension
(punctured prior to the transmission) is needed~\cite{Ingmar}. Hence,
a representation by a dual coupling is again possible. 
\end{proof}
It is thus sufficient to consider only dual code couplings. The {}``dual''
is therefore mostly omitted in the sequel. 

\begin{anm}
\label{anm:(Multiple-Dual-Codes)}(\emph{Multiple} \emph{Dual} \emph{Codes})
More than two codes can be dually coupled as described above: By\[
\mathbb{C}^{(a)}=\mathbb{C}^{(1)}\cap\mathbb{C}^{(2)}\cap\mathbb{C}^{(3)}\]
a coupling of three codes is given. The overall parity check matrix
is there given by the juxtaposition of the three constituent parity
check matrix. Multiple dual couplings are produced by multiple intersections.
In the sequel mostly dual couplings with two constituent codes are
considered.
\end{anm}

\subsection{Optimal Decoding}

As stated above the main difficulty is not the encoding but the decoding
of a corrupted word. This corruption is usually the result of a transmission
of the code word over a \emph{channel}. 

\begin{anm}
(\emph{Channels}) \label{anm:BSC_AWGN}In the sequel we assume that
the code symbols $C_{i}$ are in $\mathbb{B}=\{-1,+1\}$. This is
achieved by the use of the {}``BPSK''-map \[
\mathcal{B}:x\mapsto y=\begin{cases}
+1 & \textrm{for }x=0\\
-1 & \textrm{for }x=1\end{cases}\]
prior to the transmission. As channel we assume either a Binary Symmetric
Channel (BSC) with channel error probability $p$ and\begin{align*}
P(\mathbf{r}|\mathbf{s}) & =\prod_{i=1}^{n}(1-p)^{\left\langle s_{i}=r_{i}\right\rangle }p^{\left\langle s_{i}\neq r_{i}\right\rangle }\propto\prod_{i=1}^{n}(\frac{1-p}{p})^{s_{i}r_{i}}=\prod_{i=1}^{n}\exp_{2}(s_{i}r_{i}\log_{2}(\frac{1-p}{p}))\\
 & =\mbox{exp}_{2}(\mathrm{K}\sum_{i=1}^{n}s_{i}r_{i})\mbox{ with }\mathrm{K}=\log_{2}(\frac{1-p}{p})\mbox{ and }s_{i},r_{i}\in\mathbb{B}=\{-1,+1\}\end{align*}
with \begin{align*}
\left\langle b\right\rangle  & =\begin{cases}
0 & \mbox{if }b\mbox{ false}\\
1 & \mbox{if }b\mbox{ true}\end{cases}\end{align*}
or a channel with Additive White \noun{Gauss} Noise (AWGN) given by
\[
P(\mathbf{r}|\mathbf{s})\propto\prod_{i=1}^{n}2^{-(r_{i}-s_{i})^{2}}\propto\exp_{2}(\sum_{i=1}^{n}r_{i}s_{i})\mbox{ and }s_{i}\in\mathbb{B}\]
(this actually is the \noun{Gauss} probability \emph{density}) and
the by $2\sigma_{E}^{2}=\log_{2}(e)$ normalised noise variance. The
received elements $r_{i}$ are in the AWGN case real valued, i.e.,
$r_{i}\in\mathbb{R}$. \\
Note that the normalised noise variance is obtained by $r_{i}^{(l)}\leftarrow\mathrm{K}r_{i}^{(l)}$
and an appropriate constant $\mathrm{K}$. Moreover, then both cases
coincide.
\end{anm}
\noindent Overall this gives that decoding is based on ~$\begin{cases}
1)\mbox{ the knowledge of the code space }\mathbb{E}(\mathbf{C}),\\
2)\mbox{ the knowledge of the channel map given by }P(\mathbf{r}|\mathbf{c}),\mbox{ and}\\
3)\mbox{ the received information represented by }\mathbf{r}.\end{cases}$

A decoding can be performed by a \emph{decision} for some word $\hat{\mathbf{c}}$,
which is in the Maximum Likelihood (ML) word decoding case \[
\hat{\mathbf{c}}=\arg\max_{\mathbf{c}\in\mathbb{E}(\mathbf{C})}P(\mathbf{r}|\mathbf{c})\]
or decisions on the code symbols by ML symbol by symbol decoding \[
\bar{c}_{i}=\arg\max_{x\in\mathbb{B}}P_{C_{i}}^{(c)}(x|\mathbf{r})=\arg\max_{x\in\mathbb{B}}\sum_{\mathbf{c}\in\mathbb{E}(\mathbf{C}),\, c_{i}=x}P(\mathbf{r}|\mathbf{c}).\]
Here $P_{C_{i}}^{(c)}(x|\mathbf{r})$ is the probability that  $c_{i}=x$
under the knowledge of the code space $\mathbb{E}(\mathbf{C})$. If
no further prior knowledge about the code map or other additional
information is available then these decisions are obviously optimal,
i.e., the decisions exhibit smallest word respectively Bit error probability. 

\begin{anm}
\label{anm:(Dominating-ML-Word)}(\emph{Dominating} \emph{ML} \emph{Word})
If by $P^{(a)}(\hat{\mathbf{c}}^{(a)}|\mathbf{r})\to1$ a dominating
ML word decision exists then necessarily holds that $\hat{\mathbf{c}}^{(a)}=\bar{\mathbf{c}}^{(a)}$.
The decoding problem is then equivalent to solving either of the ML
decisions. \\
ML word decoding is for the BSC equivalent to find the code word with
the smallest number of errors $c_{i}\neq r_{i}$, respectively the
smallest \noun{Hamming} distance $d_{H}(\mathbf{c},\mathbf{r})$.
For the AWGN channel the word $\mathbf{c}$ that minimises \noun{Euclid}'s
quadratic distance $d_{E}^{2}(\mathbf{c},\mathbf{r})=\Vert\mathbf{r}-\mathbf{c}\Vert^{2}$
needs to be found.$ $
\end{anm}
For the independent channels of Remark~\ref{anm:BSC_AWGN} the ML
decisions can be computed (see Appendix~\ref{sub:Trellis-Based-Algorithms})
in the code trellis by the \noun{Viterbi }or the BCJR algorithm. However,
due to the generally large trellis complexity of the overall code
these algorithms do there (practically) not apply. 

On the other hand one may compute the {}``uncoded'' word probabilities
\begin{equation}
P(\mathbf{s}|\mathbf{r})\propto P(\mathbf{r}|\mathbf{s})\left\langle \mathbf{s}\in\mathbb{S}\right\rangle ,\label{eq:prsnocode}\end{equation}
and for small constituent trellis complexities the constituent code
word probabilities \[
P^{(l)}(\mathbf{s}|\mathbf{r}):=P_{\mathbf{C}^{(l)}|\mathbf{R}}(\mathbf{s}|\mathbf{r})=\frac{P(\mathbf{r}|\mathbf{s})\cdot\left\langle \mathbf{s}\in\mathbb{C}^{(l)}\right\rangle }{\sum_{\mathbf{s}'\in\mathbb{S}}P(\mathbf{r}|\mathbf{s}')\cdot\left\langle \mathbf{s}'\in\mathbb{C}^{(l)}\right\rangle }\propto P(\mathbf{r}|\mathbf{s})\cdot\left\langle \mathbf{s}\in\mathbb{C}^{(l)}\right\rangle \]
 for $l=1,2$ with $\mathbb{S}:=\mathbb{E}(\mathbf{S})$ the set of
all words. This is interesting as the overall code word distribution
\[
P^{(a)}(\mathbf{s}|\mathbf{r}):=P_{\mathbf{C}^{(a)}|\mathbf{R}}(\mathbf{s}|\mathbf{r})\propto P(\mathbf{r}|\mathbf{s})\cdot\left\langle \mathbf{s}\in\mathbb{C}^{(a)}\right\rangle \]
can be computed out of $P^{(l)}(\mathbf{s}|\mathbf{r})$ and $P(\mathbf{s}|\mathbf{r})$:
It holds with Definition~\ref{Def:The-dual-coupling} that $\mathbb{C}^{(a)}=\mathbb{C}^{(1)}\cap\mathbb{C}^{(2)}$
and thus

\[
P^{(1)}(\mathbf{s}|\mathbf{r})\cdot P^{(2)}(\mathbf{s}|\mathbf{r})\propto\left(P(\mathbf{r}|\mathbf{s})\right)^{2}\cdot\left\langle \mathbf{s}\in\mathbb{C}^{(1)}\right\rangle \cdot\left\langle \mathbf{s}\in\mathbb{C}^{(2)}\right\rangle =\left(P(\mathbf{r}|\mathbf{s})\right)^{2}\cdot\left\langle \mathbf{s}\in\mathbb{C}^{(a)}\right\rangle ,\]
which gives with (\ref{eq:prsnocode}) that \begin{equation}
P^{(a)}(\mathbf{s}|\mathbf{r})\propto\frac{P^{(1)}(\mathbf{s}|\mathbf{r})P^{(2)}(\mathbf{s}|\mathbf{r})}{P(\mathbf{s}|\mathbf{r})}.\label{eq:psrnoside}\end{equation}

If the constituent word probabilities are all known then optimal decoding
decisions can be taken. I.e., one can compute the ML word decision
by\begin{equation}
\hat{\mathbf{c}}^{(a)}=\mbox{arg}\max_{\mathbf{s}\in\mathbb{S}}\frac{P^{(1)}(\mathbf{s}|\mathbf{r})P^{(2)}(\mathbf{s}|\mathbf{r})}{P(\mathbf{s}|\mathbf{r})}\label{eq:DecMLw2}\end{equation}
or the ML symbol decisions by\begin{equation}
\bar{c}_{i}^{(a)}=\mbox{arg}\max_{x\in\mathbb{B}}P_{C_{i}}^{(a)}(x|\mathbf{r})=\mbox{arg}\max_{x\in\mathbb{B}}\sum_{\mathbf{s}\in\mathbb{S},s_{i}=x}P^{(a)}(\mathbf{s}|\mathbf{r})=\mbox{arg}\max_{x\in\mathbb{B}}\sum_{\mathbf{s}\in\mathbb{S}_{i}(x)}\frac{P^{(1)}(\mathbf{s}|\mathbf{r})P^{(2)}(\mathbf{s}|\mathbf{r})}{P(\mathbf{s}|\mathbf{r})}\label{eq:DecMLS2}\end{equation}
with $\mathbb{S}_{i}(x):=\{\mathbf{s}\in\mathbb{S}:s_{i}=x\}.$

Decoding decisions may therefore be taken by the constituent probabilities.
However, one may by (\ref{eq:psrnoside}) only compute a value proportional
to each single word probability. The representation complexity of
the constituent word probability \emph{distribution} remains prohibitively
large. I.e., the decoding decisions by (\ref{eq:DecMLw2}) and (\ref{eq:DecMLS2})
do not reduce the overall complexity as all word probabilities have
to be jointly considered, which is equivalent to investigating the
complete code constraint.

\subsection{Belief Propagation}

The probabilities of the two constituent codes thus contain the complete
knowledge about the decoding problem. However, the constituent decoders
may not use this knowledge (with reasonable complexity) as then $2^{n}$
values need to be transferred. I.e., a realistic algorithm based on
the constituent probabilities should transfer only a small number
of parameters. 

In (loopy) belief propagation algorithm this is done by transmitting
only the constituently {}``believed'' symbol probabilities but to
repeat this several times. This algorithm is here shortly reviewed:
One first uses a transfer vector $\mathbf{w}^{(1)}$ to represent
the believed $P_{C_{i}}^{(1)}(x|\mathbf{r})$ of code $1$. This belief
representing transfer vector is then used together with $\mathbf{r}$
in the decoder of the other constituent code. I.e., a transfer vector
$\mathbf{w}^{(2)}$ is computed out of $P_{C_{i}}^{(2)}(x|\mathbf{r},\mathbf{w}^{(1)})$
that will then be reused for a new $\mathbf{w}^{(1)}$ by $P_{C_{i}}^{(1)}(x|\mathbf{r},\mathbf{w}^{(2)})$
and so forth. The algorithm is stopped if the beliefs do not change
any further and a decoding decision is emitted.

The beliefs $P_{C_{i}}^{(h)}(x|\mathbf{r},\mathbf{w}^{(l)})$ for
$l,h\in\{1,2\}$ and $l\neq h$ are obtained by\[
P(\mathbf{r},\mathbf{w}^{(l)}|\mathbf{s})=P(\mathbf{w}^{(l)}|\mathbf{s})P(\mathbf{r}|\mathbf{s}),\]
which is a in $\mathbf{w}$ and $\mathbf{r}$ independent representation.
Moreover, it is assumed that $s_{i}\in\mathbb{B}=\{-1,+1\}$ and that
\begin{equation}
P(\mathbf{w}^{(l)}|\mathbf{s})=\prod_{i=1}^{n}P(w_{i}^{(l)}|s_{i})\propto\exp_{2}(\sum_{i=1}^{n}w_{i}^{(l)}s_{i})=\exp_{2}(\mathbf{w}^{(l)}\mathbf{s}^{T})\label{eq:defwahrscheinlich}\end{equation}
are of the form of $P(\mathbf{r}|\mathbf{s})$ in Remark~\ref{anm:BSC_AWGN}. 

\begin{anm}
(\emph{Distributions} \emph{and Trellis}) Obviously many other choices
for $P(\mathbf{w}^{(l)}|\mathbf{s})$ exist. However, the again independent
description of the symbols $C_{i}=S_{i}$ in (\ref{eq:defwahrscheinlich})
leads to (see Appendix~\ref{sub:Trellis-Based-Algorithms}) the possibility
to use trellis based computations, i.e., the symbol probabilities
$P_{C_{i}}^{(l)}(x|\mathbf{r},\mathbf{w}^{(h)})$ can be computed
as before $P_{C_{i}}^{(l)}(x|\mathbf{r})$. 
\end{anm}
The transfer vector $\mathbf{w}^{(h)}$ for belief propagation for
given \textbf{$\mathbf{r}$ }and $\mathbf{w}^{(l)}$ with $l,h\in\{1,2\}$
and $h\neq l$ is defined by \begin{equation}
P_{C_{i}}(x|\mathbf{r},\mathbf{w}^{(1)},\mathbf{w}^{(2)})=P_{C_{i}}^{(h)}(x|\mathbf{r},\mathbf{w}^{(l)})\mbox{ for all }i.\label{eq:Definition-belief-propagation}\end{equation}
I.e., the beliefs under $\mathbf{r}$, $\mathbf{w}^{(1)}$, $\mathbf{w}^{(2)}$,
and no further set restriction are set such that they are equal to
the beliefs under $\mathbf{w}^{(l)}$, $\mathbf{r}$, and the knowledge
of the set restriction of the $h$-th constituent code. This is always
possible as shown below.

\begin{anm}
\label{anm:(Notation)-m_w}(\emph{Notation}) To simplify the notation
we set in the sequel\[
\mathbf{m}=(\mathbf{r},\mathbf{w}^{(1)},\mathbf{w}^{(2)}),\,\mathbf{m}^{(1)}=(\mathbf{r},\mathbf{w}^{(2)}),\,\mathbf{m}^{(2)}=(\mathbf{r},\mathbf{w}^{(1)}),\]
and often $\mathbf{w}^{(0)}:=\mathbf{r}$.
\end{anm}
For the \emph{uncoded} beliefs $P_{C_{i}}(x|\mathbf{m})$ it is again
assumed that the information and belief carrying $\mathbf{r}$, $\mathbf{w}^{(1)}$
and $\mathbf{w}^{(2)}$ are independent, i.e., \begin{equation}
P(\mathbf{m}|\mathbf{c})=P(\mathbf{r},\mathbf{w}^{(1)},\mathbf{w}^{(2)}|\mathbf{c})=P(\mathbf{r}|\mathbf{c})P(\mathbf{w}^{(1)}|\mathbf{c})P(\mathbf{w}^{(2)}|\mathbf{c}).\label{eq:Overall-proba-uncod}\end{equation}

The computation of $\mathbf{w}^{(h)}$ for given $\mathbf{w}^{(l)}$
is then simple as the independence assumptions (\ref{eq:defwahrscheinlich})
and (\ref{eq:Overall-proba-uncod}) give that \[
P_{C_{i}}(x|\mathbf{m})=P_{C_{i}|\mathbf{R}}(x|r_{i}+w_{i}^{(1)}+w_{i}^{(2)}).\]
Moreover, the definition of the $\mathbf{w}^{(l)}$ is simplified
by the use of logarithmic probability ratios \[
L_{i}(\mathbf{m})=\frac{1}{2}\log_{2}\frac{P_{C_{i}}(+1|\mathbf{m})}{P_{C_{i}}(-1|\mathbf{m})}\mbox{ and }L_{i}^{(l)}(\mathbf{m}^{(l)})=\frac{1}{2}\log_{2}\frac{P_{C_{i}}^{(l)}(+1|\mathbf{m}^{(l)})}{P_{C_{i}}^{(l)}(-1|\mathbf{m}^{(l)})}\]
for $l=1,2$. This representation is handy for the computations as
(\ref{eq:defwahrscheinlich}) directly gives \[
L_{i}(\mathbf{m})=r_{i}+w_{i}^{(1)}+w_{i}^{(2)}\]
and thus that Equation~(\ref{eq:Definition-belief-propagation})
is equivalent to \begin{equation}
w_{i}^{(l)}=L_{i}^{(l)}(\mathbf{m}^{(l)})-r_{i}-w_{i}^{(h)}\mbox{ for }l\neq h\mbox{ and all }i.\label{eq:Iteration-Belief-Prob}\end{equation}
This equation can be used as an \emph{iteration} rule such that the
uncoded beliefs are subsequently updated by the constituent beliefs.
The transfer vectors $\mathbf{w}^{(1)}$ and $\mathbf{w}^{(2)}$ are
thereby via (\ref{eq:Iteration-Belief-Prob}) iteratively updated.
The following definition further simplifies the notation.

\begin{Def}
(Extrinsic Symbol Probability) The extrinsic symbol probability of
code $l$ is\[
\breve{P}_{C_{i}}^{(l)}(x|\mathbf{m}^{(l)})\propto P_{C_{i}}^{(l)}(x|\mathbf{m}^{(l)})\exp_{2}(-x(w_{i}^{(h)}-r_{i}))\mbox{ for }h\neq l.\]

\end{Def}
\begin{algorithm}
\begin{enumerate}
\item Set $\mathbf{w}^{(1)}=\mathbf{w}^{(2)}=\mathbf{0}$, $l=1$, and $h=2$.
\item Swap $l$ and $h$.
\item Set $\mathbf{w}^{(l)}=\breve{\mathbf{L}}^{(l)}(\mathbf{m}^{(l)})$. 
\item If $\mathbf{w}^{(h)}\neq\breve{\mathbf{L}}^{(h)}(\mathbf{m}^{(h)})$
then go to Step 2.
\item Set $\hat{c}_{i}=\mbox{sign}(r_{i}+w_{i}^{(1)}+w_{i}^{(2)})\mbox{ for all }i$.
\end{enumerate}
\caption{\label{alg:Loopy-Belief-Propagation}Loopy Belief Propagation}

\end{algorithm}

The extrinsic symbol probabilities are by (\ref{eq:defwahrscheinlich})
independent of $w_{i}^{(l)}$ for $l=1,2$ and $r_{i}$, i.e., they
depend only on belief and information carrying $w_{j}^{(l)}$ and
$r_{j}$ from with $j\neq i$ \emph{other} or {}``extrinsic'' symbol
positions. Moreover, one directly obtains the extrinsic logarithmic
probability ratios \begin{equation}
\breve{L}_{i}^{(l)}(\mathbf{m}^{(l)}):=\frac{1}{2}\log_{2}\frac{P_{C_{i}}^{(l)}(+1|\mathbf{m}^{(l)})}{P_{C_{i}}^{(l)}(-1|\mathbf{m}^{(l)})}-r_{i}-w_{i}^{(h)}=L_{i}^{(l)}(\mathbf{m}^{(l)})-r_{i}-w_{i}^{(h)}\mbox{ for }l\neq h.\label{eq:Extinsic-notation-L}\end{equation}
With Equation~(\ref{eq:Iteration-Belief-Prob}) this gives the iteration
rule \[
w_{i}^{(l)}=\breve{L}_{i}^{(l)}(\mathbf{m}^{(l)})\]
and thus Algorithm~\ref{alg:Loopy-Belief-Propagation}. Note that
one generally uses an alternative, less stringent stopping criterion
in Step~4 of the algorithm. 

If the algorithm converges then one obtains that \[
r_{i}+w_{i}^{(1)}+w_{i}^{(2)}=L_{i}^{(2)}(\mathbf{r},\mathbf{w}^{(1)})=L_{i}^{(1)}(\mathbf{r},\mathbf{w}^{(2)})\]
and \begin{equation}
\hat{c}_{i}=\mbox{sign}(L_{i}(\mathbf{r})+\breve{L}_{i}^{(1)}(\mathbf{m}^{(1)})+\breve{L}_{i}^{(2)}(\mathbf{m}^{(2)}))\label{eq:code-dec-belief}\end{equation}
with $L_{i}(\mathbf{r})=r_{i}.$ This is a rather intuitive form of
the fixed point of iterative belief propagation. The decoding decision
$\hat{c}_{i}$ is defined by the sum of the (representations of the)
channel information $r_{i}$ and the extrinsic constituent code beliefs
$\breve{L}_{i}^{(l)}(\mathbf{m}^{(l)})$. 

\begin{anm}
(\emph{Performance}) If the algorithm converges then simulations show
that the decoding decision is usually good. By density evolution~\cite{richardson01capacity}
or extrinsic information transfer charts~\cite{brink99} the convergence
of iterative belief propagation is further investigated. These approaches
evaluate, which constituent codes are suitable for iterative belief
propagation. This approach and simulations show that only rather weak
codes should be employed for good convergence properties. This indicates
that the chosen transfer is often too optimistic about its believed
decisions. 
\end{anm}

\subsection{Discrimination}

The belief propagation algorithm uses only knowledge about the constituent
codes represented by $\mathbf{w}^{(l)}.$ In this section we aim at
increasing the transfer complexity by adding more variables and hope
to obtain thereby a better representation of the overall information
and thus  an improvement over the propagation of only symbol beliefs. 

Reconsider first the additional belief representation $\mathbf{w}^{(l)}$
given by the distributions $P(\mathbf{s}|\mathbf{w}^{(1)})$ and $P(\mathbf{s}|\mathbf{w}^{(2)})$
used for belief propagation. The overall distributions are\begin{equation}
\begin{array}{rcccl}
P(\mathbf{s}|\mathbf{m}) & = & P(\mathbf{s}|\mathbf{r},\mathbf{w}^{(1)},\mathbf{w}^{(2)}) & \propto & P(\mathbf{r}|\mathbf{s})P(\mathbf{w}^{(1)}|\mathbf{s})P(\mathbf{w}^{(2)}|\mathbf{s})\\
P^{(1)}(\mathbf{s}|\mathbf{m}^{(1)}) & = & P^{(1)}(\mathbf{s}|\mathbf{r},\mathbf{w}^{(2)}) & \propto & P(\mathbf{r}|\mathbf{s})P(\mathbf{w}^{(2)}|\mathbf{s})\\
P^{(2)}(\mathbf{s}|\mathbf{m}^{(2)}) & = & P^{(2)}(\mathbf{s}|\mathbf{r},\mathbf{w}^{(1)}) & \propto & P(\mathbf{r}|\mathbf{s})P(\mathbf{w}^{(1)}|\mathbf{s}).\end{array}\label{eq:Independent}\end{equation}

The following lemma first gives that these additional beliefs do not
change the computation of the overall word probabilities. 

\begin{lem}
\label{lem:DisProb}It holds for all $\mathbf{w}^{(1)},\mathbf{w}^{(2)}$
that \[
P^{(a)}(\mathbf{s}|\mathbf{r})\propto\frac{P^{(1)}(\mathbf{s}|\mathbf{m}^{(1)})P^{(2)}(\mathbf{s}|\mathbf{m}^{(2)})}{P(\mathbf{s}|\mathbf{m})}.\]

\end{lem}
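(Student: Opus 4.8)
The plan is to prove the identity by direct computation, exactly as in the earlier derivation of~(\ref{eq:psrnoside}), but now carrying the two extra belief factors along and checking that they cancel. The first thing I would do is rewrite the three distributions of~(\ref{eq:Independent}) with the constituent code constraints made \emph{explicit}, since these are hidden in the superscript notation $P^{(l)}$ and are exactly what makes the statement non-trivial; thus
\[
P^{(1)}(\mathbf{s}|\mathbf{m}^{(1)})\propto P(\mathbf{r}|\mathbf{s})\,P(\mathbf{w}^{(2)}|\mathbf{s})\,\langle\mathbf{s}\in\mathbb{C}^{(1)}\rangle,\qquad P^{(2)}(\mathbf{s}|\mathbf{m}^{(2)})\propto P(\mathbf{r}|\mathbf{s})\,P(\mathbf{w}^{(1)}|\mathbf{s})\,\langle\mathbf{s}\in\mathbb{C}^{(2)}\rangle,
\]
while $P(\mathbf{s}|\mathbf{m})\propto P(\mathbf{r}|\mathbf{s})\,P(\mathbf{w}^{(1)}|\mathbf{s})\,P(\mathbf{w}^{(2)}|\mathbf{s})$ stays uncoded. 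Each proportionality hides only an $\mathbf{s}$-independent normaliser, so forming the quotient merely introduces an overall constant, which is harmless for a statement that is itself only up to proportionality in $\mathbf{s}$.

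Next I would substitute into $P^{(1)}(\mathbf{s}|\mathbf{m}^{(1)})P^{(2)}(\mathbf{s}|\mathbf{m}^{(2)})/P(\mathbf{s}|\mathbf{m})$ and track cancellations. The factor $P(\mathbf{w}^{(1)}|\mathbf{s})$ occurs once in the numerator (through $P^{(2)}$) and once in the denominator, and likewise $P(\mathbf{w}^{(2)}|\mathbf{s})$ occurs once in $P^{(1)}$ and once in the denominator, so both belief factors drop out; the numerator's $(P(\mathbf{r}|\mathbf{s}))^{2}$ cancels against the single channel factor below, leaving one $P(\mathbf{r}|\mathbf{s})$. What remains from the indicators is the product $\langle\mathbf{s}\in\mathbb{C}^{(1)}\rangle\langle\mathbf{s}\in\mathbb{C}^{(2)}\rangle$, which by Definition~\ref{Def:The-dual-coupling} equals $\langle\mathbf{s}\in\mathbb{C}^{(1)}\cap\mathbb{C}^{(2)}\rangle=\langle\mathbf{s}\in\mathbb{C}^{(a)}\rangle$. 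Hence the quotient is proportional to $P(\mathbf{r}|\mathbf{s})\langle\mathbf{s}\in\mathbb{C}^{(a)}\rangle$, which is by definition proportional to $P^{(a)}(\mathbf{s}|\mathbf{r})$, and the claim follows.

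The computation is routine, and the only genuine content is bookkeeping: one must verify that exactly one copy of each belief factor and one copy of $P(\mathbf{r}|\mathbf{s})$ survive to be cancelled, which is precisely why the constraints must be retained in the $P^{(l)}$ terms. The conceptual point worth stressing in the write-up is that all dependence on $\mathbf{w}^{(1)},\mathbf{w}^{(2)}$ vanishes, so the identity holds \emph{for all} belief vectors; this is what guarantees that the additional beliefs never corrupt the overall word probability. I expect no real obstacle beyond this careful tracking of factors.
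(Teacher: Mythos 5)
Your proposal is correct and follows essentially the same route as the paper, which proves the lemma by the same direct computation used for~(\ref{eq:psrnoside}), substituting the factorisations of~(\ref{eq:Independent}) and observing that the $\mathbf{w}^{(l)}$-dependent factors cancel by the independence assumption~(\ref{eq:defwahrscheinlich}). Your write-up merely makes explicit the code-constraint indicators $\left\langle \mathbf{s}\in\mathbb{C}^{(l)}\right\rangle$ and the bookkeeping of cancellations that the paper's terse proof leaves implicit.
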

\begin{proof}
A direct computation of the equation with (\ref{eq:Independent})
gives as for (\ref{eq:psrnoside}) equality. The terms that depend
on $\mathbf{w}^{(l)}$ vanish by the independence assumption (\ref{eq:defwahrscheinlich}).
\end{proof}
To increase the transfer complexity now additional parameters are
added to $\mathbf{s}$. This first seems counter intuitive as no new
knowledge is added. However, with Lemma~\ref{lem:DisProb} the same
holds true for the belief carrying $\mathbf{w}^{(l)}$ and optimal
decoding.

\begin{Def}
(Word uncertainty) \label{Def:The-Word-Information}The uncertainty
augmented word probability $P^{(h)}(\mathbf{s},\mathbf{u}|\mathbf{m}^{(h)})$
is\[
P^{(h)}(\mathbf{s},\mathbf{u}|\mathbf{m}^{(h)}):=P^{(h)}(\mathbf{s}|\mathbf{m}^{(h)})\prod_{l=0}^{2}\left\langle u_{l}=\mathbf{w}^{(l)}\mathbf{s}^{T}\right\rangle \]
with $\mathbf{u}=\mathbf{u}(\mathbf{s})=(u_{0},u_{1},u_{2}).$
\end{Def}
This definition naturally extends to $P(\mathbf{s},\mathbf{u}|\mathbf{m})$
and to $P^{(a)}(\mathbf{s},\mathbf{u}|\mathbf{r})$. 

\begin{anm}
\label{anm:(Notation)-abhvonm}(\emph{Notation}) The notation of $P^{(a)}(\mathbf{s},\mathbf{u}|\mathbf{r})$
does not reflect the dependency on $\mathbf{m}$. The same holds true
for $P^{(l)}(\mathbf{s},\mathbf{u}|\mathbf{m}^{(l)})$ etc. A complete
notation is for example $P(\mathbf{s},\mathbf{u}|\mathbf{m}\Vert\mathbf{r})$
or $P^{(l)}(\mathbf{s},\mathbf{u}|\mathbf{m}\Vert\mathbf{m}^{(l)})$.
To maintain readability this dependency will \emph{not} be explicitly
stated in the sequel.
\end{anm}
Under the assumption that code words with the same $\mathbf{u}$ do
not need to be distinguished one obtains the following definition. 

\begin{Def}
\label{Def:(Discriminated-Word-Probabilities)}(Discriminated Distribution)
The distribution of $\mathbf{u}$ discriminated by $\mathbf{m}$ is \[
P^{\otimes}(\mathbf{u}|\mathbf{m})\propto\frac{P^{(1)}(\mathbf{u}|\mathbf{m}^{(1)})P^{(2)}(\mathbf{u}|\mathbf{m}^{(2)})}{P(\mathbf{u}|\mathbf{m})}\]
with $\sum_{\mathbf{u}\in\mathbb{U}}P^{\otimes}(\mathbf{u}|\mathbf{m})=1$,\[
P^{(l)}(\mathbf{u}|\mathbf{m}^{(l)})=\sum_{\mathbf{s}\in\mathbb{S}}P^{(l)}(\mathbf{s},\mathbf{u}|\mathbf{m}^{(l)}),\]
and $\mathbb{U}=\mathbb{E}(\mathbf{U})=\{\mathbf{u}:u_{l}=\mathbf{w}^{(l)}\mathbf{s}^{T}\,\forall l\mbox{ and }\mathbf{s}\in\mathbb{S}\}$.
\end{Def}
\begin{anm}
\label{anm:(Discrimination)}(\emph{Discrimination}) Words $\mathbf{s}$
with the same $\mathbf{u}$ are not  distinguished. As $\mathbf{m}$
and $\mathbf{s}$ define $\mathbf{u}$ the \emph{discrimination} of
words is steered by $\mathbf{m}$. The variables $u_{l}$ are then
used to relate to the \emph{distances} $\Vert\mathbf{c}-\mathbf{w}^{(l)}\Vert^{2}$
(see Remark~\ref{anm:BSC_AWGN}). Words that do no not share the
same distances are discriminated. The choice of $\mathbf{u}$ and
(\ref{eq:defwahrscheinlich}) is natural as all code words with the
same $\mathbf{u}$ have the same probability, i.e., that\begin{equation}
P^{(l)}(\mathbf{s},\mathbf{u}|\mathbf{m}^{(l)})\propto\exp_{2}(\sum_{{\scriptstyle k=0},k\neq l}^{2}u_{k})\cdot\prod_{j=0}^{2}\left\langle u_{j}=\mathbf{w}^{(j)}\mathbf{s}^{T}\right\rangle \cdot\left\langle \mathbf{s}\in\mathbb{C}^{(l)}\right\rangle \label{eq:Plsuml}\end{equation}
and similar for $P(\mathbf{s},\mathbf{u}|\mathbf{m})$ and $P^{(a)}(\mathbf{s},\mathbf{u}|\mathbf{r})$.
Generally it holds that $\mathbf{u}$ is via\[
\sum_{{\scriptstyle k=0}}^{2}u_{k}=\mathrm{K}+\log_{2}P(\mathbf{s}|\mathbf{m})=\mathrm{K}-H(\mathbf{s|}\mathbf{m})\]
(with $\mathrm{K}$ some constant)  related to the uncertainty $H(\mathbf{s|}\mathbf{m})$.
Note that any map of $\mathbf{s}$ on some $\mathbf{u}$ will define
some discrimination. However, we will here only consider the correlation
map, respectively the discrimination of the information theoretic
word uncertainties.
\end{anm}
 In the same way one obtains the much more interesting (uncertainty)
discriminated symbol probabilities. 

\begin{Def}
\label{Def:(Discriminated-Symbol-Probabilities)}(Discriminated Symbol
Probabilities) The symbol probabilities discriminated by $\mathbf{m}$
are \begin{equation}
P_{C_{i}}^{\otimes}(x|\mathbf{m})=\sum_{\mathbf{u}\in\mathbb{U}}P_{C_{i}}^{\otimes}(x,\mathbf{u}|\mathbf{m})\propto\sum_{\mathbf{u}\in\mathbb{U}}\frac{P_{C_{i}}^{(1)}(x,\mathbf{u}|\mathbf{m}^{(1)})P_{C_{i}}^{(2)}(x,\mathbf{u}|\mathbf{m}^{(2)})}{P_{C_{i}}(x,\mathbf{u}|\mathbf{m})}\label{eq:discriminated-symbol-probabilities}\end{equation}
with \[
P_{C_{i}}^{(l)}(x,\mathbf{u}|\mathbf{m}^{(l)})=\sum_{\mathbf{s}\in\mathbb{S}_{i}(x)}P^{(l)}(\mathbf{s},\mathbf{u}|\mathbf{m}^{(l)})\propto\sum_{\mathbf{s}\in\mathbb{C}^{(l)},s_{i}=x}P(\mathbf{s},\mathbf{u}|\mathbf{m}^{(l)}).\]

\end{Def}
\begin{anm}
\label{anm:(Independence)-w_i^(l)}(\emph{Independence}) Note that
$P_{C_{i}}^{\otimes}(x|\mathbf{m})$ is by (\ref{eq:defwahrscheinlich})
independent of both $w_{i}^{(l)}$.
\end{anm}
The discriminated symbol probabilities may be considered as  \emph{commonly}
believed symbol probabilities under discriminated word uncertainties. 

To obtain a first intuitive understanding of this fact we relate $P_{C_{i}}^{\otimes}(x|\mathbf{m})$
to the more accessible constituent symbol probabilities $P_{C_{i}}^{(l)}(x|\mathbf{m}^{(l)})$.
It holds by \noun{Bayes}' theorem that \[
P_{C_{i}}^{\otimes}(x|\mathbf{m})\propto\frac{P_{C_{i}}^{(1)}(x|\mathbf{m}^{(1)})P_{C_{i}}^{(2)}(x|\mathbf{m}^{(2)})}{P_{C_{i}}(x|\mathbf{m})}P_{C_{i}}^{{\scriptscriptstyle \,\boxtimes}}(x|\mathbf{m})\]
with (abusing notation as this is not a probability) \begin{equation}
P_{C_{i}}^{{\scriptscriptstyle \boxtimes}}(x|\mathbf{m})\propto\sum_{\mathbf{u}\in\mathbb{U}}\frac{P_{C_{i}}^{(1)}(\mathbf{u}|x,\mathbf{m}^{(1)})P_{C_{i}}^{(2)}(\mathbf{u}|x,\mathbf{m}^{(2)})}{P_{C_{i}}(\mathbf{u}|x,\mathbf{m})}.\label{eq:Pboxtimes}\end{equation}
In the logarithmic notation this gives\[
L_{i}^{\otimes}(\mathbf{m})=L_{i}^{(1)}(\mathbf{m}^{(1)})+L_{i}^{(2)}(\mathbf{m}^{(2)})-L_{i}(\mathbf{m})+L_{i}^{{\scriptscriptstyle \boxtimes}}(\mathbf{m})\]
or in the extrinsic notation of (\ref{eq:Extinsic-notation-L}) that\begin{equation}
L_{i}^{\otimes}(\mathbf{m})=\breve{L}_{i}^{(1)}(\mathbf{m}^{(1)})+\breve{L}_{i}^{(2)}(\mathbf{m}^{(2)})+L_{i}(\mathbf{r})+L_{i}^{{\scriptscriptstyle \boxtimes}}(\mathbf{m}).\label{eq:logotimesextr}\end{equation}
Note first the similarity with (\ref{eq:code-dec-belief}). One  has
again a sum of the extrinsic beliefs, however, an additional value
$L_{i}^{{\scriptscriptstyle \boxtimes}}(\mathbf{m})$ is added, which
is by Remark~\ref{anm:(Independence)-w_i^(l)} necessarily independent
of $w_{i}^{(l)}$ and $l=1,2.$ Overall the common belief joins the
two constituent beliefs together with a {}``distance'' correction
term.

Below we show that this new common belief is -- under again practically
prohibitively high complexity -- just the real overall {}``belief'',
i.e., the correct symbol probabilities obtained by optimal symbol
decoding. 

\begin{Def}
\label{Def:(Globally-Maximal-Discriminator)}(Globally Maximal Discriminator)
The discriminator $\mathbf{m}$ is globally maximal  (for $\mathbf{S}$)
if $|\mathbb{S}(\mathbf{u}|\mathbf{m})|=1$ for all $\mathbf{u}\in\mathbb{U}$.
I.e., for globally maximal discriminators exists a one-to-one correspondence
between $\mathbf{s}$ and $\mathbf{u}$ and thus $|\mathbb{S}|=|\mathbb{U}|$. 
\end{Def}
\begin{lem}
\label{lem:MaxDiskp(s)}\label{lem:globally-maximal=00003Doptsymbols}For
a globally maximal discriminator $\mathbf{m}$ it holds that \[
P^{\otimes}(\mathbf{u}|\mathbf{m})=P^{(a)}(\mathbf{u}|\mathbf{r})\mbox{ and }P_{C_{i}}^{\otimes}(x|\mathbf{m})=P_{C_{i}}^{(a)}(x|\mathbf{r}).\]
 I.e., the by $\mathbf{m}$ discriminated symbol probabilities are
correct.
\end{lem}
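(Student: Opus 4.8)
The plan is to exploit the defining feature of a globally maximal discriminator from Definition~\ref{Def:(Globally-Maximal-Discriminator)}, namely that the correlation map $\mathbf{s}\mapsto\mathbf{u}(\mathbf{s})=(\mathbf{w}^{(0)}\mathbf{s}^T,\mathbf{w}^{(1)}\mathbf{s}^T,\mathbf{w}^{(2)}\mathbf{s}^T)$ is a bijection from $\mathbb{S}$ onto $\mathbb{U}$; write $\mathbf{s}(\mathbf{u})$ for its inverse. The whole effect of discrimination is that words sharing a common $\mathbf{u}$ are merged, and the key observation is that for a globally maximal discriminator no merging occurs, so every marginalisation over $\mathbf{s}$ collapses to a single surviving term. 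Concretely, in $P^{(l)}(\mathbf{u}|\mathbf{m}^{(l)})=\sum_{\mathbf{s}\in\mathbb{S}}P^{(l)}(\mathbf{s},\mathbf{u}|\mathbf{m}^{(l)})$ the indicator factors $\prod_{j}\langle u_j=\mathbf{w}^{(j)}\mathbf{s}^T\rangle$ from Definition~\ref{Def:The-Word-Information} annihilate every term except $\mathbf{s}=\mathbf{s}(\mathbf{u})$, giving $P^{(l)}(\mathbf{u}|\mathbf{m}^{(l)})=P^{(l)}(\mathbf{s}(\mathbf{u})|\mathbf{m}^{(l)})$, and likewise $P(\mathbf{u}|\mathbf{m})=P(\mathbf{s}(\mathbf{u})|\mathbf{m})$.

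Having reduced the $\mathbf{u}$-marginals to word probabilities evaluated at the single preimage $\mathbf{s}(\mathbf{u})$, I would substitute them into Definition~\ref{Def:(Discriminated-Word-Probabilities)} and recognise the resulting ratio $P^{(1)}(\mathbf{s}(\mathbf{u})|\mathbf{m}^{(1)})P^{(2)}(\mathbf{s}(\mathbf{u})|\mathbf{m}^{(2)})/P(\mathbf{s}(\mathbf{u})|\mathbf{m})$ as exactly the quantity that Lemma~\ref{lem:DisProb} identifies, up to a global $\mathbf{u}$-independent constant, with $P^{(a)}(\mathbf{s}(\mathbf{u})|\mathbf{r})$. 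Since the bijection lets me write $P^{(a)}(\mathbf{u}|\mathbf{r})=P^{(a)}(\mathbf{s}(\mathbf{u})|\mathbf{r})$, this yields $P^{\otimes}(\mathbf{u}|\mathbf{m})\propto P^{(a)}(\mathbf{u}|\mathbf{r})$. Both sides are probability distributions summing to one over $\mathbb{U}$, so the proportionality constant is forced to be one and the first equality follows.

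For the symbol probabilities I would run the identical argument inside (\ref{eq:discriminated-symbol-probabilities}), carrying the extra restriction $s_i=x$. For each $\mathbf{u}$ the partial marginal $P_{C_i}^{(l)}(x,\mathbf{u}|\mathbf{m}^{(l)})=\sum_{\mathbf{s}\in\mathbb{S}_i(x)}P^{(l)}(\mathbf{s},\mathbf{u}|\mathbf{m}^{(l)})$ contributes the single term $P^{(l)}(\mathbf{s}(\mathbf{u})|\mathbf{m}^{(l)})$ precisely when the unique preimage satisfies $s_i(\mathbf{u})=x$ and vanishes otherwise, and the same holds for the denominator. Applying Lemma~\ref{lem:DisProb} term by term as above, the $\mathbf{u}$-sum becomes $\sum_{\mathbf{u}:\,s_i(\mathbf{u})=x}P^{(a)}(\mathbf{s}(\mathbf{u})|\mathbf{r})$, which by the bijection is just $\sum_{\mathbf{s}\in\mathbb{S}_i(x)}P^{(a)}(\mathbf{s}|\mathbf{r})=P_{C_i}^{(a)}(x|\mathbf{r})$; normalisation over $x\in\mathbb{B}$ again upgrades proportionality to equality.

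The proof is short once the bijection is in hand, so I do not expect a deep obstacle; the delicate part is the bookkeeping of the proportionality constants. One must check that the normalising constant hidden in Lemma~\ref{lem:DisProb} is genuinely independent of $\mathbf{u}$ (and, in the symbol case, of $x$) so that it survives as a common factor and cancels under the final normalisation, and one must verify that the denominators $P(\mathbf{s}(\mathbf{u})|\mathbf{m})$ and $P_{C_i}(x,\mathbf{u}|\mathbf{m})$ are nonzero exactly on the support of their numerators, so that the term-by-term reduction of the ratios is legitimate.
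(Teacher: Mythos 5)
Your proposal is correct and follows essentially the same route as the paper: both proofs combine Lemma~\ref{lem:DisProb} with the one-to-one correspondence between $\mathbf{s}$ and $\mathbf{u}$ from Definition~\ref{Def:(Globally-Maximal-Discriminator)} to trade the sum over $\mathbf{u}$ for a sum over $\mathbf{s}$ (the paper works with the joint $(\mathbf{s},\mathbf{u})$ distributions and swaps the summation variable, whereas you collapse each $\mathbf{u}$-marginal to its single surviving term $\mathbf{s}(\mathbf{u})$ first, which is the same argument in the opposite bookkeeping order). Your closing remarks on the $\mathbf{u}$- and $x$-independence of the normalising constant and on the $0/0$ terms where $s_i(\mathbf{u})\neq x$ are sound and in fact make explicit details the paper's proof passes over silently.
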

\begin{proof}
Lemma~\ref{lem:DisProb} and Definition~\ref{Def:The-Word-Information}
give \[
P^{(a)}(\mathbf{s},\mathbf{u}|\mathbf{r})\propto\frac{P^{(1)}(\mathbf{s},\mathbf{u}|\mathbf{m}^{(1)})P^{(2)}(\mathbf{s},\mathbf{u}|\mathbf{m}^{(2)})}{P(\mathbf{s},\mathbf{u}|\mathbf{m})}\]
as $\mathbf{u}$ follows directly from $\mathbf{s}$. For a globally
maximal discriminator $\mathbf{m}$ exists a one-to-one correspondence
between $\mathbf{s}$ and $\mathbf{u}$ This gives that one can omit
for any probability either $\mathbf{u}$ or $\mathbf{s}$. This proves
the optimality of the discriminated distribution. 

For the overall symbol probabilities holds \[
P_{C_{i}}^{(a)}(x|\mathbf{r})=\sum_{\mathbf{s}\in\mathbb{S}_{i}(x)}P^{(a)}(\mathbf{s}|\mathbf{r})=\sum_{\mathbf{s}\in\mathbb{S}_{i}(x)}\frac{P^{(1)}(\mathbf{s},\mathbf{u}|\mathbf{m}^{(1)})P^{(2)}(\mathbf{s},\mathbf{u}|\mathbf{m}^{(2)})}{P(\mathbf{s},\mathbf{u}|\mathbf{m})}.\]
With $P_{C_{i}}^{(l)}(x,\mathbf{s},\mathbf{u}|\mathbf{m}^{(l)})=P^{(l)}(\mathbf{s},\mathbf{u}|\mathbf{m}^{(l)})$
for $\mathbf{s}\in\mathbb{S}_{i}(x)$ and $P_{C_{i}}^{(l)}(x,\mathbf{s},\mathbf{u}|\mathbf{m}^{(l)})=0$
for $\mathbf{s}\not\in\mathbb{S}_{i}(x)$ the right hand side becomes
\[
\sum_{\mathbf{s}\in\mathbb{S}_{i}(x)}\frac{P^{(1)}(\mathbf{s},\mathbf{u}|\mathbf{m}^{(1)})P^{(2)}(\mathbf{s},\mathbf{u}|\mathbf{m}^{(2)})}{P(\mathbf{s},\mathbf{u}|\mathbf{m})}=\sum_{\mathbf{s}\in\mathbb{S}}\frac{P_{C_{i}}^{(1)}(x,\mathbf{s},\mathbf{u}|\mathbf{m}^{(1)})P_{C_{i}}^{(2)}(x,\mathbf{s},\mathbf{u}|\mathbf{m}^{(2)})}{P_{C_{i}}(x,\mathbf{s},\mathbf{u}|\mathbf{m})}.\]
By the one-to-one correspondence one can replace the sum over $\mathbf{s}$
by a sum over $\mathbf{u}$ to obtain \[
P_{C_{i}}^{(a)}(x|\mathbf{r})=\sum_{\mathbf{u}\in\mathbb{U}}\frac{P_{C_{i}}^{(1)}(x,\mathbf{s},\mathbf{u}|\mathbf{m}^{(1)})P_{C_{i}}^{(2)}(x,\mathbf{s},\mathbf{u}|\mathbf{m}^{(2)})}{P_{C_{i}}(x,\mathbf{s},\mathbf{u}|\mathbf{m})},\]
which is ($\mathbf{s}$ can be omitted due to  the one-to-one correspondence)
the optimality of the discriminated symbol probabilities.
\end{proof}
A globally maximal discriminator $\mathbf{m}$ thus solves the problem
of ML symbol by symbol decoding. Likewise by\[
\arg\max_{\mathbf{u}\in\mathbb{U}}P^{\otimes}(\mathbf{u}|\mathbf{m})=\arg\max_{\mathbf{u}\in\mathbb{U}}P^{(a)}(\mathbf{u}|\mathbf{r})=\arg\max_{\mathbf{s}\in\mathbb{S}}P^{(a)}(\mathbf{s}|\mathbf{r})\]
the problem of ML word decoding is solved (provided the one-to-one
correspondence of $\mathbf{u}$ and $\mathbf{s}$ can be easily inverted). 

This is not surprising as a globally maximal discriminator has by
the one-to-one correspondence of $\mathbf{s}$ and $\mathbf{u}$ the
discriminator complexity $|\mathbb{U}|=|\mathbb{S}|$. The transfer
complexity is then just the complexity of the optimal decoder based
on constituent probabilities. 

\begin{anm}
\label{anm:(Real-Valued-Discriminators)}(\emph{Globally} \emph{Maximal}
\emph{Discriminators}) The vector $\mathbf{m}=(\mathbf{r},\mathbf{w}^{(1)},\mathbf{0})$
and $w_{i}^{(1)}=2^{i}$ is an example of a globally maximal discriminator
as $u_{1}(\mathbf{s})=\sum_{i=1}^{n}s_{i}2^{i}$ is different for
all values of $\mathbf{s}$. I.e., there exists a one-to-one correspondence
between $\mathbf{s}$ and $\mathbf{u}$. Generally it is rather simple
to construct a globally maximal discriminator. E.g., the $\mathbf{r}$
received via an AWGN channel is usually already maximal discriminating:
The probability that two words $\mathbf{s}^{(1)},\mathbf{s}^{(2)}\in\mathbb{S}$
share the same real valued distance to the received word is generally
Zero.
\end{anm}

\section{Local  Discriminators}

In the last section the coupling of error correcting codes was reviewed
and different decoders were discussed. It was shown that an optimal
decoding is due to the large representation complexity practically
not feasible, but that a transfer of beliefs may lead to a good decoding
algorithm. A generalisation of this approach led to the concept of
discriminators and therewith to a new overall belief. The complexity
of the computation of this belief is depending on $|\mathbb{U}|$,
i.e., the number of different outcomes $\mathbf{u}$ of the discrimination.
Finally it was shown that the obtained overall belief leads to the
optimal overall decoding decision if the set is with $|\mathbb{U}|=|\mathbb{S}|$
maximally large. (However, then the overall decoding complexity is
not reduced.)

In this section we consider local discriminators with $|\mathbb{U}|\ll|\mathbb{S}|$.
Then only a limited number of values need to be transferred to compute
by (\ref{eq:discriminated-symbol-probabilities}) a new overall belief
$P_{C_{i}}^{\otimes}(x|\mathbf{m})$. These discriminated beliefs
$P_{C_{i}}^{\otimes}(x|\mathbf{m})$ may then be practically employed
to improve iterative decoding. To do so we first show that local discriminators
exist. 

\begin{exa}
\label{exa:Hard-not-maximal}The $\mathbf{r}$ obtained by a transmission
over a BSC is generally a local discriminator. The map $\mathcal{U}(\mathbf{r}):\mathbf{s}\mapsto\mathbf{u=}(u_{0},0,0)$
is then only dependent on the \noun{Hamming} distance $d_{H}(\mathbf{r},\mathbf{s})$,
i.e., \[
\mathcal{U}_{0}(\mathbf{r}):\mathbf{s}\mapsto u_{0}=\mathbf{r}\mathbf{s}^{T}=n-2d_{H}(\mathbf{r},\mathbf{s})\]
 and thus $\mathbb{U}=\mathbb{E}(U_{0})\subseteq\{-n,-n+2,...,n-2,n\}$,
which gives $|\mathbb{U}|\leq n+1.$ This furthermore gives that an
additional {}``hard decision'' choice of the $\mathbf{w}^{(l)}$
will continue to yield a local {}``\noun{Hamming}'' discriminator
$\mathbf{m}$. 
\end{exa}
To investigate local discrimination now reconsider the discriminated
distributions. With Remark~\ref{anm:(Discrimination)} one obtains
the following lemma.

\begin{lem}
\label{lem:Setsizedirect}The distributions of $\mathbf{u}$ given
$\mathbf{m}$ are\[
P(\mathbf{u}|\mathbf{m})\propto|\mathbb{S}(\mathbf{u}|\mathbf{m})|\exp_{2}(u_{0}+u_{1}+u_{2})\]
and\[
P^{(l)}(\mathbf{u}|\mathbf{m}^{(l)})\propto|\mathbb{C}^{(l)}(\mathbf{u}|\mathbf{m})|\exp_{2}(\sum_{{\scriptstyle k=0},k\neq l}^{2}u_{k})\]
where the sets $\mathbb{S}(\mathbf{u}|\mathbf{m})$ and $\mathbb{C}^{(l)}(\mathbf{u}|\mathbf{m})$
are defined by $\mathbb{M}(\mathbf{u}|\mathbf{m}):=\{\mathbf{s}\in\mathbb{M}:u_{l}=\mathbf{w}^{(l)}\mathbf{s}^{T}\,\forall l\}.$
\end{lem}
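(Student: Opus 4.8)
The plan is to read both distributions directly off the explicit word--level expression (\ref{eq:Plsuml}) of Remark~\ref{anm:(Discrimination)} by carrying out the marginalising sum over $\mathbf{s}$, the only point being that the exponential weight is constant on each fibre of the map $\mathbf{s}\mapsto\mathbf{u}(\mathbf{s})$. There is no genuine obstacle here; the content is bookkeeping, so I sketch it only briefly.

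First I would start from the defining marginalisation in Definition~\ref{Def:(Discriminated-Word-Probabilities)}, namely $P^{(l)}(\mathbf{u}|\mathbf{m}^{(l)})=\sum_{\mathbf{s}\in\mathbb{S}}P^{(l)}(\mathbf{s},\mathbf{u}|\mathbf{m}^{(l)})$, and substitute the form (\ref{eq:Plsuml}),
\[
P^{(l)}(\mathbf{s},\mathbf{u}|\mathbf{m}^{(l)})\propto\exp_{2}(\sum_{k=0,\,k\neq l}^{2}u_{k})\cdot\prod_{j=0}^{2}\left\langle u_{j}=\mathbf{w}^{(j)}\mathbf{s}^{T}\right\rangle\cdot\left\langle\mathbf{s}\in\mathbb{C}^{(l)}\right\rangle.
\]
The crucial observation is that the factor $\exp_{2}(\sum_{k\neq l}u_{k})$ depends only on the \emph{target} value $\mathbf{u}$ and not on the summation variable $\mathbf{s}$, so it may be pulled out of the sum; this is exactly the statement of Remark~\ref{anm:(Discrimination)} that all words sharing the same $\mathbf{u}$ carry the same probability. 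What remains is $\sum_{\mathbf{s}\in\mathbb{S}}\prod_{j}\langle u_j=\mathbf{w}^{(j)}\mathbf{s}^T\rangle\langle\mathbf{s}\in\mathbb{C}^{(l)}\rangle$, which simply counts the words $\mathbf{s}\in\mathbb{C}^{(l)}$ mapped to the prescribed $\mathbf{u}$, i.e.\ $|\mathbb{C}^{(l)}(\mathbf{u}|\mathbf{m})|$ with the set as defined in the statement. This gives the second displayed identity.

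For the uncoded distribution I would repeat the computation with the analogue of (\ref{eq:Plsuml}) for $P(\mathbf{s},\mathbf{u}|\mathbf{m})$ (the ``similar'' expression announced after (\ref{eq:Plsuml})): here \emph{all three} belief terms $P(\mathbf{r}|\mathbf{s})$, $P(\mathbf{w}^{(1)}|\mathbf{s})$, $P(\mathbf{w}^{(2)}|\mathbf{s})$ contribute by the independence (\ref{eq:Overall-proba-uncod}) and the code restriction is the full $\mathbb{S}$. Consequently the exponent becomes $u_0+u_1+u_2$, and the same factorisation yields $P(\mathbf{u}|\mathbf{m})\propto|\mathbb{S}(\mathbf{u}|\mathbf{m})|\exp_2(u_0+u_1+u_2)$.

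The one thing to be careful about is tracking which indices $k$ enter the exponent. For the constituent distribution $P^{(l)}$ the code's own belief $\mathbf{w}^{(l)}$ does not enter $\mathbf{m}^{(l)}$ (by the definition of $\mathbf{m}^{(l)}$ in Remark~\ref{anm:(Notation)-m_w}), so the term $u_l$ is absent and the sum runs over $k\neq l$; for the uncoded $P(\mathbf{u}|\mathbf{m})$ all three indices contribute. Keeping the normalisation over $\mathbb{U}$ implicit, as the statement does through $\propto$, avoids any further computation.
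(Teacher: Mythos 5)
Your proof is correct and takes essentially the same route as the paper's: both rest on (\ref{eq:Plsuml}) -- the exponential weight is constant on each fibre of $\mathbf{s}\mapsto\mathbf{u}(\mathbf{s})$ -- so marginalising over $\mathbf{s}$ reduces to counting the fibre, which yields the set-size factors $|\mathbb{S}(\mathbf{u}|\mathbf{m})|$ and $|\mathbb{C}^{(l)}(\mathbf{u}|\mathbf{m})|$. The paper states the identical argument more tersely (uncoded case first, then ``adding the code constraint''), while you additionally make explicit the bookkeeping of which indices $k$ enter the exponent; there is no substantive difference.
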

\begin{proof}
By (\ref{eq:Plsuml}) follows that the probability of all words $\mathbf{s}\in\mathbb{S}(\mathbf{u}|\mathbf{m})$
with the same $\mathbf{u}$ is equal and proportional to $\exp_{2}(\sum_{k=0}^{2}u_{k})$.
As $|\mathbb{S}(\mathbf{u}|\mathbf{m})|$ words are in $\mathbb{S}(\mathbf{u}|\mathbf{m})$
this gives the first equation. The second equation is obtained by
adding the code constraint.
\end{proof}
\begin{anm}
(\emph{Overall} \emph{Distribution})\label{anm:Overall Distribution}
In the same way follows (see Remark~\ref{anm:(Notation)-abhvonm})
that\begin{equation}
P^{(a)}(\mathbf{u}|\mathbf{r})\propto|\mathbb{C}^{(a)}(\mathbf{u}|\mathbf{m})|\exp_{2}(u_{0}).\label{eq:OverallsetProbs}\end{equation}
More general restrictions (see below) can always be handled by imposing
restrictions on the considered sets. One thus generally obtains for
the distributions of $\mathbf{u}$ a description via on $\mathbf{u}$
dependent sets sizes. 
\end{anm}
\begin{exa}
\label{exa:Set-sizes-maximal-discrim}With the concept of set sizes
Example~\ref{exa:Hard-not-maximal} is continued. Assume again that
the discriminator is given by $\mathbf{m}=(\mathbf{r},\mathbf{0},\mathbf{0})$.
In this case no discrimination takes place on $u_{1}$ and $u_{2}$
as one obtains $u_{1}=u_{2}=0$ for all $\mathbf{s}$. One first obtains
the overall distribution $P^{(a)}(\mathbf{u}|\mathbf{r})$ to be with
Remark~\ref{anm:Overall Distribution} the multiplication of $\exp_{2}(u_{0})$
with the distribution of the correlation $\mathbf{c}\mathbf{r}^{T}$
with $\mathbf{c}\in\mathbb{C}^{(a)}$ given by $|\mathbb{C}^{(a)}(\mathbf{u}|\mathbf{m})|$.
\\
Assume furthermore that the overall maximum likelihood decision $\hat{\mathbf{c}}^{(a)}$
is with \[
P^{(a)}(\hat{\mathbf{c}}^{(a)}|\mathbf{r})\to1\]
 \emph{distinguished}. This assumption gives that \[
P^{(a)}(\mathbf{u}|\mathbf{r})=P^{(a)}(u_{0}|\mathbf{r})\approx\begin{cases}
1 & \mbox{ for }u_{0}=u_{0}(\hat{\mathbf{c}}^{(a)})=n-2d_{H}(\mathbf{r},\hat{\mathbf{c}}^{(a)})\\
0 & \mbox{ else.}\end{cases}\]
I.e., $P^{(a)}(\mathbf{u}|\mathbf{r})$ consists of one peak. %
\begin{figure}[h]
\begin{centering}
\includegraphics{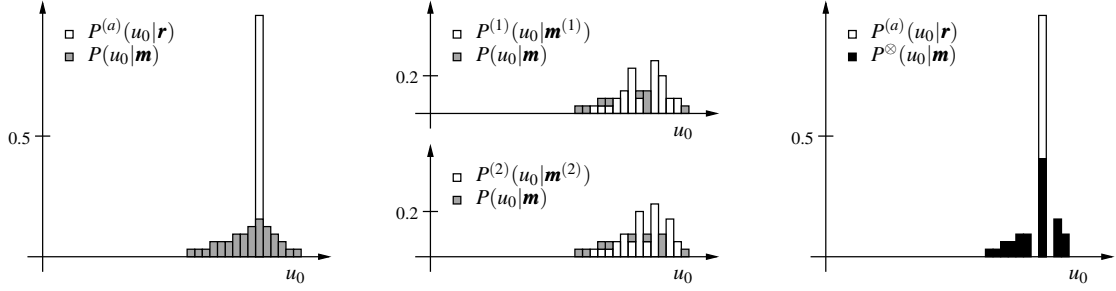}
\par\end{centering}

\caption{\label{fig:Hard-Decisions}Hard Decisions}

\end{figure}

For the other probabilities $P(u_{0}|\mathbf{m})$ and $P^{(l)}(u_{0}|\mathbf{m}^{(l)})$
 with Lemma~\ref{lem:Setsizedirect} again a multiplication of correlation
distributions with $\mbox{exp}_{2}(u_{0})$ is obtained. These distributions
will, however, due to the much larger spaces \[
|\mathbb{S}|\gg|\mathbb{C}^{(l)}|\gg|\mathbb{C}^{(a)}|\]
 usually not be in the form of a single peak. Other words with $u_{0}\geq\hat{\mathbf{c}}^{(a)}\mathbf{r}^{T}$
may appear. The same then holds true for~$P^{\otimes}(u_{0}|\mathbf{m})$.
These considerations are exemplary depicted in Figure~\ref{fig:Hard-Decisions}.
Note that the distributions can all be computed (see Appendix~\ref{sub:Trellis-Based-Algorithms})
in the constituent trellises.
\end{exa}
For a local discrimination a computation in the constituent trellises
produces by (\ref{eq:discriminated-symbol-probabilities}) symbol
probabilities $P_{C_{i}}^{\otimes}(x|\mathbf{m})$. In equivalence
to (loopy) belief propagation these probabilities should lead to the
definition of some $\mathbf{w}$ and thus to some iteration rule.
Before considering this approach we evaluate the quality of the discriminated
symbol probabilities.

\subsection{Typicality}

With Lemma~\ref{lem:Setsizedirect} one obtains that the discriminated
symbol probabilities defined by (\ref{eq:discriminated-symbol-probabilities})
are \begin{align}
P_{C_{i}}^{\otimes}(x|\mathbf{m}) & \propto\sum_{\mathbf{u}\in\mathbb{U}}\frac{|\mathbb{C}_{i}^{(1)}(x,\mathbf{u}|\mathbf{m})|\exp_{2}(u_{0}+u_{2})|\mathbb{C}_{i}^{(2)}(x,\mathbf{u}|\mathbf{m})|\exp_{2}(u_{0}+u_{1})}{|\mathbb{S}_{i}(x,\mathbf{u}|\mathbf{m})|\exp_{2}(u_{0}+u_{1}+u_{2})}\nonumber \\
 & =\sum_{\mathbf{u}\in\mathbb{U}}\frac{|\mathbb{C}_{i}^{(1)}(x,\mathbf{u}|\mathbf{m})||\mathbb{C}_{i}^{(2)}(x,\mathbf{u}|\mathbf{m})|}{|\mathbb{S}_{i}(x,\mathbf{u}|\mathbf{m})|}\exp_{2}(u_{0})\label{eq:axelsmeckerei}\end{align}
with the sets $\mathbb{C}_{i}^{(l)}(x,\mathbf{u}|\mathbf{m})$ defined
by $\mathbf{s}\in\mathbb{C}^{(l)}$ and $s_{i}=x$. Hence, $P_{C_{i}}^{\otimes}(x|\mathbf{m})$
only depends on the discriminated set sizes $\mathbb{C}_{i}^{(l)}(x,\mathbf{u}|\mathbf{m})$,
$\mathbb{S}_{i}(x,\mathbf{u}|\mathbf{m})$, and the word probabilities
$P(\mathbf{s}|\mathbf{r})\propto\exp_{2}(u_{0}(\mathbf{s}))$.

The discriminated symbol probabilities should approximate the overall
probabilities, i.e., \[
P_{C_{i}}^{\otimes}(x|\mathbf{m})\approx P_{C_{i}}^{(a)}(x|\mathbf{r}).\]
With Remark~\ref{anm:Overall Distribution} and (\ref{eq:axelsmeckerei})
this approximation is surely good if \begin{equation}
\frac{|\mathbb{C}_{i}^{(1)}(x,\mathbf{u}|\mathbf{m})||\mathbb{C}_{i}^{(2)}(x,\mathbf{u}|\mathbf{m})|}{|\mathbb{S}_{i}(x,\mathbf{u}|\mathbf{m})|}\approx|\mathbb{C}_{i}^{(a)}(x,\mathbf{u}|\mathbf{m})|.\label{eq:sizesmarginalapprox}\end{equation}
Intuitively, the approximation thus uses the knowledge how many words
of the same correlation values $\mathbf{u}$ and decision $c_{i}=x$
are in both codes simultaneously. Moreover, depending on the discriminator
$\mathbf{m}$ the quality of this approximation will change.

An average consideration of the approximations (\ref{eq:sizesmarginalapprox})
is related to the following lemma.

\begin{lem}
\label{lem:Dual_independence}If the duals of the (linear) constituent
codes do not share common words but the zero word then\begin{equation}
|\mathbb{C}^{(1)}||\mathbb{C}^{(2)}|=|\mathbb{S}||\mathbb{C}^{(a)}|.\label{eq:meansizesareequal}\end{equation}

\end{lem}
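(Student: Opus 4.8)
The plan is to prove the identity $|\mathbb{C}^{(1)}||\mathbb{C}^{(2)}|=|\mathbb{S}||\mathbb{C}^{(a)}|$ by a dimension count using the linear-algebraic structure of the constituent codes. First I would recall that each constituent code is a binary linear code defined by a parity check matrix, $\mathbb{C}^{(l)}=\{\mathbf{s}:\mathbf{H}^{(l)}\mathbf{s}^{T}=\mathbf{0}\}$, so that $|\mathbb{C}^{(l)}|=2^{n-\mathrm{rank}(\mathbf{H}^{(l)})}$, while $|\mathbb{S}|=2^{n}$. By Definition~\ref{Def:The-dual-coupling} the overall code is the intersection $\mathbb{C}^{(a)}=\mathbb{C}^{(1)}\cap\mathbb{C}^{(2)}$, which is the null space of the stacked matrix $\mathbf{H}^{(a)}=\left[\begin{array}{c}\mathbf{H}^{(1)}\\\mathbf{H}^{(2)}\end{array}\right]$, so $|\mathbb{C}^{(a)}|=2^{n-\mathrm{rank}(\mathbf{H}^{(a)})}$. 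Taking logarithms base $2$, the claimed identity is equivalent to the rank statement
\[
\mathrm{rank}(\mathbf{H}^{(1)})+\mathrm{rank}(\mathbf{H}^{(2)})=\mathrm{rank}(\mathbf{H}^{(a)}).
\]

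The key observation is that the rows of $\mathbf{H}^{(l)}$ span the dual code $\mathbb{C}^{(l)\perp}$, so $\mathrm{rank}(\mathbf{H}^{(l)})=\dim\mathbb{C}^{(l)\perp}$, and the rows of the stacked matrix $\mathbf{H}^{(a)}$ span the sum $\mathbb{C}^{(1)\perp}+\mathbb{C}^{(2)\perp}$, giving $\mathrm{rank}(\mathbf{H}^{(a)})=\dim(\mathbb{C}^{(1)\perp}+\mathbb{C}^{(2)\perp})$. The central tool is then the dimension formula for subspaces,
\[
\dim(\mathbb{C}^{(1)\perp}+\mathbb{C}^{(2)\perp})=\dim\mathbb{C}^{(1)\perp}+\dim\mathbb{C}^{(2)\perp}-\dim(\mathbb{C}^{(1)\perp}\cap\mathbb{C}^{(2)\perp}).
\]
The hypothesis of the lemma states precisely that the duals share only the zero word, i.e. $\mathbb{C}^{(1)\perp}\cap\mathbb{C}^{(2)\perp}=\{\mathbf{0}\}$, so the last term vanishes and we obtain $\mathrm{rank}(\mathbf{H}^{(a)})=\mathrm{rank}(\mathbf{H}^{(1)})+\mathrm{rank}(\mathbf{H}^{(2)})$, which is the desired rank identity. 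Exponentiating recovers~(\ref{eq:meansizesareequal}).

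The main obstacle, though a mild one, is making sure the dual-space bookkeeping is clean over $\mathbb{Z}_{2}$: I must identify $\mathrm{rank}(\mathbf{H}^{(l)})$ with the dimension of the dual code rather than of the code itself, and confirm that stacking parity check matrices corresponds to summing (not intersecting) the dual spaces while intersecting the primal codes. This is exactly the duality that inverts intersection into sum, so the hypothesis on the duals is the natural one. A subtle point worth noting is that the rows of $\mathbf{H}^{(l)}$ need not themselves be linearly independent, but this causes no trouble since I work throughout with ranks (equivalently, with the spans of the rows) rather than with the nominal number of rows $(n-k)$; the rank automatically discards any redundancy. With these identifications in place the argument is a one-line application of the subspace dimension formula, and no residual computation remains.
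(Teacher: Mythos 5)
Your proof is correct and takes essentially the same approach as the paper: the paper likewise observes that trivial intersection of the duals makes the dual dimension of the coupled code the sum of the constituent dual dimensions, $n-k=(n-k^{(1)})+(n-k^{(2)})$, and exponentiates to get~(\ref{eq:meansizesareequal}). Your version merely spells out the rank bookkeeping the paper leaves implicit, via $\mathrm{rank}(\mathbf{H}^{(a)})=\dim(\mathbb{C}^{(1)\perp}+\mathbb{C}^{(2)\perp})$ and the subspace dimension formula.
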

\begin{proof}
With Definition~\ref{Def:The-dual-coupling} and by assumption linearly
independent $\mathbf{H}^{(1)}$ and $\mathbf{H}^{(2)}$ it holds that
the dual code dimension of the coupled code is just the sum of the
dual code dimension of the constituent codes, i.e.,  \[
n-k=(n-k^{(1)})+(n-k^{(2)}).\]
This is equivalent to $k^{(1)}+k^{(2)}=n+k$ and thus  the statement
of the lemma. 
\end{proof}
This lemma extends to the constrained set sizes $|\mathbb{C}_{i}^{(l)}(x)|$
as used in (\ref{eq:sizesmarginalapprox}). The approximations are
thus in the \emph{mean} correct. 

For random coding and independently chosen $\mathbf{m}=(\mathbf{r},\mathbf{w}^{(1)},\mathbf{w}^{(2)})$
this consideration can be put into a more precise form. 

\begin{lem}
\label{lem:AsympGleichZufSets}For random (long) codes $\mathbb{C}^{(1)}$
and $\mathbb{C}^{(2)}$ and independently chosen $\mathbf{m}$ holds
the asymptotic equality \begin{equation}
|\mathbb{C}_{i}^{(a)}(x,\mathbf{u}|\mathbf{m})|\asymp\frac{|\mathbb{C}_{i}^{(1)}(x,\mathbf{u}|\mathbf{m})||\mathbb{C}_{i}^{(2)}(x,\mathbf{u}|\mathbf{m})|}{|\mathbb{S}_{i}(x,\mathbf{u}|\mathbf{m})|}.\label{eq:asymEquaProbs}\end{equation}

\end{lem}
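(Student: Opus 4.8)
The plan is to establish the asymptotic equality in Lemma~\ref{lem:AsympGleichZufSets} by a second moment (expectation and variance) argument over the random code ensemble. First I would fix the discriminator $\mathbf{m}=(\mathbf{r},\mathbf{w}^{(1)},\mathbf{w}^{(2)})$, the symbol index $i$, the value $x\in\mathbb{B}$, and the correlation vector $\mathbf{u}\in\mathbb{U}$, and regard the constituent codes $\mathbb{C}^{(1)}$ and $\mathbb{C}^{(2)}$ as drawn independently from a random linear ensemble of the prescribed rates. For a fixed word $\mathbf{s}\in\mathbb{S}_{i}(x,\mathbf{u}|\mathbf{m})$ (i.e.\ a word with $s_i=x$ and the prescribed correlations $u_l=\mathbf{w}^{(l)}\mathbf{s}^T$), the events $\{\mathbf{s}\in\mathbb{C}^{(1)}\}$ and $\{\mathbf{s}\in\mathbb{C}^{(2)}\}$ are independent with probabilities $|\mathbb{C}^{(1)}|/|\mathbb{S}|$ and $|\mathbb{C}^{(2)}|/|\mathbb{S}|$ respectively, so that membership in the coupled code $\mathbb{C}^{(a)}=\mathbb{C}^{(1)}\cap\mathbb{C}^{(2)}$ has probability $|\mathbb{C}^{(1)}||\mathbb{C}^{(2)}|/|\mathbb{S}|^2$.

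Next I would take expectations. Writing the count as a sum of indicators,
\[
|\mathbb{C}_i^{(a)}(x,\mathbf{u}|\mathbf{m})|=\sum_{\mathbf{s}\in\mathbb{S}_i(x,\mathbf{u}|\mathbf{m})}\left\langle \mathbf{s}\in\mathbb{C}^{(1)}\right\rangle \left\langle \mathbf{s}\in\mathbb{C}^{(2)}\right\rangle,
\]
linearity of expectation gives
\[
\mathbb{E}\left[|\mathbb{C}_i^{(a)}(x,\mathbf{u}|\mathbf{m})|\right]=|\mathbb{S}_i(x,\mathbf{u}|\mathbf{m})|\cdot\frac{|\mathbb{C}^{(1)}|}{|\mathbb{S}|}\cdot\frac{|\mathbb{C}^{(2)}|}{|\mathbb{S}|},
\]
and the same conditioning applied to the constituent counts yields $\mathbb{E}[|\mathbb{C}_i^{(l)}(x,\mathbf{u}|\mathbf{m})|]=|\mathbb{S}_i(x,\mathbf{u}|\mathbf{m})|\cdot|\mathbb{C}^{(l)}|/|\mathbb{S}|$. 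Substituting these into the right-hand side of~(\ref{eq:asymEquaProbs}) and cancelling $|\mathbb{S}_i(x,\mathbf{u}|\mathbf{m})|$ shows that the two sides of the claimed asymptotic equality have \emph{identical} expectations; this is the ensemble-averaged version, consistent with the mean statement already obtained from Lemma~\ref{lem:Dual_independence}. So in expectation the relation is exact, and the content of the lemma is purely a concentration claim.

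The substantive step is therefore to control the fluctuations and argue that, for long codes, both $|\mathbb{C}_i^{(a)}(x,\mathbf{u}|\mathbf{m})|$ and the ratio on the right concentrate around their common mean in the exponential (rate) sense denoted by $\asymp$. I would compute the second moment of $|\mathbb{C}_i^{(a)}(x,\mathbf{u}|\mathbf{m})|$ by expanding the product of indicators over ordered pairs $(\mathbf{s},\mathbf{s}')$; for a random \emph{linear} code the joint membership probability factorises as $|\mathbb{C}^{(l)}|/|\mathbb{S}|$ times a correction depending on whether $\mathbf{s}'$ lies in the linear span of $\mathbf{s}$, so the diagonal and off-diagonal contributions separate cleanly. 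A Chebyshev bound then shows the relative variance is exponentially negligible whenever the relevant set $\mathbb{S}_i(x,\mathbf{u}|\mathbf{m})$ is exponentially large, which is exactly the regime of interest.

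The hard part will be making the notion of asymptotic equality $\asymp$ precise and handling the atypical $\mathbf{u}$ where the set $\mathbb{S}_i(x,\mathbf{u}|\mathbf{m})$ is small: there the mean count may be below one and the ratio is dominated by a few words, so concentration fails pointwise and the equality can only be claimed on the exponential scale (equal rate exponents) or after the summation over $\mathbf{u}$ in~(\ref{eq:axelsmeckerei}), where typical-set arguments in the spirit of the earlier remarks guarantee that the dominant contribution comes from the well-concentrated $\mathbf{u}$. I would therefore phrase the conclusion as an equality of exponential growth rates, invoking the method of types to pass from the per-$\mathbf{u}$ second-moment estimate to the uniform statement, and note that the pairwise-independence structure of the random linear ensemble is what drives the concentration.
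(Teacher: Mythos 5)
Your proposal is correct and rests on the same random-coding foundation as the paper's proof, but its mechanics are genuinely different and sharper. The paper argues entirely at the level of normalised fractions: it invokes the law of large numbers to assert that, for a random code, the fraction of codewords falling in each discriminated cell matches the fraction $|\mathbb{S}_{i}(x,\mathbf{u}|\mathbf{m})|/|\mathbb{S}|$ of the whole space (its equation~(\ref{eq:asympgle1}) and the analogous constituent relations), multiplies the two constituent relations together, and only then restores the correct normalisation through the dimension-count identity of Lemma~\ref{lem:Dual_independence}. You instead work with unnormalised indicator sums, so the first-moment identity is exact and the factor $|\mathbb{C}^{(1)}||\mathbb{C}^{(2)}|/|\mathbb{S}|^{2}$ emerges directly from the independence of the two membership events, with Lemma~\ref{lem:Dual_independence} recovered as a by-product rather than needed as an input; you then isolate what the paper leaves implicit, namely that the entire content of the lemma is a concentration claim, and you supply the missing mechanism via a second-moment/Chebyshev bound driven by the pairwise independence of random linear codes. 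Your closing caveat is also well taken and is in fact \emph{not} addressed in the paper's proof: concentration fails pointwise when $|\mathbb{S}_{i}(x,\mathbf{u}|\mathbf{m})|$ is small (the mean count drops below one), so the statement can only hold as an equality of exponential rates on typical cells. The paper silently assumes this regime here, and compensates only later, in Theorem~\ref{thm:Asymptyp}, where the distinguished word's cell of size one is treated by exact computation rather than by this lemma — so your restriction of the claim to exponentially large cells is exactly consistent with how the lemma is actually deployed. Two minor points to tidy in a full write-up: the all-zero word always belongs to a linear code, so its membership probability is $1$ rather than $|\mathbb{C}^{(l)}|/|\mathbb{S}|$ (a negligible diagonal term), and since the lemma fixes the code sizes $|\mathbb{C}^{(l)}|$, you should condition the ensemble on the code dimensions (or work with random parity-check matrices of full rank) so that your expectation identities hold as stated.
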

\begin{proof}
The probability of a random choice in $\mathbb{S}$ to be in $\mathbb{S}(\mathbf{u}|\mathbf{m})$
is just the fraction of the set sizes $|\mathbb{S}(\mathbf{u}|\mathbf{m})|$
and $|\mathbb{S}|$. 

For a random coupled code $|\mathbb{C}^{(a)}|$ the codewords are
a random subset of the set $|\mathbb{S}|$. For $|\mathbb{C}^{(a)}|\gg1$
the law of large numbers thus gives the asymptotic equality \begin{equation}
\frac{|\mathbb{C}_{i}^{(a)}(x,\mathbf{u}|\mathbf{m})|}{|\mathbb{C}^{(a)}|}\asymp\frac{|\mathbb{S}_{i}(x,\mathbf{u}|\mathbf{m})|}{|\mathbb{S}|}.\label{eq:asympgle1}\end{equation}
 The same holds true for the constituent codes \[
\frac{|\mathbb{C}_{i}^{(l)}(x,\mathbf{u}|\mathbf{m})|}{|\mathbb{C}^{(l)}|}\asymp\frac{|\mathbb{S}_{i}(x,\mathbf{u}|\mathbf{m})|}{|\mathbb{S}|}.\]
A multiplication of the equality of code 1 with the one of code 2
gives the asymptotic equivalence\begin{equation}
\frac{|\mathbb{S}|}{|\mathbb{C}^{(1)}||\mathbb{C}^{(2)}|}\frac{|\mathbb{C}_{i}^{(1)}(x,\mathbf{u}|\mathbf{m})||\mathbb{C}_{i}^{(2)}(x,\mathbf{u}|\mathbf{m})|}{|\mathbb{S}_{i}(x,\mathbf{u}|\mathbf{m})|}\asymp\frac{|\mathbb{S}_{i}(x,\mathbf{u}|\mathbf{m})|}{|\mathbb{S}|}.\label{eq:asmpgl2}\end{equation}
Combining (\ref{eq:asympgle1}) and (\ref{eq:asmpgl2}) then leads
to\[
\frac{|\mathbb{S}|}{|\mathbb{C}^{(1)}||\mathbb{C}^{(2)}|}\frac{|\mathbb{C}_{i}^{(1)}(x,\mathbf{u}|\mathbf{m})||\mathbb{C}_{i}^{(2)}(x,\mathbf{u}|\mathbf{m})|}{|\mathbb{S}_{i}(x,\mathbf{u}|\mathbf{m})|}\asymp\frac{|\mathbb{C}_{i}^{(a)}(x,\mathbf{u}|\mathbf{m})|}{|\mathbb{C}^{(a)}|}.\]
With (\ref{eq:meansizesareequal}) this is the statement of the lemma. 
\end{proof}
\begin{anm}
(\emph{Randomness}) The proof of the lemma indicates that the approximation
is rather good for code choices that are independent of $\mathbf{m}$.
I.e., perfect randomness of the codes is generally not needed. This
can be understood by the concept of random codes in information theory.
A random code is generally a good code. Conversely a good code should
not exhibit any structure, i.e., it behaves as a random code. 
\end{anm}

\subsection{Distinguished Words}

The received vector $\mathbf{r}$ is obtained from the channel \emph{and}
the encoding. The discriminator $\mathbf{m}$ is due to the dependent
$\mathbf{r}$ thus generally \emph{not} independent of the encoding.
This becomes directly clear by reconsidering Example~\ref{exa:Set-sizes-maximal-discrim}
and the assumptions that a \emph{distinguished} word $\hat{\mathbf{c}}^{(a)}$
with \[
P^{(a)}(\hat{\mathbf{c}}^{(a)}|\mathbf{r})\to1\]
exists. In this case the constituent distributions and thus likewise
the discriminator distribution $P^{\otimes}(\mathbf{u}|\mathbf{r})$
will be large in a region where a {}``typical'' number of errors
$\hat{t}$ occurred, i.e, $u_{0}=\mathbf{r}\mathbf{c}^{T}\approx n-2\hat{t}$.

For an independent $\mathbf{m}$, however, this would not be the case:
Then $P^{\otimes}(\mathbf{u}|\mathbf{m})$ would with Lemma~\ref{lem:AsympGleichZufSets}
be large in the vicinity of a typical minimal overall code word \emph{distance}.
This distance is generally larger than the expected number of errors
$\hat{t}$ under a distinguished word. Hence, $P^{\otimes}(\mathbf{u}|\mathbf{m})$
would then be large at a smaller $u_{0}$ than under a dependent $\mathbf{m}$.

\begin{anm}
(\emph{Channel} \emph{Capacity} \emph{and Typical Sets}) The existence
of a distinguished word is equivalent to assuming a long random code
of rate below capacity~\cite{Shannon_Inf}. The word sent is then
the only one in the typical set, i.e., it has a small distance to
$\mathbf{r}$. The other words of a random code will typically exhibit
a large distance to~$\mathbf{r}$.
\end{anm}
To describe single words one needs to describe \emph{how well} certain
environments in $\mathbf{u}$ given $\mathbf{m}$ are discriminated.
The precision of the approximation of $\mathbb{C}_{i}^{(a)}(x,\mathbf{u}|\mathbf{m})$
by (\ref{eq:sizesmarginalapprox}) hereby obviously depends on the
set size $|\mathbb{S}_{i}(x,\mathbf{u}|\mathbf{m})|$. This leads
to the following definition.

\begin{Def}
(Maximally Discriminated Region) The by $\mathbf{m}$ maximally discriminated
region \[
\mathbb{D}(\mathbf{m}):=\bigcup_{|\mathbb{S}(\mathbf{u}|\mathbf{m})|=1}\mathbb{S}(\mathbf{u}|\mathbf{m})\]
consists of all words $\mathbf{s}$ that uniquely define  $\mathbf{u}\mbox{ with }u_{l}=\mathbf{s}\mathbf{w}^{(l)T}\mbox{ for }l=0,1,2$.
\end{Def}
\begin{thm}
\label{thm:Asymptyp}For independent constituent codes and a by $\hat{\mathbf{c}}^{(a)}\in\mathbb{D}(\mathbf{m})$
maximally discriminated distinguished event is \[
P_{C_{i}}^{\otimes}(x|\mathbf{m})\asymp P_{C_{i}}^{(a)}(x|\mathbf{r})\mbox{ and }P^{\otimes}(\mathbf{u}|\mathbf{m})\asymp P^{(a)}(\mathbf{u}|\mathbf{r}).\]

\end{thm}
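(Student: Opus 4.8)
The plan is to combine the two approximation regimes established earlier: the \emph{global} exactness of Lemma~\ref{lem:globally-maximal=00003Doptsymbols} (which handles words inside the maximally discriminated region $\mathbb{D}(\mathbf{m})$, where $|\mathbb{S}(\mathbf{u}|\mathbf{m})|=1$) and the \emph{random-coding} asymptotic equality of Lemma~\ref{lem:AsympGleichZufSets} (which handles the remaining bulk of the code space). The key structural observation is that the hypothesis of a distinguished event $P^{(a)}(\hat{\mathbf{c}}^{(a)}|\mathbf{r})\to1$ forces the overall distribution $P^{(a)}(\mathbf{u}|\mathbf{r})$ to concentrate on a single $\mathbf{u}$, so that in the sum defining $P_{C_i}^{\otimes}(x|\mathbf{m})$ in~(\ref{eq:axelsmeckerei}) only the term with $\mathbf{u}=\mathbf{u}(\hat{\mathbf{c}}^{(a)})$ carries appreciable weight after the $\exp_2(u_0)$ factor is accounted for.

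First I would write out, using~(\ref{eq:axelsmeckerei}) and Remark~\ref{anm:Overall Distribution} together with~(\ref{eq:OverallsetProbs}), the explicit discrepancy term: the ratio $P_{C_i}^{\otimes}(x|\mathbf{m})/P_{C_i}^{(a)}(x|\mathbf{r})$ is governed, term by term in $\mathbf{u}$, by whether the set-size approximation~(\ref{eq:sizesmarginalapprox}) holds. I would then split the sum over $\mathbb{U}$ into the contribution from $\mathbf{u}$ with $\hat{\mathbf{c}}^{(a)}\in\mathbb{S}(\mathbf{u}|\mathbf{m})$ and the complementary contribution. For the distinguished $\mathbf{u}$, the assumption $\hat{\mathbf{c}}^{(a)}\in\mathbb{D}(\mathbf{m})$ gives $|\mathbb{S}(\mathbf{u}|\mathbf{m})|=1$, so the local argument of Lemma~\ref{lem:globally-maximal=00003Doptsymbols} applies \emph{restricted to that region}: the one-to-one correspondence between $\mathbf{s}$ and $\mathbf{u}$ makes~(\ref{eq:sizesmarginalapprox}) an exact identity there. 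For all other $\mathbf{u}$, Lemma~\ref{lem:AsympGleichZufSets} supplies the asymptotic equality~(\ref{eq:asymEquaProbs}) under the independence of the constituent codes from $\mathbf{m}$, so~(\ref{eq:sizesmarginalapprox}) holds asymptotically there as well.

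Combining these, every term of~(\ref{eq:axelsmeckerei}) asymptotically matches the corresponding term of $P_{C_i}^{(a)}(x|\mathbf{r})=\sum_{\mathbf{u}}|\mathbb{C}_i^{(a)}(x,\mathbf{u}|\mathbf{m})|\exp_2(u_0)$ (up to the common normalisation), which yields $P_{C_i}^{\otimes}(x|\mathbf{m})\asymp P_{C_i}^{(a)}(x|\mathbf{r})$; the word-distribution statement $P^{\otimes}(\mathbf{u}|\mathbf{m})\asymp P^{(a)}(\mathbf{u}|\mathbf{r})$ follows by the same term-by-term comparison applied to Definition~\ref{Def:(Discriminated-Word-Probabilities)} and~(\ref{eq:OverallsetProbs}), without the inner symbol constraint.

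The main obstacle I anticipate is controlling the \emph{non-distinguished} terms uniformly. The distinguished term is exact, but the asymptotic equality of Lemma~\ref{lem:AsympGleichZufSets} is a statement about individual set-size ratios, and to conclude an approximation for the \emph{summed} probabilities one must ensure that the errors across the many $\mathbf{u}\notin\mathbb{D}(\mathbf{m})$ do not accumulate. This is exactly where the distinguished-word hypothesis is doing the real work: because $P^{(a)}(\mathbf{u}|\mathbf{r})$ is concentrated at $\mathbf{u}(\hat{\mathbf{c}}^{(a)})$, the weight of the non-distinguished region is asymptotically negligible relative to the distinguished peak, so even a crude bound on the per-term error there suffices. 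Making this precise — quantifying that the $\exp_2(u_0)$-weighted mass away from the peak vanishes, so that the region where~(\ref{eq:sizesmarginalapprox}) is only approximate contributes negligibly — is the step that requires care, and I would lean on the typical-set interpretation from the preceding remark on channel capacity to justify the concentration.
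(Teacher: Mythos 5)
Your proposal follows essentially the same route as the paper's proof: the identical split of the sum in~(\ref{eq:axelsmeckerei}) into the term $\mathbf{u}=\mathbf{u}(\hat{\mathbf{c}}^{(a)})$ --- made exact by maximal discrimination, since $|\mathbb{S}_{i}(x,\mathbf{u}(\hat{\mathbf{c}}^{(a)})|\mathbf{m})|=|\mathbb{C}_{i}^{(1)}(\cdot)|=|\mathbb{C}_{i}^{(2)}(\cdot)|=1$ there --- and the remaining terms, handled via Lemma~\ref{lem:AsympGleichZufSets} under the independence of the constituent codes from $\mathbf{m}$. Your closing concentration argument (the $\exp_{2}(u_{0})$-weighted mass away from the distinguished peak is negligible because a rate below capacity keeps independent code words far from $\mathbf{r}$) is precisely the paper's final step, so the proposal is correct and matches the paper's proof.
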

\begin{proof}
It holds with (\ref{eq:axelsmeckerei}) that \begin{equation}
P_{C_{i}}^{\otimes}(x|\mathbf{m})\propto\sum_{\mathbf{u}\in\mathbb{U}}\frac{|\mathbb{C}_{i}^{(1)}(x,\mathbf{u}|\mathbf{m})||\mathbb{C}_{i}^{(2)}(x,\mathbf{u}|\mathbf{m})|}{|\mathbb{S}_{i}(x,\mathbf{u}|\mathbf{m})|}\exp_{2}(u_{0}).\label{eq:otimesxMexp2u0}\end{equation}
For the distinguished event $\hat{\mathbf{c}}^{(a)}\in\mathbb{C}^{(a)}$
it follows that \[
\frac{|\mathbb{C}_{i}^{(1)}(x,\mathbf{u}(\hat{\mathbf{c}}^{(a)})|\mathbf{m})||\mathbb{C}_{i}^{(2)}(x,\mathbf{u}(\hat{\mathbf{c}}^{(a)})|\mathbf{m})|}{|\mathbb{S}_{i}(x,\mathbf{u}(\hat{\mathbf{c}}^{(a)})|\mathbf{m})|}=|\mathbb{C}_{i}^{(a)}(x,\mathbf{u}(\hat{\mathbf{c}}^{(a)})|\mathbf{m})|=1\,\mbox{ for }\,\hat{c}_{i}^{(a)}=x\]
as by assumption $\hat{\mathbf{c}}^{(a)}\in\mathbb{D}(\mathbf{m})$
is maximally discriminated, which gives by definition and $\hat{\mathbf{c}}^{(a)}\in\mathbb{C}^{(l)}$
for $l=1,2$ that\[
|\mathbb{S}_{i}(x,\mathbf{u}(\hat{\mathbf{c}}^{(a)})|\mathbf{m})|=|\mathbb{C}_{i}^{(1)}(x,\mathbf{u}(\hat{\mathbf{c}}^{(a)})|\mathbf{m})|=|\mathbb{C}_{i}^{(2)}(x,\mathbf{u}(\hat{\mathbf{c}}^{(a)})|\mathbf{m})|=1.\]
 I.e., the term with $\mathbf{u=}\mathbf{u}(\hat{\mathbf{c}}^{(a)})$
in (\ref{eq:otimesxMexp2u0}) is correctly estimated.

The other terms in (\ref{eq:otimesxMexp2u0}) represent non distinguished
words and can (with the assumption of independent constituent codes)
be considered to be independent of $\mathbf{m}$. This gives that
they can be assumed to be obtained by random coding. I.e., for\[
\mathbf{u}\neq\mathbf{u}(\hat{\mathbf{c}}^{(a)})\,\mbox{ with }\, u_{l}(\hat{\mathbf{c}}^{(a)})=\mathbf{w}^{(l)}\hat{\mathbf{c}}^{(a)T}\]
holds \[
\frac{|\mathbb{C}_{i}^{(1)}(x,\mathbf{u}|\mathbf{m})||\mathbb{C}_{i}^{(2)}(x,\mathbf{u}|\mathbf{m})|}{|\mathbb{S}_{i}(x,\mathbf{u}|\mathbf{m})|}\asymp|\mathbb{C}_{i}^{(a)}(x,\mathbf{u}|\mathbf{m})|\]
of Lemma~\ref{lem:AsympGleichZufSets}. Hence the other words are
(asymptotically) correctly estimated, too. 

Moreover, with (\ref{eq:otimesxMexp2u0}) one obtains for $\hat{\mathbf{c}}^{(a)}$
a probability value proportional to $\exp_{2}(\mathbf{r}\hat{\mathbf{c}}^{(a)T})$.
The other terms of (\ref{eq:otimesxMexp2u0}) are much smaller: An
independent random code typically does not exhibit code words of small
distance to $\mathbf{r}$. As the code rate is below capacity then
$P^{\otimes}(\mathbf{u}(\hat{\mathbf{c}}^{(a)})|\mathbf{m})$ exceeds
the sum of the probabilities of the other words. Asymptotically by~(\ref{eq:axelsmeckerei})
both the overall symbol probabilities and the overall distribution
of correlations follow. 
\end{proof}
\begin{anm}
\label{anm:(Distance)}(\emph{Distance}) Note that the multiplication
with $\exp_{2}(u_{0})$ in  (\ref{eq:otimesxMexp2u0}) excludes elements
that are not in the distinguished set ($\equiv$ with large distance
to $\mathbf{r}$). These words can -- as shown by information theory
-- not dominate (a random code) in probability. I.e., a maximal discrimination
of non \emph{typical} words will not significantly change the discriminated
symbol probabilities $P_{C_{i}}^{\otimes}(x|\mathbf{m})$. This indicates
that a random choice of the $\mathbf{w}^{(l)}$ for $l=1,2$ will
typically lead to similar beliefs $P_{C_{i}}^{\otimes}(x|\mathbf{m})$
as under $\mathbf{w}^{(1)}=\mathbf{w}^{(2)}=\mathbf{0}$.\\
Conversely it holds that if one code word at a small distance is maximally
discriminated then its probability typically dominates the probabilities
of the other terms in~(\ref{eq:otimesxMexp2u0}). 
\end{anm}
\begin{exa}
\label{exa:locMaxHard}We continue the example above. The discriminator
$\mathbf{m=}(\mathbf{r},\hat{\mathbf{c}}^{(a)},\mathbf{0})$ maximally
discriminates the distinguished word $\hat{\mathbf{c}}^{(a)}$ at
\[
\mathbf{u}=\mathbf{u}(\hat{\mathbf{c}}^{(a)})=(n-2d_{H}(\mathbf{r},\hat{\mathbf{c}}^{(a)}),n,0).\]
The discriminator complexity $\mathbb{U}$ is maximally ${(n+1)}^{2}$
as only this many different values of \[
\mathbf{u}=(n-2d_{H}(\mathbf{r},\mathbf{c}),n-2d_{H}(\hat{\mathbf{c}}^{(a)},\mathbf{c}),0)\]
exist. The complexity is then given by the computation of maximally
$(n+1)^{2}$ elements. As this has to be done $n$ times in the trellis
(see Appendix~\ref{sub:Trellis-Based-Algorithms}) the asymptotic
complexity becomes $O(n^{3})$ (for fixed trellis state complexity).
The computation will give by Theorem~\ref{thm:Asymptyp} that \[
P_{C_{i}}^{\otimes}(x|\mathbf{m})\asymp P_{C_{i}}^{(a)}(x|\mathbf{r})\mbox{ with }\hat{c}_{i}^{(a)}=\mbox{sign}(L_{i}^{\otimes}(\mathbf{m}))\]
as $\hat{\mathbf{c}}^{(a)}$ is distinguished and as all other words
can be assumed to be chosen independently.\\
I.e., $P^{\otimes}(\mathbf{u}|\mathbf{m})$ exhibits a peak of height
$1$ and the $P_{C_{i}}^{\otimes}(x|\mathbf{m})$ give the asymptotically
correct symbol probabilities. 
\end{exa}

\subsection{Well Defined Discriminators}

Example~\ref{exa:locMaxHard} shows that for the distinguished event
$\hat{\mathbf{c}}^{(a)}$ the \emph{hard} \emph{decision} discriminator
\[
\mathbf{m=}(\mathbf{r},\hat{\mathbf{c}}^{(a)},\mathbf{0})\mbox{ with }\hat{c}_{i}^{(a)}=\mbox{sign}(L_{i}^{\otimes}(\mathbf{m}))\]
 produces discriminated symbol beliefs close to the overall symbol
probabilities. The discriminator complexity $|\mathbb{U}|\leq{(n+1)}^{2}$
is thus sufficient to obtain the asymptotically correct decoding decision. 

\begin{anm}
\emph{(Equivalent Hard Decision Discriminators)} By (\ref{eq:otimesxMexp2u0})
the hard decision discriminators \[
\mathbf{m=}(\mathbf{r},\mathbf{w},\mathbf{0})\mbox{, }\mathbf{m}=(\mathbf{r},\mathbf{0},\mathbf{w})\mbox{, and }\mathbf{m}=(\mathbf{r},\mathbf{w},\mathbf{w})\]
are equivalent: For the three cases the same $\mathbb{C}_{i}^{(l)}(x|\mathbf{m})$
and $\mathbb{S}_{i}^{(l)}(x|\mathbf{m})$ and thus $P_{C_{i}}^{\otimes}(x|\mathbf{m})$
follow. In the sequel of this section we will (for symmetry reasons)
only consider the discriminators $\mathbf{m=}(\mathbf{r},\mathbf{w},\mathbf{w})$. 
\end{anm}
The discussion above shows that a discriminator with randomly chosen
$\mathbf{w}$ should give almost the same $L_{i}^{\otimes}(\mathbf{m})$
as $L_{i}^{\otimes}(\mathbf{r},\mathbf{0},\mathbf{0})$. If, however,
the discriminator is strongly dependent on the distinguished solution,
i.e., $\mathbf{w}=\hat{\mathbf{c}}^{(a)}$ then the correct solution
is found via $L_{i}^{\otimes}(\mathbf{m})$. This gives the following
definition and lemma.

\begin{Def}
\label{Def:(Well-Defined-Discriminator)}(Well Defined Discriminator)
A well defined discriminator $\mathbf{m}=(\mathbf{r},\mathbf{w},\mathbf{w})$
fulfils\emph{\begin{equation}
w_{i}=\mbox{sign}(L_{i}^{\otimes}(\mathbf{m}))\mbox{ for all }i.\label{eq:hard_dec_wohldefined}\end{equation}
} 
\end{Def}
\begin{lem}
\label{lem:Exis_approx_well_def}For a BSC and distinguished $\hat{\mathbf{c}}^{(a)}$
exists a well defined discriminator $\mathbf{m}=(\mathbf{r},\mathbf{w},\mathbf{w})$
with $w_{i},r_{i}\in\mathbb{B}$ such that $\hat{c}_{i}^{(a)}=w_{i}.$
\end{lem}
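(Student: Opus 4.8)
The plan is to exhibit the hard decision discriminator explicitly and to verify that it satisfies Definition~\ref{Def:(Well-Defined-Discriminator)} by reducing to the facts already established in Example~\ref{exa:locMaxHard} and Theorem~\ref{thm:Asymptyp}. I would set $\mathbf{w}=\hat{\mathbf{c}}^{(a)}$, so that $\mathbf{m}=(\mathbf{r},\hat{\mathbf{c}}^{(a)},\hat{\mathbf{c}}^{(a)})$ and $w_i=\hat{c}_i^{(a)}$ for every $i$. Because the channel is a BSC the received $r_i$ lie in $\mathbb{B}$, and since $\hat{\mathbf{c}}^{(a)}$ is a (BPSK-mapped) code word we also have $w_i=\hat{c}_i^{(a)}\in\mathbb{B}$. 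Thus the membership $w_i,r_i\in\mathbb{B}$ and the identity $\hat{c}_i^{(a)}=w_i$ required by the lemma hold by construction, and the only remaining task is the fixed point condition (\ref{eq:hard_dec_wohldefined}), i.e.\ $w_i=\mbox{sign}(L_i^{\otimes}(\mathbf{m}))$.

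To obtain that condition I would first invoke the Equivalent Hard Decision Discriminators remark: the choices $(\mathbf{r},\mathbf{w},\mathbf{0})$, $(\mathbf{r},\mathbf{0},\mathbf{w})$ and $(\mathbf{r},\mathbf{w},\mathbf{w})$ yield the same discriminated set sizes, hence the same $P_{C_i}^{\otimes}(x|\mathbf{m})$ and the same log ratio $L_i^{\otimes}(\mathbf{m})$. In particular $\mathbf{m}=(\mathbf{r},\hat{\mathbf{c}}^{(a)},\hat{\mathbf{c}}^{(a)})$ is equivalent to the discriminator $(\mathbf{r},\hat{\mathbf{c}}^{(a)},\mathbf{0})$ analysed in Example~\ref{exa:locMaxHard}. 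There the distinguished word lies in $\mathbb{D}(\mathbf{m})$ and is maximally discriminated, so in (\ref{eq:axelsmeckerei}) the term at $\mathbf{u}=\mathbf{u}(\hat{\mathbf{c}}^{(a)})$ contributes a weight proportional to $\exp_2(\mathbf{r}\hat{\mathbf{c}}^{(a)T})$, whereas every competing word behaves as in a random code and, by Lemma~\ref{lem:AsympGleichZufSets} together with the below-capacity assumption, carries negligible total weight.

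Theorem~\ref{thm:Asymptyp} then gives $P_{C_i}^{\otimes}(x|\mathbf{m})\asymp P_{C_i}^{(a)}(x|\mathbf{r})$, and since $\hat{\mathbf{c}}^{(a)}$ is the distinguished word this forces $\mbox{sign}(L_i^{\otimes}(\mathbf{m}))=\hat{c}_i^{(a)}=w_i$ for each $i$. This is precisely (\ref{eq:hard_dec_wohldefined}), so the chosen $\mathbf{m}$ is a well defined discriminator with $\hat{c}_i^{(a)}=w_i$ and $w_i,r_i\in\mathbb{B}$, which establishes existence.

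The main obstacle is that the sign identity inherited from Example~\ref{exa:locMaxHard} is only an asymptotic ($\asymp$) statement, so I would make the role of the hypothesis $P^{(a)}(\hat{\mathbf{c}}^{(a)}|\mathbf{r})\to 1$ explicit. It is needed to guarantee both that the single peak at $\mathbf{u}(\hat{\mathbf{c}}^{(a)})$ survives the $\exp_2(u_0)$ weighting against the summed non-typical terms and that each symbol marginal inherits the sign of $\hat{c}_i^{(a)}$ without hitting the degenerate case $L_i^{\otimes}(\mathbf{m})=0$; with that hypothesis the existence holds in the same typical sense as the surrounding results, while the BSC assumption is exactly what keeps $r_i,w_i\in\mathbb{B}$.
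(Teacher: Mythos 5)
Your proposal is correct and follows essentially the same route as the paper: set $\mathbf{m}=(\mathbf{r},\hat{\mathbf{c}}^{(a)},\hat{\mathbf{c}}^{(a)})$, observe $\hat{\mathbf{c}}^{(a)}\in\mathbb{D}(\mathbf{m})$, invoke Theorem~\ref{thm:Asymptyp} to get $P_{C_{i}}^{\otimes}(x|\mathbf{m})\asymp P_{C_{i}}^{(a)}(x|\mathbf{r})$, and use distinguishedness to conclude $w_{i}=\mbox{sign}(L_{i}^{\otimes}(\mathbf{m}))$. Your detour through the equivalent-discriminators remark and Example~\ref{exa:locMaxHard}, and your explicit flagging of the asymptotic nature of the sign identity, are just more detailed renderings of the paper's own two-line argument.
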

\begin{proof}
Set $\mathbf{m}=(\mathbf{r},\hat{\mathbf{c}}^{(a)},\hat{\mathbf{c}}^{(a)}).$
For this choice holds $\hat{\mathbf{c}}^{(a)}\in\mathbb{D}(\mathbf{m})$
and thus with Theorem~\ref{thm:Asymptyp} asymptotic equality. Moreover,
holds for a distinguished element that \[
P_{C_{i}}^{\otimes}(\hat{c}_{i}^{(a)}|\mathbf{m})\asymp P_{C_{i}}^{(a)}(\hat{c}_{i}^{(a)}|\mathbf{r})\asymp1\]
 and thus $\hat{c}_{i}^{(a)}=\mbox{sign}(L_{i}^{(a)}(\mathbf{r}))=\mbox{sign}(L_{i}^{\otimes}(\mathbf{m}))$.
\end{proof}
The definition of a well defined discriminator (\ref{eq:hard_dec_wohldefined})
can be used as an iteration rule, which gives Algorithm~\ref{alg:Iterative-Hard-Discrimination}.
The iteration thereby exhibits by Lemma~\ref{lem:Exis_approx_well_def}
a fixed point, which provably represents the distinguished solution.
Note that the employment of $\mathbf{w}^{(1)}=\mathbf{w}^{(2)}=\mathbf{w}$
is here handy as by $L_{i}^{\otimes}(\mathbf{m})$ only \emph{one}
common\emph{ }belief is available. This is contrast to Algorithm~\ref{alg:Loopy-Belief-Propagation}
where the employment of the two constituent beliefs generally give
that $\mathbf{w}^{(1)}\ne\mathbf{w}^{(2)}$.

\begin{algorithm}[H]
\begin{enumerate}
\item Set $\mathbf{m}=(\mathbf{r},\mathbf{0},\mathbf{0})$ and $\mathbf{w}=\mathbf{0}$.
\item Set $\mathbf{v}=\mathbf{w}$ and $w_{i}\leftarrow\mbox{sign}\,(L_{i}^{\otimes}(\mathbf{m}))$
for all $i$.
\item If $\mathbf{v}\neq\mathbf{w}$ then $\mathbf{m}=(\mathbf{r},\mathbf{w},\mathbf{w})$
and go to 2.
\item Set $\hat{\mathbf{c}}=\mathbf{w}.$\caption{\label{alg:Iterative-Hard-Discrimination}Iterative Hard Decision
Discrimination}

\end{enumerate}

\end{algorithm}

To understand the overall properties of the algorithm one needs to
consider its convergence properties and the existence of other fixed
points. A first intuitive assessment of the algorithm is as follows.
The decisions taken by $w_{i}=\mbox{sign}(L_{i}^{\otimes}(\mathbf{r},\mathbf{0},\mathbf{0}))$
should by (\ref{eq:logotimesextr}) lead to a smaller symbol error
probability than the one over $\mathbf{r}$. Overall these decisions
are based on $P^{\otimes}(\mathbf{u}|\mathbf{r},\mathbf{0},\mathbf{0})$.
This distribution is necessarily large in the vicinity of $\hat{u}_{0}=n-2\hat{t}$
with $\hat{t}$ the expected number of errors. 

The subsequent discrimination with $\mathbf{w}$ \emph{and} $\mathbf{r}$
will consider the vicinity of $\mathbf{c}$ more precisely if $\mathbf{w}\mathbf{c}^{T}$
is larger than $\mathbf{r}\mathbf{c}^{T}$: In this vicinity less
words exist, which gives that the $|\mathbb{S}(\mathbf{u}|\mathbf{m})|$
are smaller there. Smaller error probability in $\mathbf{w}$ is thus
with (\ref{eq:axelsmeckerei}) \emph{typically} equivalent to a better
discrimination in the vicinity of $\hat{\mathbf{c}}^{(a)}$. This
indicates that the discriminator $(\mathbf{r},\mathbf{w},\mathbf{w})$
is better than $(\mathbf{r},\mathbf{0},\mathbf{0})$. Hence, the new
$w_{i}\leftarrow\mbox{sign}(L_{i}^{\otimes}(\mathbf{r},\mathbf{w},\mathbf{w}))$
should exhibit again smaller error probability and so forth. If the
iteration ends then a stable solution is found. Finally, the solution
$\mathbf{w}=\hat{\mathbf{c}}^{(a)}$ is stable. %
\begin{figure}[tbh]
\noindent \begin{centering}
\subfigure[Initialisation]{\includegraphics[width=0.3\textwidth]{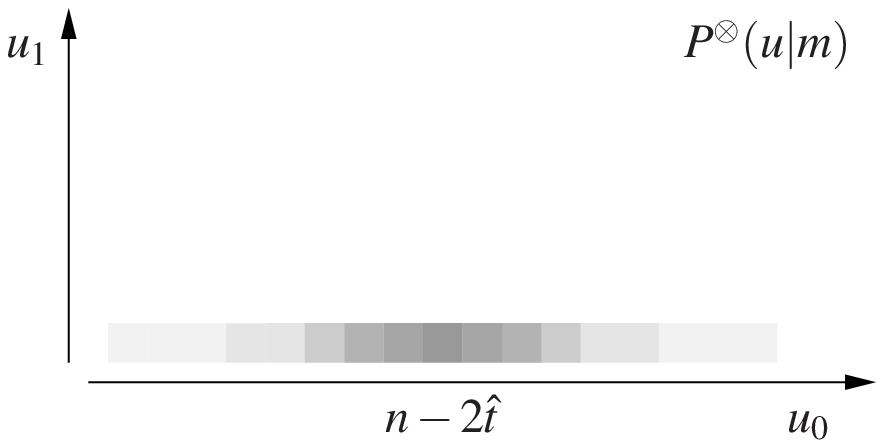}}~~~~\subfigure[Intermediate Step]{\includegraphics[width=0.3\textwidth]{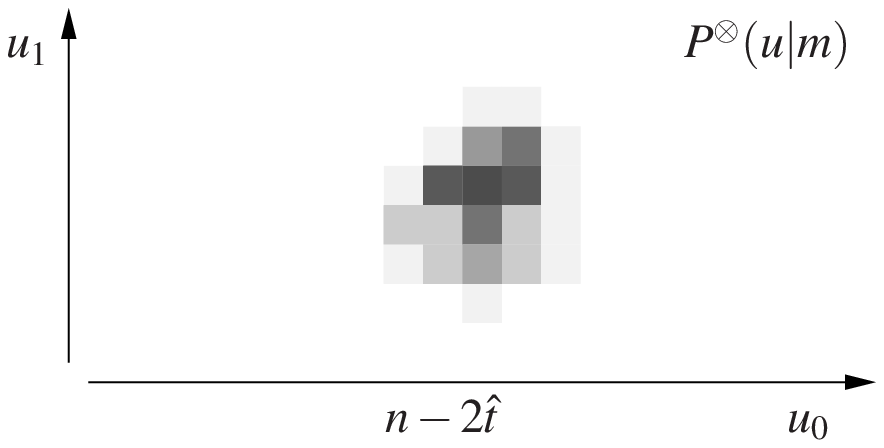}}~~~~\subfigure[Stable]{\includegraphics[width=0.3\textwidth]{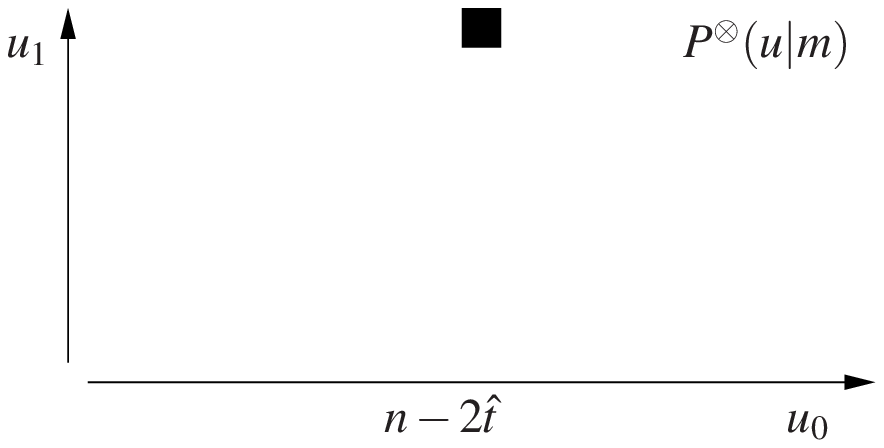}}
\par\end{centering}

\caption{\label{fig:Hard-Decision-Discrimination}Hard Decision Discrimination}

\end{figure}
 This behaviour is exemplary depicted in Figure~\ref{fig:Hard-Decision-Discrimination}
where the density of the squares represent the probability $P^{\otimes}(\mathbf{u}|\mathbf{m})$
of $\mathbf{u}=(u_{0},u_{1})$.

\subsection{Cross Entropy}

To obtain a \emph{quantitative} assessment of Algorithm~\ref{alg:Iterative-Hard-Discrimination}
we  use the following definition.

\begin{Def}
(Cross Entropy) The cross entropy \[
H(\mathbf{C}|\mathbf{w}\Vert\mathbf{r}):=\mathrm{E}_{\mathbf{C}}[H(\mathbf{s}|\mathbf{w})|\mathbf{r}]=-\sum_{\mathbf{s}\in\mathbb{C}}P_{\mathbf{C}}(\mathbf{s}|\mathbf{r})\log_{2}(P(\mathbf{s}|\mathbf{w}))\]
is the expectation of the uncertainty $H(\mathbf{s}|\mathbf{w})=-\log_{2}P(\mathbf{s}|\mathbf{w})$
under $\mathbf{r}$ and $\mathbf{c}\in\mathbb{E}(\mathbf{C})$.
\end{Def}
The cross entropy measures as the \noun{Kullback-Leibler} Distance
\[
D(\mathbf{C}|\mathbf{w}\Vert\mathbf{r}):=H(\mathbf{C}|\mathbf{w}\Vert\mathbf{r})-H(\mathbf{C}|\mathbf{r})\]
with \[
H(\mathbf{C}|\mathbf{r}):=\mathrm{E}_{\mathbf{C}}[H_{\mathbf{C}}(\mathbf{s}|\mathbf{r})|\mathbf{r}]=-\sum_{\mathbf{s}\in\mathbb{C}}P_{\mathbf{C}}(\mathbf{s}|\mathbf{r})\log_{2}(P_{\mathbf{C}}(\mathbf{s}|\mathbf{r})).\]
the similarity between the distributions $P(\mathbf{c}|\mathbf{r})$
and $P(\mathbf{s}|\mathbf{w})$. By \noun{Jensen}'s inequality it
is easy to show~\cite{Massey_Inf} that $D(\mathbf{C}|\mathbf{w}\Vert\mathbf{r})\geq0$
and thus \[
H(\mathbf{C}|\mathbf{w}\Vert\mathbf{r})\geq H(\mathbf{C}|\mathbf{r})\geq0.\]

The entropy $H(\mathbf{C}|\mathbf{r})$ is an information theoretic
measure of the number of probable words in $\mathbb{E}(\mathbf{C})$
under $\mathbf{r}$. To better explain the cross entropy $H(\mathbf{C}|\mathbf{w}\Vert\mathbf{r})$
we shortly review some results regarding the entropy. 

The \emph{typical} set $\mathbb{A}_{n\varepsilon}(\mathbf{C}|\mathbf{r})$
is given by the typical region \[
\mathbb{A}_{n\varepsilon}(\mathbf{C}|\mathbf{r})=\{\mathbf{c}\in\mathbb{E}(\mathbf{C}):|H(\mathbf{c}|\mathbf{r})-H(\mathbf{C}|\mathbf{r})|\leq n\varepsilon\}\]
 of word uncertainties\[
H(\mathbf{c}|\mathbf{r})=-\log_{2}P(\mathbf{c}|\mathbf{r}).\]
This definition directly gives \[
1\geq\sum_{\mathbb{A}_{n\varepsilon}(\mathbf{C}|\mathbf{r})}P(\mathbf{c}|\mathbf{r})=\sum_{\mathbb{A}_{n\varepsilon}(\mathbf{C}|\mathbf{r})}\exp_{2}(-H(\mathbf{c}|\mathbf{r}))\geq\exp_{2}(-H(\mathbf{C}|\mathbf{r})-n\varepsilon)\sum_{\mathbb{A}_{n\varepsilon}(\mathbf{C}|\mathbf{r})}1,\]
respectively, \[
P_{\mathbf{C}}(\mathbb{A}_{n\varepsilon}(\mathbf{C}|\mathbf{r})|\mathbf{r})=\sum_{\mathbb{A}_{n\varepsilon}(\mathbf{C}|\mathbf{r})}P(\mathbf{c}|\mathbf{r})=\sum_{\mathbb{A}_{n\varepsilon}(\mathbf{C}|\mathbf{r})}\exp_{2}(-H(\mathbf{c}|\mathbf{r}))\leq\exp_{2}(-H(\mathbf{C}|\mathbf{r})+n\varepsilon)\sum_{\mathbb{A}_{n\varepsilon}(\mathbf{C}|\mathbf{r})}1.\]
With \[
\sum_{\mathbb{A}_{n\varepsilon}(\mathbf{C}|\mathbf{r})}1=|\mathbb{A}_{n\varepsilon}(\mathbf{C}|\mathbf{r})|\]
 this leads to the bounds on the logarithmic set sizes\[
H(\mathbf{C}|\mathbf{r})+n\varepsilon\geq\log_{2}|\mathbb{A}_{n\varepsilon}(\mathbf{C}|\mathbf{r})|\geq H(\mathbf{C}|\mathbf{r})+\log_{2}(P_{\mathbf{C}}(\mathbb{A}_{n\varepsilon}(\mathbf{C}|\mathbf{r})|\mathbf{r}))-n\varepsilon\]
by the entropy. For many independent events in $\mathbf{r}$ the law
of large numbers gives for $\varepsilon>0$ that \[
P_{\mathbf{C}}(\mathbb{A}_{n\varepsilon}(\mathbf{C}|\mathbf{r})|\mathbf{r})\approx1\mbox{ and thus }H(\mathbf{C}|\mathbf{r})\approx\log_{2}(|\mathbb{A}_{n\varepsilon}(\mathbf{C}|\mathbf{r})|).\]

We investigate if a similar statement can be done for the cross entropy.
To do so first a \emph{cross} \emph{typical} \emph{set} $\mathbb{A}_{n\varepsilon}(\mathbf{C}|\mathbf{w}\Vert\mathbf{r})$
is defined by the region of typical word uncertainties: \begin{equation}
\min(H(\mathbf{S}|\mathbf{w}),H(\mathbf{C}|\mathbf{w}\Vert\mathbf{r}))-n\varepsilon\leq H(\mathbf{s}|\mathbf{w})\leq\max(H(\mathbf{S}|\mathbf{w}),H(\mathbf{C}|\mathbf{w}\Vert\mathbf{r}))+n\varepsilon.\label{eq:Def_cross_typical_set}\end{equation}
I.e., the region spans the typical set in $\mathbf{w}$ but includes
more words if $H(\mathbf{S}|\mathbf{w})\neq H(\mathbf{C}|\mathbf{w}\Vert\mathbf{r})$.
As the typical set in $\mathbf{w}$ is included this gives for large
$n$ that \[
P_{\mathbf{S}}(\mathbb{A}_{n\varepsilon}(\mathbf{C}|\mathbf{w}\Vert\mathbf{r})|\mathbf{w})\approx1\]
 and then in the same way as above the bounds on the logarithmic set
size\[
\max(H(\mathbf{S}|\mathbf{w}),H(\mathbf{C}|\mathbf{w}\Vert\mathbf{r}))+n\varepsilon\geq\log_{2}|\mathbb{A}_{n\varepsilon}(\mathbf{C}|\mathbf{w}\Vert\mathbf{r})|\geq\min(H(\mathbf{S}|\mathbf{w}),H(\mathbf{C}|\mathbf{w}\Vert\mathbf{r}))-n\varepsilon.\]
Moreover, holds by the definition of the cross entropy and the law
of large numbers that typically \[
P_{\mathbf{C}}(\mathbb{A}_{n\varepsilon}(\mathbf{C}|\mathbf{w}\Vert\mathbf{r})|\mathbf{r})\approx1\]
 is true, too. This gives that the cross typical set includes the
typical sets $\mathbb{A}_{n\varepsilon}(\mathbf{S}|\mathbf{w})$ and
$\mathbb{A}_{n\varepsilon}(\mathbf{C}|\mathbf{r})$, i.e., \begin{equation}
\mathbb{A}_{n\varepsilon}(\mathbf{C}|\mathbf{w}\Vert\mathbf{r})\supseteq\mathbb{A}_{n\varepsilon}(\mathbf{S}|\mathbf{w})\mbox{ and }\mathbb{A}_{n\varepsilon}(\mathbf{C}|\mathbf{w}\Vert\mathbf{r}))\supseteq\mathbb{A}_{n\varepsilon}(\mathbf{C}|\mathbf{r}).\label{eq:SuperTransferredSet}\end{equation}

If one wants to define a transfer vector $\mathbf{w}$ based on $\mathbf{r}$
one is thus interested to obtain a representation in $\mathbf{w}$
such that the logarithmic set size \[
\log_{2}|\mathbb{A}_{n\varepsilon}(\mathbf{C}|\mathbf{w}\Vert\mathbf{r})|\leq\max(H(\mathbf{S}|\mathbf{w}),H(\mathbf{C}|\mathbf{w}\Vert\mathbf{r}))+n\varepsilon\]
 is as small as possible. 

In the sequel we consider $P(\mathbf{s}|\mathbf{w})\propto P(\mathbf{w}|\mathbf{s})$
defined by (\ref{eq:defwahrscheinlich}). This probability is given
by \begin{equation}
P(\mathbf{s}|\mathbf{w})=\frac{\prod_{i=1}^{n}P(w_{i}|s_{i})}{\sum_{\mathbf{s}\in\mathbb{S}}P(\mathbf{w}|\mathbf{s})}=\prod_{i=1}^{n}\frac{P(s_{i}|w_{i})}{P(+1|w_{i})+P(-1|w_{i})}=\prod_{i=1}^{n}\frac{2^{s_{i}w_{i}}}{2^{w_{i}}+2^{-w_{i}}}.\label{eq:P(s|w)}\end{equation}
The cross entropy thus becomes\begin{align}
H(\mathbf{C}|\mathbf{w}\Vert\mathbf{r}) & =\sum_{\mathbf{s}\in\mathbb{C}}P_{\mathbf{C}}(\mathbf{s}|\mathbf{r})\sum_{i=1}^{n}(\log_{2}(2^{w_{i}}+2^{-w_{i}})-s_{i}w_{i})\nonumber \\
 & =\sum_{i=1}^{n}\log_{2}(2^{w_{i}}+2^{-w_{i}})-\sum_{i=1}^{n}\sum_{\mathbf{s}\in\mathbb{C}}P_{\mathbf{C}}(\mathbf{s}|\mathbf{r})s_{i}w_{i}\label{eq:CrossIndependent}\end{align}
and \[
\sum_{i=1}^{n}\sum_{\mathbf{s}\in\mathbb{C}}P_{\mathbf{C}}(\mathbf{s}|\mathbf{r})s_{i}w_{i}=\sum_{i=1}^{n}\mathrm{E}_{\mathbf{C},C_{i}}[x|\mathbf{r}]w_{i}=\sum_{i=1}^{n}w_{i}(P_{C_{i}}^{(c)}(+1|\mathbf{r})-P_{C_{i}}^{(c)}(-1|\mathbf{r})).\]
 This definition almost directly defines an optimal transfer.

\begin{lem}
\label{lem:Kullback}Equal logarithmic symbol probability ratios \[
w_{i}=L_{i}(\mathbf{w})=L_{i}^{(c)}(\mathbf{r})=\frac{1}{2}\log_{2}\frac{P_{C_{i}}^{(c)}(+1|\mathbf{r})}{P_{C_{i}}^{(c)}(-1|\mathbf{r})}\]
 and $P(\mathbf{s}|\mathbf{w})\propto P(\mathbf{w}|\mathbf{s})$ defined
by (\ref{eq:defwahrscheinlich}) minimise cross entropy $H(\mathbf{C}|\mathbf{w}\Vert\mathbf{r})$
and \noun{Kullback}-\noun{Leibler} distance $D(\mathbf{C}|\mathbf{w}\Vert\mathbf{r})$. 
\end{lem}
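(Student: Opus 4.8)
The plan is to exploit the separable form of the cross entropy already isolated in~(\ref{eq:CrossIndependent}). Writing $\mu_i := P_{C_i}^{(c)}(+1|\mathbf{r}) - P_{C_i}^{(c)}(-1|\mathbf{r})$ for the conditional mean of the $i$-th symbol, that computation reads $H(\mathbf{C}|\mathbf{w}\Vert\mathbf{r}) = \sum_{i=1}^{n} f_i(w_i)$ with $f_i(w_i) = \log_2(2^{w_i}+2^{-w_i}) - \mu_i w_i$. Since the objective splits into a sum of terms each depending on a single coordinate $w_i$, the minimisation decouples and I can minimise each $f_i$ separately over $w_i\in\mathbb{R}$.

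First I would record convexity. The map $w\mapsto\log_2(2^{w}+2^{-w})$ is a log-sum-exp and hence convex, and subtracting the linear term $\mu_i w_i$ preserves convexity; explicitly $f_i'(w_i) = \tanh(w_i\ln 2) - \mu_i$, so that $f_i''(w_i) = \ln 2\,(1-\tanh^2(w_i\ln 2)) > 0$. Thus each $f_i$ is strictly convex and any stationary point is its unique global minimum.

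Next I would solve $f_i'(w_i)=0$, i.e. $\tanh(w_i\ln 2) = \mu_i$, equivalently $\frac{2^{w_i}-2^{-w_i}}{2^{w_i}+2^{-w_i}} = \mu_i$. The substitution $t = 2^{2w_i}$ turns $\frac{t-1}{t+1} = \mu_i$ into the linear relation $t = \frac{1+\mu_i}{1-\mu_i}$. Since $1+\mu_i = 2P_{C_i}^{(c)}(+1|\mathbf{r})$ and $1-\mu_i = 2P_{C_i}^{(c)}(-1|\mathbf{r})$, this gives $2^{2w_i} = P_{C_i}^{(c)}(+1|\mathbf{r})/P_{C_i}^{(c)}(-1|\mathbf{r})$, hence $w_i = \tfrac12\log_2\frac{P_{C_i}^{(c)}(+1|\mathbf{r})}{P_{C_i}^{(c)}(-1|\mathbf{r})} = L_i^{(c)}(\mathbf{r})$, which is the asserted minimiser.

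Finally I would dispatch the two remaining identities. From~(\ref{eq:P(s|w)}) one reads $P(+1|w_i)/P(-1|w_i) = 2^{2w_i}$, so $L_i(\mathbf{w}) = \tfrac12\log_2 2^{2w_i} = w_i$ holds identically; this supplies the middle equality $w_i = L_i(\mathbf{w})$ in the stated chain. For the \noun{Kullback}--\noun{Leibler} distance, note $D(\mathbf{C}|\mathbf{w}\Vert\mathbf{r}) = H(\mathbf{C}|\mathbf{w}\Vert\mathbf{r}) - H(\mathbf{C}|\mathbf{r})$ and that $H(\mathbf{C}|\mathbf{r})$ is independent of $\mathbf{w}$, so $D$ and the cross entropy share the same minimiser. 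I do not expect a genuine obstacle: the single substantive move is rationalising the coordinate-wise stationarity equation via $t = 2^{2w_i}$, and the remainder is the standard convexity of log-sum-exp together with bookkeeping.
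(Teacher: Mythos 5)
Your proposal is correct and follows essentially the same route as the paper's own proof: both differentiate the separable expression~(\ref{eq:CrossIndependent}), arrive at the stationarity condition $\tanh_2(w_i)=P_{C_i}^{(c)}(+1|\mathbf{r})-P_{C_i}^{(c)}(-1|\mathbf{r})$, identify its solution with $L_i^{(c)}(\mathbf{r})$, and dispose of the \noun{Kullback}--\noun{Leibler} claim by noting $H(\mathbf{C}|\mathbf{r})$ does not depend on $\mathbf{w}$. Your explicit strict-convexity check (and the substitution $t=2^{2w_i}$) merely makes rigorous what the paper leaves implicit, namely that the stationary point is the unique global minimum.
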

\begin{proof}
First it holds by (\ref{eq:defwahrscheinlich}) that \[
w_{i}=L_{i}(\mathbf{w})=\frac{1}{2}\log_{2}\frac{\exp_{2}(+w_{i})}{\exp_{2}(-w_{i})}.\]
A differentiation of (\ref{eq:CrossIndependent}) leads to \[
\frac{\partial}{\partial w_{i}}H(\mathbf{C}|\mathbf{w}\Vert\mathbf{r})=\sum_{\mathbf{s}\in\mathbb{C}}P_{\mathbf{C}}(\mathbf{s}|\mathbf{r})(\tanh_{2}(w_{i})-x_{i})=\tanh_{2}(w_{i})-\sum_{\mathbf{s}\in\mathbb{C}}c_{i}P_{\mathbf{C}}(\mathbf{s}|\mathbf{r})\stackrel{!}{=}0\]
with $\tanh_{2}(x)=(2^{x}-2^{-x})/(2^{x}+2^{-x})$. This directly
gives that \[
\tanh_{2}(w_{i})=\sum_{\mathbf{s}\in\mathbb{C}}c_{i}P_{\mathbf{C}}(\mathbf{s}|\mathbf{r})=P_{C_{i}}^{(c)}(+1|\mathbf{r})-P_{C_{i}}^{(c)}(-1|\mathbf{r}).\]
As \[
\tanh_{2}(L_{i}^{(c)}(\mathbf{r}))=P_{C_{i}}^{(c)}(+1|\mathbf{r})-P_{C_{i}}^{(c)}(-1|\mathbf{r})\]
and $\frac{\partial}{\partial w_{i}}H(\mathbf{C}|\mathbf{r})=0$ this
is equivalent to the statement of the lemma.
\end{proof}
I.e., the definition of $\mathbf{w}$ by $L_{i}(\mathbf{w})=L_{i}^{(c)}(\mathbf{r})$
is a consequence of the independence assumption (\ref{eq:defwahrscheinlich}).
Especially interesting is that $L_{i}(\mathbf{w})=L_{i}^{(c)}(\mathbf{r})$
directly implies that\[
H(\mathbf{S}|\mathbf{w})=H(\mathbf{C}|\mathbf{w}\Vert\mathbf{r}),\]
 which gives with (\ref{eq:SuperTransferredSet}) that\[
\mathbb{A}_{n\varepsilon}(\mathbf{C}|\mathbf{w}\Vert\mathbf{r})=\mathbb{A}_{n\varepsilon}(\mathbf{S}|\mathbf{w})\mbox{ and thus }\mathbb{A}_{n\varepsilon}(\mathbf{S}|\mathbf{w})\supseteq\mathbb{A}_{n\varepsilon}(\mathbf{C}|\mathbf{r}).\]
A belief representing transfer vector $\mathbf{w}$ thus typically
describes all probable codewords.\\
By reconsidering the definition of the cross typical set in (\ref{eq:Def_cross_typical_set})
the in $\mathbf{r}$ and $\mathbf{C}$ typical set $\mathbb{A}_{n\varepsilon}(\mathbf{C}|\mathbf{r})$
is (in the mean) contained in the set of in $\mathbf{w}$ probable
words $\mathbf{s}\in\mathbb{S}$ if \[
H(\mathbf{S}|\mathbf{w})\geq H(\mathbf{C}|\mathbf{w}\Vert\mathbf{r}).\]
Hereby the set of probable words is defined by only considering the
right hand side inequality of (\ref{eq:Def_cross_typical_set}).

\subsection{Discriminator Entropy}

In this section the considerations are extended to the discrimination.
To do so we use in equivalence to (\ref{eq:CrossIndependent}) the
following definition. 

\begin{Def}
\label{Def:(Discriminated-Cross-Entropy)}(Discriminated Cross Entropy)
The discriminated cross entropy is \begin{align*}
H(\mathbf{C}^{\otimes}|\mathbf{w}\Vert\mathbf{m}): & =-\sum_{\mathbf{u}\in\mathbb{U}}P^{\otimes}(\mathbf{u}|\mathbf{m})\log_{2}P(\mathbf{s}|\mathbf{w}):=\sum_{i=1}^{n}\log_{2}(2^{w_{i}}+2^{-w_{i}})-\sum_{\mathbf{u}\in\mathbb{U}}w_{i}s_{i}\cdot P_{C_{i}}^{\otimes}(s_{i},\mathbf{u}|\mathbf{m})\\
 & =\sum_{i=1}^{n}\log_{2}(2^{w_{i}}+2^{-w_{i}})-w_{i}\mathrm{E}_{\mathbf{U}}^{\otimes}[c_{i}|\mathbf{m}]\end{align*}
with (\ref{eq:P(s|w)}) and $\mathrm{E}_{\mathbf{U}}^{\otimes}[c_{i}|\mathbf{m}]=P_{C_{i}}^{\otimes}(+1|\mathbf{m})-P_{C_{i}}^{\otimes}(-1|\mathbf{m})$.
\end{Def}
Note that this definition again uses the correspondence of $\mathbf{u}$
and $\mathbf{s}$. Even though by a discrimination not all words are
independently considered, a word uncertainty consideration is still
possible by attributing appropriate probabilities. Lemma~\ref{lem:Kullback}
directly gives that the discriminated cross entropy is always larger
than or equal to the \emph{discriminated symbol entropy} $H(\mathbf{C}^{\otimes}\Vert\mathbf{m})$,
i.e., \[
H(\mathbf{C}^{\otimes}|\mathbf{w}\Vert\mathbf{m})\geq-\sum_{\mathbf{u}\in\mathbb{U}}P^{\otimes}(\mathbf{u}|\mathbf{m})\log_{2}P(\mathbf{s}|\mathbf{L}^{\otimes}(\mathbf{m}))=:H(\mathbf{C}^{\otimes}\Vert\mathbf{m})\]
 The discriminator entropy measures the uncertainty of the discriminated
decoding decision, i.e., the number of words in $\mathbb{S}$ that
need to be considered. This directly gives the following theorem.

\begin{thm}
\label{thm:Hard-decoding=00003Dminimal_DH}The decoding problem for
a distinguished word is equivalent to the solution of\emph{ \begin{equation}
w_{i}=\mbox{sign}(L_{i}^{\otimes}(\mathbf{m}))\label{eq:extrinsic Definition}\end{equation}
}with the discriminated symbol entropy $H(\mathbf{C}^{\otimes}\Vert\mathbf{m})<1$
and $\mathbf{m}=(\mathbf{r},\mathbf{w},\mathbf{w})$.
\end{thm}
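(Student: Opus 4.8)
The plan is to prove the two implications of the claimed equivalence, both built on Theorem~\ref{thm:Asymptyp} and on the product structure of the discriminated cross entropy. The key preliminary reduction is to rewrite the entropy condition. Following the same reasoning as the remark after Lemma~\ref{lem:Kullback}, the minimising choice $\mathbf{w}=\mathbf{L}^{\otimes}(\mathbf{m})$ makes the discriminated cross entropy of Definition~\ref{Def:(Discriminated-Cross-Entropy)} equal to $H(\mathbf{S}|\mathbf{w})$, and since $P(\mathbf{s}|\mathbf{w})=\prod_i P(s_i|w_i)$ factors over the symbols this splits as
\[
H(\mathbf{C}^{\otimes}\Vert\mathbf{m})=\sum_{i=1}^{n}h_i,\qquad h_i:=-\sum_{x\in\mathbb{B}}P_{C_i}^{\otimes}(x|\mathbf{m})\log_2 P_{C_i}^{\otimes}(x|\mathbf{m}),
\]
the sum of the binary entropies of the discriminated symbol decisions (using $\tanh_2(L_i^{\otimes}(\mathbf{m}))=P_{C_i}^{\otimes}(+1|\mathbf{m})-P_{C_i}^{\otimes}(-1|\mathbf{m})$). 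Thus $H(\mathbf{C}^{\otimes}\Vert\mathbf{m})<1$ says that the total symbol uncertainty lies below one bit, i.e.\ that the discriminated distribution is carried, up to a typical count $2^{H}<2$, by the single word $\hat{\mathbf{s}}$ with components $\hat{s}_i=\mbox{sign}(L_i^{\otimes}(\mathbf{m}))$.

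For the direction ``distinguished word $\Rightarrow$ solution'' I would take a distinguished $\hat{\mathbf{c}}^{(a)}$ with $P^{(a)}(\hat{\mathbf{c}}^{(a)}|\mathbf{r})\to1$ and set $\mathbf{w}=\hat{\mathbf{c}}^{(a)}$. Lemma~\ref{lem:Exis_approx_well_def} then makes $\mathbf{m}=(\mathbf{r},\hat{\mathbf{c}}^{(a)},\hat{\mathbf{c}}^{(a)})$ a well defined discriminator, so (\ref{eq:extrinsic Definition}) holds; and since $\hat{\mathbf{c}}^{(a)}\in\mathbb{D}(\mathbf{m})$, Theorem~\ref{thm:Asymptyp} gives $P_{C_i}^{\otimes}(x|\mathbf{m})\asymp P_{C_i}^{(a)}(x|\mathbf{r})$. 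Because the distinguished word dominates, each such symbol probability is asymptotically $0$ or $1$, every $h_i\to0$, and hence $H(\mathbf{C}^{\otimes}\Vert\mathbf{m})\to0<1$.

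For the converse I would start from $\mathbf{m}=(\mathbf{r},\mathbf{w},\mathbf{w})$ satisfying both (\ref{eq:extrinsic Definition}) and $H(\mathbf{C}^{\otimes}\Vert\mathbf{m})<1$. By the reduction above the discriminated distribution is concentrated on the single word $\hat{\mathbf{s}}$ with $\hat{s}_i=\mbox{sign}(L_i^{\otimes}(\mathbf{m}))$, so the well-definedness (\ref{eq:extrinsic Definition}) reads $w_i=\hat{s}_i$, i.e.\ $\mathbf{w}=\hat{\mathbf{s}}$. With $\mathbf{w}=\hat{\mathbf{s}}$ the word $\hat{\mathbf{s}}$ is maximally discriminated, $\hat{\mathbf{s}}\in\mathbb{D}(\mathbf{m})$, so Theorem~\ref{thm:Asymptyp} applies once more and $P^{\otimes}(\mathbf{u}|\mathbf{m})\asymp P^{(a)}(\mathbf{u}|\mathbf{r})$; the unique dominant word of the discriminated distribution must therefore coincide with the unique dominant word of the overall distribution, giving $\hat{\mathbf{s}}=\hat{\mathbf{c}}^{(a)}$ and showing that solving the system solves the decoding problem.

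I expect the converse to be the main obstacle, specifically the exclusion of spurious well defined discriminators $\mathbf{w}\neq\hat{\mathbf{c}}^{(a)}$. The entropy bound is what should rule these out: a $\mathbf{w}$ discriminating a region containing more than one typical word would force $\sum_i h_i\geq1$ through the typical-set estimate $H\approx\log_2|\mathbb{A}_{n\varepsilon}|$, contradicting the hypothesis. Making this rigorous leans on the below-capacity/random-code assumption underlying Theorem~\ref{thm:Asymptyp} (so that no competing codeword of small distance to $\mathbf{r}$ survives the $\exp_2(u_0)$ weighting), and on the delicate point that $2^{H}<2$ only bounds the effective word count below two: one still has to argue that, under the distinguished-word hypothesis, the symbolwise rounding $\mbox{sign}(L_i^{\otimes}(\mathbf{m}))$ is unambiguous rather than split between two near-equal candidates.
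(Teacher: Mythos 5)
Your forward direction and the opening of the converse track the paper's own proof closely: you set $\mathbf{w}=\hat{\mathbf{c}}^{(a)}$, invoke Lemma~\ref{lem:Exis_approx_well_def} and Theorem~\ref{thm:Asymptyp} to get $H(\mathbf{C}^{\otimes}\Vert\mathbf{m})\approx0$, and read $H(\mathbf{C}^{\otimes}\Vert\mathbf{m})<1$ as concentration of the discriminated distribution on the single word $\hat{s}_{i}=\mbox{sign}(L_{i}^{\otimes}(\mathbf{m}))$, which together with (\ref{eq:extrinsic Definition}) makes $\mathbf{w}=\hat{\mathbf{s}}$ maximally discriminated. Your explicit decomposition of $H(\mathbf{C}^{\otimes}\Vert\mathbf{m})$ into a sum of binary symbol entropies is a correct computation (one checks $\log_{2}(2^{L}+2^{-L})-L\tanh_{2}(L)$ is the binary entropy of the symbol marginal) and slightly sharpens the paper's verbal claim that $\exp_{2}H(\mathbf{C}^{\otimes}\Vert\mathbf{m})$ counts the probable words.

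The genuine gap is in your main-line converse: you apply Theorem~\ref{thm:Asymptyp} with $\hat{\mathbf{s}}$ in the role of $\hat{\mathbf{c}}^{(a)}$. That theorem's hypothesis is not merely that \emph{some} word lies in $\mathbb{D}(\mathbf{m})$, but that the \emph{distinguished} word is the one maximally discriminated; its proof uses $\hat{\mathbf{c}}^{(a)}\in\mathbb{C}^{(l)}$ and the dominance of the weight $\exp_{2}(\mathbf{r}\hat{\mathbf{c}}^{(a)T})$. Invoking it for an arbitrary fixed point presupposes $\hat{\mathbf{s}}=\hat{\mathbf{c}}^{(a)}$, which is exactly the conclusion sought, and for a spurious $\mathbf{w}$ the asserted consequence $P^{\otimes}(\mathbf{u}|\mathbf{m})\asymp P^{(a)}(\mathbf{u}|\mathbf{r})$ is precisely what fails. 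The paper closes this differently, in two steps you do not supply: first, since the single-word peak $P^{\otimes}(\hat{\mathbf{u}}|\mathbf{m})\neq0$ is formed from \emph{both} constituent distributions $P^{(l)}(\mathbf{u}|\mathbf{m}^{(l)})$, the maximally discriminated word lies in both $\mathbb{C}^{(1)}$ and $\mathbb{C}^{(2)}$ and hence, by the dual coupling, is an overall codeword; second, if that codeword were not the distinguished one it would have large distance to $\mathbf{r}$ (Remark~\ref{anm:(Distance)}), typically many words of $\mathbb{C}^{(a)}$ share such a distance, and by (\ref{eq:asympgle1}) these all enter $P^{\otimes}(\mathbf{u}|\mathbf{m})$, so the distribution cannot be a single peak --- contradicting $H(\mathbf{C}^{\otimes}\Vert\mathbf{m})<1$. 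Your closing paragraph names this mechanism (the $\exp_{2}(u_{0})$ weighting and the multiplicity of far words) but leaves it as an anticipated obstacle rather than an argument, and it omits the codeword-membership step entirely; without knowing $\hat{\mathbf{s}}\in\mathbb{C}^{(a)}$ the distance dichotomy for codewords cannot be run. So the proposal is incomplete exactly where the theorem's content lies, namely the exclusion of low-entropy fixed points other than $\hat{\mathbf{c}}^{(a)}$, even though the ingredients you point to are the right ones.
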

\begin{proof}
For $w_{i}=\hat{c}_{i}^{(a)}$ is $\hat{\mathbf{c}}^{(a)}\in\mathbb{D}(\mathbf{m})$.
This gives with Lemma~\ref{lem:Exis_approx_well_def} for the discriminated
distribution that \[
P^{\otimes}(\mathbf{u}|\mathbf{m})\asymp P^{(a)}(\mathbf{u}|\mathbf{r}).\]
 As $\hat{\mathbf{c}}^{(a)}$ is a distinguished solution this gives
$P^{(a)}(\mathbf{u}(\hat{\mathbf{c}}^{(a)})|\mathbf{r})\approx1$
or equivalently $H(\mathbf{C}^{\otimes}\Vert\mathbf{m})\approx0.$ 

The discriminated symbol entropy $H(\mathbf{C}^{\otimes}\Vert\mathbf{m})$
estimates by $\exp_{2}H(\mathbf{C}^{\otimes}\Vert\mathbf{m})$ the
logarithmic number of elements in the set of probable words in $\mathbb{S}$.
Any solution $\mathbf{m}$ with \[
H(\mathbf{C}^{\otimes}\Vert\mathbf{m})<1\]
thus exhibits \emph{one} word $\mathbf{s}$ with $P^{\otimes}(\hat{\mathbf{u}}|\mathbf{m})\approx1$.
I.e., one has a discriminated distribution $P^{\otimes}(\mathbf{u}|\mathbf{m})$
that contains just one peak of height almost one at $\hat{\mathbf{u}}$.
As only one word distributes the decisions by (\ref{eq:extrinsic Definition})
give this word, or equivalently that $\hat{\mathbf{u}}=\mathbf{u}(\mathbf{w})$.
Hence, $\hat{\mathbf{c}}=\mathbf{w}$ is maximally discriminated. 

This directly implies that the obtained $\hat{\mathbf{c}}$ needs
to be a codeword of the coupled code: Both distributions $P^{(l)}(\mathbf{u}|\mathbf{m})$
are used for the single word description $P^{\otimes}(\mathbf{u}|\mathbf{m})\neq0$.
Hence, both codes contain the in $\mathbf{u}$ maximally discriminated
word $\hat{\mathbf{c}}$, which gives (by the definition of the dually
coupled code) that this word is an overall codeword.

Assume that $\hat{\mathbf{c}}\neq\hat{\mathbf{c}}^{(a)}$ represents
a non distinguished word. With Remark~\ref{anm:(Distance)} this
word needs to exhibit a large distance to $\mathbf{r}$. Typically
many words $\mathbf{c}\in\mathbb{C}^{(a)}$ exist at such a large
distance. By (\ref{eq:asympgle1}) these words are considered in the
computation of $P^{\otimes}(\mathbf{u}|\mathbf{m})$. Thus $P^{\otimes}(\mathbf{u}|\mathbf{m})$
is not in the form of a peek, which gives that $H(\mathbf{C}^{\otimes}\Vert\mathbf{m})>1$.
As this is a contradiction no other solution of (\ref{eq:extrinsic Definition})\textbf{
$\mathbf{w}$ }but $\mathbf{w}=\hat{\mathbf{c}}^{(a)}$ may exhibit
a discriminated symbol entropy $H(\mathbf{C}^{\otimes}\Vert\mathbf{m})<1$.
\end{proof}
\begin{anm}
(\emph{Typical} \emph{Decoding}) The proof of the theorem indicates
that any code word $\mathbf{c}\in\mathbb{C}^{(a)}$ with small distance
to $\mathbf{r}$ may give rise to a well defined discriminator $\mathbf{m}$
with $H(\mathbf{C}^{\otimes}\Vert\mathbf{m})<1$ and $\mathbf{w}=\mathbf{c}$.
Hence, a low entropy solution of the equation is not equivalent to
ML decoding. However, if the code rate is below capacity and a long
code is employed only one distinguished word exists. 
\end{anm}
Theorem~\ref{thm:Hard-decoding=00003Dminimal_DH} gives that Algorithm~\ref{alg:Iterative-Hard-Discrimination}
fails in finding the distinguished word if either the stopping criterion
is never fulfilled (it runs infinitely long) or the solution exhibits
a large discriminated symbol entropy. To investigate these cases consider
the following Lemma.

\begin{lem}
\label{lem:Hard_Discriminated_Cross_Entropy}It holds that \emph{\begin{equation}
w_{i}\leftarrow\mbox{sign}(L_{i}^{\otimes}(\mathbf{m}))\mbox{ for all }i\label{eq:Update-hard-discriminator}\end{equation}
}minimises the cross entropy $H(\mathbf{C}^{\otimes}|\mathbf{w}\Vert\mathbf{m})$
under the constraint $w_{i}\in\mathbb{B}$. 
\end{lem}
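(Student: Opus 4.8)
The plan is to exploit the structural similarity between the discriminated cross entropy in Definition~\ref{Def:(Discriminated-Cross-Entropy)} and the ordinary cross entropy in~(\ref{eq:CrossIndependent}), so that the argument of Lemma~\ref{lem:Kullback} transfers almost verbatim. First I would write out $H(\mathbf{C}^{\otimes}|\mathbf{w}\Vert\mathbf{m})$ in the separated form already given in Definition~\ref{Def:(Discriminated-Cross-Entropy)}, namely as a sum over the $n$ symbol positions of $\log_2(2^{w_i}+2^{-w_i}) - w_i\,\mathrm{E}_{\mathbf{U}}^{\otimes}[c_i|\mathbf{m}]$. The crucial observation is that the expectation term $\mathrm{E}_{\mathbf{U}}^{\otimes}[c_i|\mathbf{m}] = P_{C_i}^{\otimes}(+1|\mathbf{m}) - P_{C_i}^{\otimes}(-1|\mathbf{m})$ does \emph{not} depend on $w_i$ (by Remark~\ref{anm:(Independence)-w_i^(l)}, the discriminated symbol probability is independent of $w_i^{(l)}$). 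This decouples the minimisation: each term depends on only a single $w_i$, so one may minimise position by position.

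Next I would minimise the single-variable function $f(w_i) = \log_2(2^{w_i}+2^{-w_i}) - w_i\,a_i$, where $a_i := \mathrm{E}_{\mathbf{U}}^{\otimes}[c_i|\mathbf{m}]$ is a constant in $[-1,+1]$. Exactly as in the proof of Lemma~\ref{lem:Kullback}, differentiating gives $\tanh_2(w_i) = a_i$, so the \emph{unconstrained} minimiser is $w_i = L_i^{\otimes}(\mathbf{m})$ (the value whose $\tanh_2$ equals the discriminated symbol expectation). The function $f$ is convex in $w_i$ since $\log_2(2^{w_i}+2^{-w_i})$ is convex and the remaining term is linear; hence $f$ is unimodal with a unique interior minimum at $L_i^{\otimes}(\mathbf{m})$ and is monotonically decreasing to the left of it and increasing to the right.

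The constrained step is then routine: we seek the minimiser over $w_i\in\mathbb{B}=\{-1,+1\}$ rather than over $\mathbb{R}$. By convexity and unimodality, the best admissible point is the one on the same side of zero as the unconstrained optimum, i.e.\ the endpoint closest to $L_i^{\otimes}(\mathbf{m})$. Since $\tanh_2$ is strictly increasing and odd, the sign of $a_i$ equals the sign of $L_i^{\otimes}(\mathbf{m})$, and the endpoint of $\{-1,+1\}$ closest to the unconstrained minimiser is precisely $\mathrm{sign}(L_i^{\otimes}(\mathbf{m}))$. This yields~(\ref{eq:Update-hard-discriminator}) and completes the argument.

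The main obstacle I anticipate is the subtle dependence issue: the minimisation relies critically on $\mathrm{E}_{\mathbf{U}}^{\otimes}[c_i|\mathbf{m}]$ being independent of the very $w_i$ we are optimising over, which holds for $\mathbf{m}=(\mathbf{r},\mathbf{w},\mathbf{w})$ only via the extrinsic independence of Remark~\ref{anm:(Independence)-w_i^(l)}. I would make sure to invoke that remark explicitly, since otherwise the per-coordinate decoupling---and with it the reduction to a one-dimensional convex minimisation---breaks down. A secondary point worth stating carefully is that, strictly, $\mathrm{sign}(L_i^{\otimes}(\mathbf{m}))$ and $\mathrm{sign}(a_i)$ agree because $\tanh_2$ preserves sign; handling the degenerate case $a_i=0$ (where either choice is optimal) is harmless and may be dispatched in one line.
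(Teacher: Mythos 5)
Your route is essentially the paper's: you expand $H(\mathbf{C}^{\otimes}|\mathbf{w}\Vert\mathbf{m})$ as in Definition~\ref{Def:(Discriminated-Cross-Entropy)} into per-symbol terms $f(w_{i})=\log_{2}(2^{w_{i}}+2^{-w_{i}})-w_{i}a_{i}$ with $a_{i}=\mathrm{E}_{\mathbf{U}}^{\otimes}[c_{i}|\mathbf{m}]=\tanh_{2}(L_{i}^{\otimes}(\mathbf{m}))$, and minimise coordinatewise, which is exactly what the paper does. One side remark: the dependence issue you flag as the main obstacle is a non-issue. In the lemma the discriminator $\mathbf{m}$ is held \emph{fixed} (it contains the old transfer vector) while the new $\mathbf{w}$ is the optimisation variable, so $a_{i}$ is constant in $w_{i}$ by construction of the update rule; Remark~\ref{anm:(Independence)-w_i^(l)} is not what carries the decoupling here.

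The genuine flaw is in your constrained step. You argue that ``by convexity and unimodality, the best admissible point is the endpoint closest to the unconstrained minimiser.'' That inference is false for convex functions in general: a convex $f$ with minimiser $w^{*}$ can satisfy $f(-1)<f(+1)$ even though $+1$ is nearer to $w^{*}$; for instance $f(w)=(w-0.1)^{2}+10\max(0,w-0.2)^{2}$ has $w^{*}=0.1$, yet $f(-1)=1.21<7.21=f(+1)$. What makes the conclusion true in the present case is not convexity but the \emph{evenness} of the log-partition term: on $\mathbb{B}$ one has $|w_{i}|=1$, so $\log_{2}(2^{w_{i}}+2^{-w_{i}})=\log_{2}(5/2)$ is the same at both admissible points, whence $f(+1)-f(-1)=-2a_{i}$ and the constrained minimiser is $w_{i}=\mbox{sign}(a_{i})=\mbox{sign}(L_{i}^{\otimes}(\mathbf{m}))$, i.e.\ (\ref{eq:Update-hard-discriminator}). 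This one-line observation is precisely the paper's proof, which never needs the unconstrained optimum, the derivative computation of Lemma~\ref{lem:Kullback}, or convexity at all: under $w_{i}\in\mathbb{B}$ the first term is constant and one simply maximises $w_{i}\tanh_{2}(L_{i}^{\otimes}(\mathbf{m}))$ by sign matching. Your detour is repairable in one line, but as written the justification of the decisive step does not hold.
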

\begin{proof}
The cross entropy $H(\mathbf{C}^{\otimes}|\mathbf{w}\Vert\mathbf{m})$
is given by \[
H(\mathbf{C}^{\otimes}|\mathbf{w}\Vert\mathbf{m})=\sum_{i=1}^{n}\log_{2}(2^{w_{i}}+2^{-w_{i}})-w_{i}\cdot\tanh(L_{i}^{\otimes}(\mathbf{m})).\]
 The cross entropy is under constant $|w_{i}|$ or $w_{i}\in\mathbb{B}$
obviously minimal for \[
\mbox{sign (}w_{i}\cdot\tanh(L_{i}^{\otimes}(\mathbf{m})))=1,\]
 which is the statement of the Lemma.
\end{proof}
The algorithm fails if the iteration does not converge. However, the
lemma gives that (\ref{eq:Update-hard-discriminator}) minimises in
each step of the iteration the cross entropy towards $\mathbf{w}$.
This is equivalent to \[
H(\mathbf{C}^{\otimes}|\mathbf{m}\Vert\mathbf{m})\geq H(\mathbf{C}^{\otimes}|\mathbf{w}\Vert\mathbf{m}).\]
This cross entropy is with $H(\mathbf{C}^{\otimes}|\mathbf{w}\Vert\mathbf{m})\geq\min_{\mathbf{v}}H(\mathbf{C}^{\otimes}|\mathbf{v}\Vert\mathbf{m})=H(\mathbf{C}^{\otimes}\Vert\mathbf{m})$
always larger than the overall discriminated symbol entropy. Furthermore,
holds by the optimisation rule that \[
H(\mathbf{S}|\mathbf{w})\geq H(\mathbf{C}^{\otimes}|\mathbf{w}\Vert\mathbf{m})\geq H(\mathbf{C}^{\otimes}\Vert\mathbf{m}),\]
 which gives that the typical set under the discrimination remains
included. The subsequent step will therefore continue to consider
this set. If the discriminated cross entropy does not further decrease
one thus obtains the same $\mathbf{w}$, which is a fixed point. 

This observation is similar to the discussion above. A discriminator
$\mathbf{m}$ describes environments with words close to $\mathbf{r}$.
A minimisation of the cross entropy can be considered as an optimal
description of this environment under the independence assumption
(and the imposed hard decision constraint). If this knowledge is processed
iteratively then these environments should be better and better investigated.
The discriminated symbol entropy $H(\mathbf{C}^{\otimes}\Vert\mathbf{m})$
will thus typically decrease. For an infinite loop this is not fulfilled,
i.e., such a loop is unlikely or non typical.

Moreover, the iterative algorithm fails if a stable solution $w_{i}=\mbox{sign}(L_{i}^{\otimes}(\mathbf{m}))$
with $\mathbf{w}\neq\hat{\mathbf{c}}$ is found. These solutions exhibit
with the proof of Theorem~\ref{thm:Hard-decoding=00003Dminimal_DH}
large discriminated symbol entropy $H(\mathbf{C}^{\otimes}\Vert\mathbf{m})$
(many words are probable) and thus small $|L_{i}^{\otimes}(\mathbf{m})|$.
However, solutions with small $|L_{i}^{\otimes}(\mathbf{m})|$ seem
unlikely as these values are usually for $\mathbf{w}=\mathbf{0}$
already relatively large and Lemma~\ref{lem:Hard_Discriminated_Cross_Entropy}
indicates that these values will become larger in each step.

\begin{anm}
\label{anm:(Improvements-hard) }(\emph{Improvements}) If the algorithm
fails due to a well defined discriminator of large cross entropy then
an appropriately chosen increase of the discriminator complexity should
improve the algorithm. To increase the discrimination complexity under
hard decisions one may use discriminators $\mathbf{w}^{(1)}\neq\mathbf{w}^{(2)}$.
One possibility is hereby to reuse the old transfer vector by \[
\mathbf{w}^{(2)}=\mathbf{w}^{(1)}\mbox{ and }w_{i}^{(1)}=\mbox{sign}(L_{i}^{\otimes}(\mathbf{m}))\]
 in Step 2 of the iterative algorithm. The complexity of the algorithm
will then, however, increase to $O(n^{4})$. 

On the other hand the complexity can be strongly decreased without
loosing the possibility to maximally discriminate the distinguished
word. First holds that only (distinguished) words up to some distance
$t$ from the received word contribute to $\mathbf{L}^{\otimes}(\mathbf{m})$.
One may thus decrease (if full discrimination of the distance to $\mathbf{w}$
is used) the complexity $|\mathbb{U}|\leq t\cdot(n+1)$ if only those
values are computed. 

A further reduction is obtained by the use of erasures in $\mathbf{w}$,
i.e., by $w_{i}\in\{-1,0,+1\}$ and \[
w_{i}=\frac{\mbox{sign}(L_{i}^{\otimes}(\mathbf{m}))-r_{i}}{2}\]
 in Step~2 of the algorithm. Note that this discrimination has only
complexity $|\mathbb{U}|\leq t^{2}$ as $u_{1}(\hat{\mathbf{c}})\leq\mathbf{w}\mathbf{w}^{T}\leq t$
is typically fulfilled. 

It remains to show that the distinguished solution is stable. We do
this here with the informal proof: For $\hat{c}_{i}=\mbox{sign}(L_{i}^{\otimes}(\mathbf{m}))$
one obtains that $w_{i}=\hat{c}_{i}$ if $\hat{c}_{i}\neq r_{i}$
and 0 else. Hence one obtains that $u_{1}(\hat{\mathbf{c}})=\mathbf{r}\mathbf{\hat{c}}^{T}$
and $u_{1}(\hat{\mathbf{c}})=\mathbf{w}\mathbf{w}^{T}$. If only one
word $\mathbf{s}$ exists for these values $u_{l}$ then this discriminator
$(\mathbf{r},\mathbf{w},\mathbf{w})$ is surely maximally discriminating.
First holds that if $u_{1}(\hat{\mathbf{c}})=\mathbf{w}\mathbf{w}^{T}$
that then $s_{i}=w_{i}$ for is uniquely defined $w_{i}\neq0$. Under
$u_{1}(\hat{\mathbf{c}})$ then the other symbols are uniquely defined
to $s_{i}=r_{i}$ as $u_{1}(\hat{\mathbf{c}})=\mathbf{r}\hat{\mathbf{c}}^{T}$
is the unique maximum of $u_{1}(\mathbf{s})$ under $s_{i}=w_{i}$
for $w_{i}\neq0$. 

The overall complexity of this algorithm is thus smaller than $O(n\cdot t^{2})$
respectively $O(n\cdot t^{3})$ for a discrimination with $\mathbf{w}^{(1)}\neq\mathbf{w}^{(2)}$. 
\end{anm}

\section{Approximations }

The last section indicates that an iterative algorithm with discriminated
symbol probabilities should outperform the iterative propagation of
only the constituent beliefs. However, the discriminator approach
was restricted to problems with small discriminator complexity $|\mathbb{U}|$.
\\
In this form and Remark~\ref{anm:(Real-Valued-Discriminators)} the
algorithm does not apply for example to AWGN channels. In this section
discriminator based decoding is generalised to real valued $\mathbf{w}^{(l)}$
and $\mathbf{r}$, and hence generally $|\mathbb{U}|=|\mathbb{S}|$. 

For a prohibitively large discriminator complexity $|\mathbb{U}|$
the distributions $P_{C_{i}}^{(l)}(x,\mathbf{u}|\mathbf{m}^{(l)})$
can not be practically computed; only an \emph{approximation} is feasible.
This approximation is usually done via a probability density, i.e.,
\[
p_{C_{i}}^{(l)}(x,\mathbf{u}|\mathbf{m}^{(l)})\mathrm{d}\mathbf{u}\approx P_{C_{i}}^{(l)}(x,\mathbf{u}|\mathbf{m}^{(l)})\]
where $p_{C_{i}}^{(l)}(x,\mathbf{u}|\mathbf{m}^{(l)})$ is described
by a small number of parameters.

\begin{anm}
(\emph{Representation} \emph{and Approximation}) The use of an approximation
changes the premise compared to the last section. There we assumed
that the representation complexity of the discriminator is limited
but that the computation is perfect. In this section we assume that
the discriminator is generally globally maximal but that an approximation
is sufficient. 
\end{anm}
An estimation of a distribution may be performed by a \emph{histogram}
given by the rule \[
\mathbb{U}_{\varepsilon}(\mathbf{u}|\mathbf{m}):=\bigcup_{|\mathbf{v}-\mathbf{u}|<\epsilon}\mathbb{U}(\mathbf{v}|\mathbf{m})\]
and the quantisation $\varepsilon$. These values can be approximated
(see Appendix~\ref{sub:Trellis-Based-Algorithms}) with an algorithm
that exhibits a comparable complexity as the one for the computation
of the hard decision values. For a sufficiently small $\varepsilon$
one obviously obtains a sufficient approximation. Here the complexity
remains of the order $O(n^{3})$. It may, however, be reduced as in
Remark~\ref{anm:(Improvements-hard) }. 

\begin{anm}
\label{anm:(Information-Distance)}(\emph{Uncertainty and} \emph{Distance})
The approach with histograms is equivalent to assuming that words
with similar $\mathbf{u}$ do not need to be distinguished; a discrimination
of $\mathbf{s}^{(1)}$ and $\mathbf{s}^{(2)}$ is assumed to be not
necessary if the {}``uncertainty distance'' \[
\mathrm{d}_{H}(\mathbf{u}(\mathbf{s}^{(1)}),\mathbf{u}(\mathbf{s}^{(2)}))=\mathrm{d}_{H}(\mathbf{u}^{(1)},\mathbf{u}^{(2)})=\sum_{l=0}^{2}\Vert H(\mathbf{s}^{(1)}|\mathbf{w}^{(l)})-H(\mathbf{s}^{(2)}|\mathbf{w}^{(l)})\Vert=\sum_{l=0}^{2}\Vert u_{l}^{(1)}-u_{l}^{(2)}\Vert\]
of $\mathbf{s}^{(1)}$ and $\mathbf{s}^{(2)}$ is smaller than some
$\varepsilon$. The error that occurs by \begin{equation}
P_{C_{i}}^{\otimes}(x|\mathbf{m})\propto\intop_{\mathbb{U}}\frac{p_{C_{i}}^{(1)}(x,\mathbf{u}|\mathbf{m}^{(1)})p_{C_{i}}^{(2)}(x,\mathbf{u}|\mathbf{m}^{(2)})}{p_{C_{i}}(x,\mathbf{u}|\mathbf{m})}\mathrm{d}\mathbf{u}\label{eq:extrins_distri}\end{equation}
can for sufficiently small $\varepsilon$ usually be neglected. 

Note that another approach is to approximate only in $u_{0}$ and
continue to use a limited discriminator complexity in $\mathbf{w}$
(by for example hard decision $w_{i}\in\mathbb{B}$), which gives
an exact discrimination in $u_{1}.$ 
\end{anm}

\subsection{\noun{Gauss} Discriminators }

Distributions are usually represented via parameters defined by \emph{expectations}.
This is done as the law of large numbers shows that these expectations
can be computed out of a statistics. Given these values then the unknown
distributions may be approximated by maximum entropy~\cite{Jaynes03}
densities.

\begin{exa}
The simplest method to approximate distributions by probability densities
is to assume that no extra knowledge is available over $\mathbf{u}$.
This leads to the maximal entropy {}``distributions'' (In \noun{Bayes}'
estimation theory this is equivalent to a non proper prior) with stripped
$\mathbf{u}$ \[
P_{C_{i}}^{(l)}(x|\mathbf{m}^{(l)})\approx P_{C_{i}}^{(l)}(x,\mathbf{u}|\mathbf{m}^{(l)})\mbox{ and }P_{C_{i}}(x|\mathbf{m})\approx P_{C_{i}}(x,\mathbf{u}|\mathbf{m}),\]
which is equivalent to $L_{i}^{{\scriptscriptstyle \boxtimes}}(\mathbf{m})=0$
as then $P_{C_{i}}^{{\scriptscriptstyle \boxtimes}}(\mathbf{u}|x,\mathbf{m})=1$
or \[
L_{i}^{\otimes}(\mathbf{m})=r_{i}+\breve{L}_{i}^{(1)}(\mathbf{m}^{(1)})+\breve{L}_{i}^{(2)}(\mathbf{m}^{(2)})\]
and thus implicitly to Algorithm~\ref{alg:Loopy-Belief-Propagation}.
Note that the derived tools do not give rise to a further evaluation
of this approach: A discrimination in the sense defined above does
not take place.
\end{exa}
The additional expectations considered here are the mean values $\mu_{l}$
and the correlations $\phi_{l,k}$. These are for the given correlation
map \begin{equation}
u_{l}(\mathbf{s})=\sum_{i=1}^{n}w_{i}^{(l)}s_{i}\label{eq:Correlation}\end{equation}
\label{sub:Moments}and $|s_{i}|=1$ defined by \begin{align*}
\mu_{l}^{(h)} & =\mathrm{E}_{\mathbf{C}^{(h)}}[u_{l}|\mathbf{m}^{(h)}]=\sum_{i=1}^{n}\sum_{\mathbb{C}^{(h)}}w_{i}^{(l)}c_{i}P^{(h)}(\mathbf{c}|\mathbf{m}^{(h)})\\
 & =\sum_{i=1}^{n}w_{i}^{(l)}(P_{C_{i}}^{(h)}(+1|\mathbf{m}^{(h)})-P_{C_{i}}^{(h)}(+1|\mathbf{m}^{(h)}))=\sum_{i=1}^{n}w_{i}^{(l)}\mathrm{E}_{\mathbf{C}^{(h)}}[c_{i}|\mathbf{m}^{(h)}]\end{align*}
and \begin{align*}
\left[\phi_{l,k}^{(h)}\right]^{2}+\mu_{l,j}\mu_{l,k}=\mathrm{E}_{\mathbf{C}^{(h)}}[u_{l}u_{k}|\mathbf{m}^{(h)}] & =\sum_{i=1}^{n}\sum_{j=1}^{n}w_{i}^{(l)}w_{j}^{(k)}\sum_{{\mathbf{c}\in\mathbb{C}}^{(h)}}c_{i}c_{j}P^{(h)}(\mathbf{c}|\mathbf{m}^{(h)})\\
 & =\sum_{i=1}^{n}\sum_{j=1}^{n}w_{j}^{(l)}w_{i}^{(k)}\mathrm{E}_{\mathbf{C}^{(h)}}[c_{i}c_{j}|\mathbf{m}^{(h)}].\end{align*}
The complexity of the computation of each value $\mu_{l}^{(h)}$ and
$\phi_{l,k}^{(h)}$ is here (see Appendix~\ref{sub:Trellis-Based-Algorithms})
comparable to the complexity of the BCJR Algorithm, i.e., for fixed
trellis state complexity $O(n)$. 

For known mean values and variances the maximum entropy density is
the \noun{Gauss} density. This density is with the following Lemma
especially suited for discriminator based decoding.

\begin{lem}
\label{lem:gauss-gauss}For long codes with small trellis complexity
one obtains asymptotically a\noun{ Gauss} density for $P^{(l)}(\mathbf{u}|\mathbf{m}^{(l)})$
and $P(\mathbf{u}|\mathbf{m})$.
\end{lem}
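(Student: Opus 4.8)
The plan is to recognise each component $u_l=\mathbf{w}^{(l)}\mathbf{s}^T=\sum_{i=1}^{n}w_i^{(l)}s_i$ as a sum of $n$ bounded terms (recall $|s_i|=1$) and to establish a multivariate central limit theorem for the vector $\mathbf{u}=(u_0,u_1,u_2)$. Since a Gaussian is fixed by its first two moments, the covariance of the limiting \noun{Gauss} density is then exactly the matrix of correlations $\phi_{l,k}$ and means $\mu_l$ already computed above, so establishing normality alone suffices to pin down the density. I would treat the two asserted cases separately, as the dependence structure of the $s_i$ differs between them.

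For $P(\mathbf{u}|\mathbf{m})$ the underlying measure on $\mathbf{s}$ is the uncoded distribution $P(\mathbf{s}|\mathbf{m})\propto P(\mathbf{r}|\mathbf{s})P(\mathbf{w}^{(1)}|\mathbf{s})P(\mathbf{w}^{(2)}|\mathbf{s})$, which by the independence assumptions (\ref{eq:defwahrscheinlich}) and (\ref{eq:Overall-proba-uncod}) factorises over the symbols. Hence the $S_i$ are independent (though not identically distributed), the three sums $u_0,u_1,u_2$ are linear combinations of them, and I would apply the multivariate Lindeberg--Feller CLT. The Lindeberg condition holds as soon as no single coordinate weight dominates the total spread, i.e. $\max_i |w_i^{(l)}|^2/\sum_i |w_i^{(l)}|^2\to 0$; this is the natural regularity requirement on a discriminator, supplied here by the assumption of a long code. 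Asymptotic normality of $\mathbf{u}$ under $P(\mathbf{s}|\mathbf{m})$ follows.

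For the constituent probability $P^{(l)}(\mathbf{u}|\mathbf{m}^{(l)})$ the symbols are constrained to $\mathbb{C}^{(l)}$ and are therefore no longer independent. Here I would use the hypothesis of small trellis complexity: such a code admits a trellis whose state sequence is a Markov chain on a \emph{bounded} state space (see Appendix~\ref{sub:Trellis-Based-Algorithms}), and under the weighted measure $P^{(l)}(\mathbf{s}|\mathbf{m}^{(l)})$ the pair (state, emitted symbol) remains a finite-state (inhomogeneous) Markov chain. Each $u_l$ is then a weighted sum of functionals of this chain, and I would invoke the CLT for functionals of a uniformly ergodic, strongly mixing finite-state Markov chain. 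Boundedness of the state space forces the mixing coefficients to decay geometrically, so correlations between distant positions are summable, the covariance $\phi_{l,k}^{(h)}$ grows linearly in $n$, and the same no-dominant-weight condition delivers the multivariate \noun{Gauss} limit.

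The main obstacle is this second case: one must exclude degeneracies of the trellis chain (reducibility, periodicity, or a collapsing asymptotic variance that would make the \noun{Gauss} density singular) and verify genuine mixing for the relevant range of weights $\mathbf{w}^{(l)}$. For \emph{good} codes that behave like random codes (cf. the Randomness remark) this is generic, but a careful statement requires the normalised covariance $\tfrac{1}{n}\phi_{l,k}^{(h)}$ to converge to a non-singular limit. Once that and the mixing property are in hand, the CLT for Markov functionals closes the argument for $P^{(l)}(\mathbf{u}|\mathbf{m}^{(l)})$, while $P(\mathbf{u}|\mathbf{m})$ emerges either from the independent case above or as the special case of a trivial one-state chain.
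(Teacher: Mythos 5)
Your proposal is correct and takes essentially the same route as the paper: the paper's proof likewise treats $P(\mathbf{u}|\mathbf{m})$ as a sum of independent terms (via the factorisations (\ref{eq:defwahrscheinlich}) and (\ref{eq:Overall-proba-uncod})) and handles $P^{(l)}(\mathbf{u}|\mathbf{m}^{(l)})$ by noting that small trellis complexity bounds the code memory and thus yields many effectively independent regions of subsequent symbols --- the block-independence heuristic for which your Markov-chain mixing CLT is the standard rigorous formalisation. Your explicit Lindeberg-type condition on the weights and the non-degeneracy requirement on $\tfrac{1}{n}\phi_{l,k}^{(h)}$ are exactly the hypotheses the paper's informal argument leaves implicit.
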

\begin{proof}
The values $u_{l}(\mathbf{c})$ are obtained by the correlation given
in (\ref{eq:Correlation}). For $P(\mathbf{u}|\mathbf{m})$ this is
equivalent to a sum of independent random values. I.e., $P(\mathbf{u}|\mathbf{m})$
is by the central limit theorem \noun{Gauss} distributed. For long
codes with small trellis state complexity and many considered words
the same holds true for $P^{(h)}(\mathbf{u}|\mathbf{m}^{(h)})$. In
this case the limited code memory gives sufficiently many independent
regions of subsequent code symbols. I.e., the correlation again leads
to a sum of many independent random values. 
\end{proof}
\begin{anm}
(\emph{Notation}) The \noun{Gauss} approximated symbol probability
distributions are here denoted by a hat, i.e., $\hat{p}_{C_{i}}^{(l)}(x,\mathbf{u}|\mathbf{m})$
and $\hat{p}_{C_{i}}(x,\mathbf{u}|\mathbf{m})$. The same is done
for the approximated logarithmic symbol probability ratios.
\end{anm}
The constituent \noun{Gauss} approximations then imply the approximation
of $P^{\otimes}(\mathbf{u}|\mathbf{m})$ by\[
\hat{p}^{\otimes}(\mathbf{u}|\mathbf{m})\propto\frac{\hat{p}^{(2)}(\mathbf{u}|\mathbf{m}^{(2)})\hat{p}_{C_{i}}^{(1)}(\mathbf{u}|\mathbf{m}^{(1)})}{\hat{p}(\mathbf{u}|\mathbf{m})}\]
and thus approximated discriminated symbol probabilities (for the
computation see Appendix~\ref{sub:Appendix_a2}) \begin{equation}
\hat{P}_{C_{i}}^{\otimes}(x|\mathbf{m})\propto\intop_{\mathbb{U}}\frac{\hat{p}_{C_{i}}^{(1)}(x,\mathbf{u}|\mathbf{m}^{(1)})\hat{p}_{C_{i}}^{(2)}(x,\mathbf{u}|\mathbf{m}^{(2)})}{\hat{p}_{C_{i}}(x,\mathbf{u}|\mathbf{m})}\mathrm{d}\mathbf{u}.\label{eq:approximated discriminated symbol probabilities}\end{equation}

This approximation is obtained via other approximations. Its quality
can thus not be guaranteed as before. To use the approximated discriminated
symbol probabilities in an iteration one therefore first has to check
the validity of (\ref{eq:approximated discriminated symbol probabilities}). 

By some choice of $\mathbf{w}^{(1)}$ and $\mathbf{w}^{(2)}$ the
approximations of the constituent distribution are performed in an
environment $\mathbb{A}_{n\varepsilon}(\mathbf{C}^{(l)}|\mathbf{m}^{(l)})$
where $\hat{p}^{(l)}(\mathbf{u}|\mathbf{m}^{(l)})$ is large. The
overall considered region is given by $\mathbb{A}_{n\varepsilon}(\mathbf{S}|\mathbf{m})$
defined by $\hat{p}(\mathbf{u}|\mathbf{m})$. This overall region
represents the possible overall words. The approximation is surely
valid if the possible code words of the $l$-th constituent code under
$\mathbf{m}^{(l)}$ are included in this region. I.e.,  the conditions
\begin{equation}
\mathbb{A}_{n\varepsilon}(\mathbf{C}^{(l)}|\mathbf{m}^{(l)})\subseteq\mathbb{A}_{n\varepsilon}(\mathbf{S}|\mathbf{m})\label{eq:g_regions}\end{equation}
for $l=1,2$ have to be fulfilled. In this case the description of
the last section applies as then the approximation is typically good.

\begin{anm}
That this consideration is necessary becomes clear under the assumption
that the constituent \noun{Gauss} approximations do \emph{not} consider
the same environments. In this case their mean values strongly differ.
The approximation of the discriminated distribution, however, will
therefore consider regions with a large distance to the mean. The
obtained results are then not predictable as a \noun{Gauss} approximation
is only good for the words assumed to be probable, i.e., close to
its mean value. Under (\ref{eq:g_regions}) this can not happen.
\end{anm}
The condition (\ref{eq:g_regions}) is -- in respect to the set sizes
-- fulfilled if \[
H(\mathbf{S}|\mathbf{m})\geq H(\mathbf{C}^{(l)}|\mathbf{m}\Vert\mathbf{m}^{(l)})\]
as this is equivalent to \[
\mathbb{A}_{n\varepsilon}(\mathbf{S}|\mathbf{m})\supseteq\mathbb{A}_{n\varepsilon}(\mathbf{C}^{(l)}|\mathbf{m}\Vert\mathbf{m}^{(l)}),\]
which gives with (\ref{eq:SuperTransferredSet}) that \[
\mathbb{A}_{n\varepsilon}(\mathbf{S}|\mathbf{m})\supseteq(\mathbb{A}_{n\varepsilon}(\mathbf{C}^{(l)}|\mathbf{m}\Vert\mathbf{m}^{(l)})\cap\mathbb{C}^{(l)})\supseteq\mathbb{A}_{n\varepsilon}(\mathbf{C}^{(l)}|\mathbf{m}^{(l)}).\]
However, by (\ref{eq:g_regions}) not only the set sizes but also
the words need to match. With\[
H(\mathbf{C}^{(l)}|\mathbf{m}\Vert\mathbf{m}^{(l)})=\sum_{i=0}^{n}H(C_{i}^{(l)}|r_{i}+w_{i}^{(1)}+w_{i}^{(2)}\Vert\mathbf{m}^{(l)})\]
we therefore employ the symbol wise conditions\begin{equation}
H(C_{i}|r_{i}+w_{i}^{(1)}+w_{i}^{(2)})\geq H(C_{i}^{(l)}|r_{i}+w_{i}^{(1)}+w_{i}^{(2)}\Vert\mathbf{m}^{(l)}).\label{eq:SymbolConditionDisc}\end{equation}
As all symbols are independently considered the conditions (\ref{eq:g_regions})
are then typically fulfilled. 

A decoding decision is again found if $\hat{H}(\mathbf{C}^{\otimes}\Vert\mathbf{m})<1$
under (\ref{eq:SymbolConditionDisc}). To find such a solution we
propose to minimise in each step $\hat{H}(\mathbf{C}^{\otimes}|\mathbf{v}\Vert\mathbf{m})$
under the condition (\ref{eq:SymbolConditionDisc}) of code $l$. 

As then (\ref{eq:SymbolConditionDisc}) is fulfilled the obtained
set of probable words remains in the region of common beliefs, which
guarantees the validity of the subsequent approximation. This optimised
$\mathbf{v}$ is then used to update $\mathbf{w}^{(l)}$ under fixed
$\mathbf{w}^{(h)}$ and $h\neq l$. 

\begin{algorithm}[tbh]
\begin{enumerate}
\item Set $\mathbf{w}^{(1)}=\mathbf{w}^{(2)}=\mathbf{0}$. Set $l=2$ and
$h=1$.
\item Swap $l$ and $h$. Set $\mathbf{z}=\mathbf{w}^{(l)}$
\item Set $\mathbf{v}$ such that \[
\hat{H}(\mathbf{C}^{\otimes}|\mathbf{v}\Vert\mathbf{m})\to\min\]
 under $H(C_{i}|v_{i})\geq H(C_{i}^{(l)}|v_{i}\Vert\mathbf{m}^{(l)})$
for all $i$.
\item Set $\mathbf{w}^{(l)}=\mathbf{v}-\mathbf{w}^{(h)}-\mathbf{r}.$
\item If $\mathbf{w}^{(l)}\neq\mathbf{z}$ then go to Step 2.
\item Set $\hat{c}_{i}=\mbox{sign}(v_{i})$ for all $i$. 
\end{enumerate}
\caption{\label{alg:Iteration-with-Approximated}Iteration with Approximated
Discrimination}

\end{algorithm}
This gives Algorithm~\ref{alg:Iteration-with-Approximated}. Consider
first the constrained optimisation in Step~3 of Algorithm~\ref{alg:Iteration-with-Approximated}.
The definition of the  cross entropy \[
H(C_{i}^{(l)}|v_{i}\Vert\mathbf{m}^{(l)})=\log_{2}(2^{v_{i}}+2^{-v_{i}})-v_{i}\tanh_{2}(L_{i}^{(l)}(\mathbf{m}^{(l)}))\]
 transforms the constraint to $v_{i}\tanh_{2}(v_{i})\leq v_{i}\tanh_{2}(L_{i}^{(l)}(\mathbf{m}^{(l)})),$
which is equivalent to \[
|v_{i}|\leq|L_{i}^{(l)}(\mathbf{m}^{(l)})|\mbox{ and }\mbox{sign}(v_{i})=\mbox{sign}(L_{i}^{(l)}(\mathbf{m}^{(l)})).\]
 Moreover, the optimisation $\hat{H}(\mathbf{C}^{\otimes}|\mathbf{v}\Vert\mathbf{m})\to\min$
without constraint gives $v_{i}=\hat{L}_{i}^{\otimes}(\mathbf{m})$. 

\begin{samepage}This consideration directly gives the following cases:

\begin{itemize}
\item If this $v_{i}$ does not violate the constraint then it is already
 optimal. 
\item It violates the constraint if \[
\mbox{sign}(L_{i}^{(l)}(\mathbf{m}^{(l)}))\neq\mbox{sign}(\hat{L}_{i}^{\otimes}(\mathbf{m})).\]
In this case one has to set $v_{i}=0$ to fulfil the constraint. 
\item For the remaining case that $\mbox{sign}(L_{i}^{(l)}(\mathbf{m}^{(l)}))=\mbox{sign}(\hat{L}_{i}^{\otimes}(\mathbf{m}))$
but that the constraint is violated by \[
|L_{i}^{(l)}(\mathbf{m}^{(l)})|<|\hat{L}_{i}^{\otimes}(\mathbf{m})|\]
the optimal solution is \[
v_{i}=L_{i}^{(l)}(\mathbf{m}^{(l)})\]
 as the cross entropy $\hat{H}(C_{i}^{\otimes}|v_{i}\Vert\mathbf{m})$
is between $v_{i}=0$ and $v_{i}=\hat{L}_{i}^{\otimes}(\mathbf{m})$
a strictly monotonous function.
\end{itemize}
\end{samepage}

The obtained $v_{i}$ are thus given by either $\hat{L}_{i}^{\otimes}(\mathbf{m})$,
$L_{i}^{(l)}(\mathbf{m}^{(l)})$, or Zero. The zero value is hereby
obtained if the two estimated symbol decisions mutually contradict
each other, which is a rather intuitive result. Moreover, note that
the constrained optimisation is symmetric, i.e., it is equivalent
to \begin{equation}
H(C_{i}^{(l)}|v_{i}\Vert\mathbf{m}^{(l)})\to\min\mbox{ under }H(C_{i}|v_{i})\geq\hat{H}(C_{i}^{\otimes}\Vert\mathbf{m})\mbox{ for all }i.\label{eq:SymmetricOptimisation}\end{equation}

\begin{anm}
(\emph{Higher} \emph{Order} \emph{Moments}) By the central limit theorem
higher order moments do not significantly improve the approximation.
This statement is surprising as the knowledge of all moments leads
to perfect knowledge of the distribution and thus to globally maximal
discrimination. However, the statement just indicates that one would
need a large number of higher order moments to obtain additional useful
information about the distributions.
\end{anm}

\subsubsection{Convergence}

At the beginning of the algorithm many words are considered and a\noun{
Gauss} approximation surely suffices. I.e., in this case an approximation
by histograms would not produce significantly different results. The
convergence properties should thus at the beginning be comparable
to an algorithm that uses a discrimination via histograms. However,
there a sufficiently small $\varepsilon$ should give good convergence
properties.

At the end of the algorithm typically only few words remain to be
considered. For this case the \noun{Gauss} approximation is surely
outperformed by the use of histograms. Note, that this observation
does not contradict the statement of Lemma~\ref{lem:gauss-gauss}
as we there implicitly assumed {}``enough'' entropy. Intuitively,
however, this case is simpler to solve, which implies that the \noun{Gauss}
approximation should remain sufficient.

This becomes clear by reconsidering the region $\mathbb{A}_{n\varepsilon}(\mathbf{S}|\mathbf{m})$
that is employed in each step of the algorithm. An algorithm that
uses histograms will outperform an algorithm with a \noun{Gauss} approximation
if different independent regions in $\mathbb{A}_{n\varepsilon}(\mathbf{S}|\mathbf{m})$
become probable. A \noun{Gauss} approximation expects a connected
region and will thus span over these regions. I.e., the error of the
approximation will lead to a larger number of words that need to be
considered. However, this  should not have a significant impact on
the convergence properties. 

Typically the number of words to be considered will thus become smaller
in any step: The iterative algorithm gives that in every step the
discriminated cross entropy (see Definition~\ref{Def:(Discriminated-Cross-Entropy)})\[
\hat{H}(\mathbf{C}^{\otimes}|\mathbf{m}^{(n)}\Vert\mathbf{m}^{(o)})=\intop_{\mathbb{U}}\hat{p}^{\otimes}(\mathbf{u}|\mathbf{m}^{(o)})\log_{2}P(\mathbf{s}|\mathbf{m}^{(n)})\mathrm{d}\mathbf{u}\]
is smaller than the discriminated symbol entropy $\hat{H}(\mathbf{C}^{\otimes}\Vert\mathbf{m}^{(o)})$
under the assumed prior discrimination $\mathbf{m}^{(o)}$. Hence
the algorithm should converge to some fixed point.

\subsubsection{Fixed Points}

The considerations above give that the algorithm will typically not
stay in an infinite loop and thus end at a fixed point. Moreover,
at this fixed point the additional constraints will be fulfilled.
It remains to consider whether the additional constraints introduce
solutions of large discriminated symbol entropy $\hat{H}(\mathbf{C}^{\otimes}\Vert\mathbf{m})$. 

Intuitively the additionally imposed constraints seem not less restrictive
than the use of histograms and $\mathbf{w}^{(1)}=\mathbf{w}^{(2)}$
as $\mathbf{w}^{(1)}\neq\mathbf{w}^{(2)}$ implies a better discrimination.
However, solutions with large discriminated symbol entropy $\hat{H}(\mathbf{C}^{\otimes}\Vert\mathbf{m})$
will even for the second case typically not exist. Moreover, the discrimination
uses continuous values, which should be better than the again sufficient
hard decision discrimination. I.e., the constraint should have only
a small (negative) impact on the intermediate steps of the algorithm. 

Usually $\hat{H}(\mathbf{C}^{\otimes}\Vert\mathbf{m})$ is already
at the start ($\mathbf{w}^{(1)}=\mathbf{w}^{(2)}=\mathbf{0}$) relatively
small. The subsequent step will despite the constraint typically exhibit
a smaller discriminated symbol entropy. This is equivalent to a smaller
error probability and hence typically a better discrimination of the
distinguished word. 

If the process stalls for \[
\hat{H}(\mathbf{C}^{\otimes}\Vert\mathbf{m})>1\]
 then the by $\mathbf{m}$ investigated region either exhibits no
or multiple typical words. As (typically) the distinguished word is
the only code word in the typical set and as the typical set is (typically)
included this typically does not occur. 

Finally, for $\hat{H}(\mathbf{C}^{\otimes}\Vert\mathbf{m})\approx0$
the distinguished solution is found. At the end of the algorithm (and
the assumption of a distinguished solution) the obtained \noun{Gauss}
discriminated distribution then mimics a \noun{Gauss} approximation
of the overall distribution, i.e., \begin{equation}
\hat{p}^{\otimes}(\mathbf{u}|\mathbf{m})\propto\frac{\hat{p}^{(1)}(\mathbf{u}|\mathbf{m}^{(1)})\hat{p}^{(2)}(\mathbf{u}|\mathbf{m}^{(2)})}{\hat{p}(\mathbf{u}|\mathbf{m})}\asymp\hat{p}^{(a)}(\mathbf{u}|\mathbf{r}).\label{eq:Similard_iscriminated_gauss_distributions}\end{equation}
Without the constraints many solutions $\mathbf{m}$ exist. It only
has to be guaranteed that the constituent sets intersect at the distinguished
word. By the addition of the constraints the solution becomes unique
and is defined such that the number of by $\hat{p}(\mathbf{u}|\mathbf{m})$
considered words is as small as possible. 

This generally implies that then both constituent approximated distributions
need\emph{ }to be rather similar. This is the desired behaviour as
the considered environment is defined by a narrow peak of $\hat{p}^{(a)}(\mathbf{u}|\mathbf{r})$
around $\mathbf{u}(\hat{\mathbf{c}})$. Hence, the additional constraints
seem \emph{needed} for a defined fixed point and the limitations of
the \noun{Gauss} approximation. This emphasises the statement above:
Without the constraint non predictable behaviour may occur. 

\begin{anm}
(\emph{Optimality}) The values $w_{i}^{(l)}$ are continuous. Thus,
one can search for the optimum of \[
\hat{H}(\mathbf{C}^{\otimes}\Vert\mathbf{m})\to\min\]
 by a differentiation of $\hat{H}(\mathbf{C}^{\otimes}|\mathbf{m})$.
For the differentiation holds \[
2\frac{\partial\hat{H}(\mathbf{C}^{\otimes}\Vert\mathbf{m})}{\partial w_{i}^{(l)}}=\tanh_{2}L_{i}(\mathbf{m})-\tanh_{2}\hat{L}_{i}^{\otimes}(\mathbf{m})-2\intop_{\mathbb{U}}\frac{\partial\hat{p}^{\otimes}(\mathbf{u}|\mathbf{m})}{\partial w_{i}^{(l)}}\log_{2}P(\mathbf{s}|\mathbf{m})\mathrm{d}\mathbf{u}.\]
For the first term see Lemma~\ref{lem:Kullback} and the definition
of the discriminated symbol probabilities. The second term is the
derivation of the discriminated probability density. For the case
of a maximal discrimination of the distinguished word it will consist
of this word with probability of almost one. A differential variation
of the discriminator should remain maximally discriminating, which
gives that the second term should be almost zero. Hence, one obtains
that \begin{equation}
L_{i}(\mathbf{m})\approx\hat{L}_{i}^{\otimes}(\mathbf{m})\label{eq:gauss_well_defined}\end{equation}
 holds at the absolute minimum of $\hat{H}(\mathbf{C}^{\otimes}\Vert\mathbf{m})$,
which is a (soft decision) well defined discriminator. Note, furthermore,
that for (\ref{eq:Similard_iscriminated_gauss_distributions}) and
similar constituent distributions the distribution $\hat{p}(\mathbf{u}|\mathbf{m})$
will necessarily be similar to $\hat{p}^{(a)}(\mathbf{u}|\mathbf{m})$,
which is a similar statement as in (\ref{eq:gauss_well_defined}). 
\end{anm}

\begin{anm}
(\emph{Complexity}) The decoding complexity is under the assumption
of fast convergence of the order $O(n)$. I.e., the complexity \emph{only}
depends on the BCJR decoding complexity of the constituent codes.
Moreover, Algorithm~\ref{alg:Iteration-with-Approximated} can still
be considered as an algorithm where parameters are transferred between
the codes. Hereby the number of parameters is increased by a factor
of nineteen (for each $i$ additionally to $w_{i}^{(l)}$ for $x=\pm1$
three means and six correlations). 
\end{anm}
Note, finally, that the original iterative (constituent) belief propagation
algorithm is rather close to the proposed algorithm. Only by (\ref{eq:SymmetricOptimisation})
an additional constraint is introduced. Without the constraint apparently
too strong beliefs are transmitted. Algorithm~\ref{alg:Iteration-with-Approximated}
cuts off excess constituent code belief.

\subsection{Multiple Coupling}

Dually coupled codes constructed by just two constituent codes (with
simple trellises) are not necessarily good codes. This can be understood
by the necessity of simple constituent trellises. This gives that
the left-right (minimal row span)~\cite{DBLP:journals/tit/KschischangS95}
forms of the (permuted) parity check matrices have short effective
lengths. This gives that the codes cannot be considered as purely
random as this condition strongly limits the choice of codes. However,
to obtain asymptotically good codes one generally needs that the codes
can be considered as random.

If -- as in Remark~\ref{anm:(Multiple-Dual-Codes)} -- more constituent
codes are considered, then the dual codes will have smaller rate and
thus a larger effective length. This is best understood in the limit,
i.e., the case of $n-k$ constituent codes with $n-k^{(l)}=1$. These
codes can then be freely chosen without changing the complexity, which
leaves no restriction on the choice of the overall code. 

For a setup of a dual coupling with $N$ codes the discriminated distribution
of correlations is generalised to \[
P^{\otimes}(\mathbf{u}|\mathbf{m})\propto\frac{\prod_{l=1}^{N}P^{(l)}(\mathbf{u}|\mathbf{m}^{(l)})}{(P(\mathbf{u}|\mathbf{m}))^{N-1}},\]
with \[
\mathbf{m}=(\mathbf{r},\mathbf{w}^{(1)},\ldots,\mathbf{w}^{(N)})\mbox{, }\mathbf{m}^{(l)}=(\mathbf{r},\mathbf{w}^{(1)},\ldots,\mathbf{w}^{(l-1)},\mathbf{w}^{(l+1)},\ldots,\mathbf{w}^{(N)}),\]
and an independence assumption as in (\ref{eq:Independent}) and (\ref{eq:defwahrscheinlich}).

The definition of the discriminated symbol probabilities then becomes
\[
P_{C_{i}}^{\otimes}(x|\mathbf{m})\propto\sum_{\mathbf{u}}\frac{\prod_{l=1}^{N}P_{C_{i}}^{(l)}(x,\mathbf{u}|\mathbf{m}^{(l)})}{(P_{C_{i}}(x,\mathbf{u}|\mathbf{m}))^{N-1}}.\]
Moreover, for globally maximal discriminators \[
P_{C_{i}}^{\otimes}(x|\mathbf{m})=P_{C_{i}}^{(a)}(x|\mathbf{r})\mbox{ and }P^{\otimes}(\mathbf{u}|\mathbf{m})=P^{(a)}(\mathbf{u}|\mathbf{r})\]
remains true. The others lemmas and theorems above can be likewise
generalised. Hence, discriminator decoding by \noun{Gauss} approximations
applies to multiple dually coupled codes, too. 

\begin{anm}
(\emph{Iterative Algorithm}) The generalisation of Algorithm~\ref{alg:Iteration-with-Approximated}
may be done by using \begin{align*}
v_{i} & =\arg\min_{\mathbf{v}}H(C_{i}^{\otimes}|v_{i}\Vert\mathbf{m})\mbox{ under }H(C_{i}|v_{i})\geq H(C_{i}^{(l)}|v_{i}\Vert\mathbf{m}^{(l)})\mbox{ for all }i\\
\mathbf{w}^{(l)} & \leftarrow\mathbf{v}-\sum_{h\neq l}^{N}\mathbf{w}^{(l)}\end{align*}
as constituent code dependent update. 
\end{anm}
Overall this gives -- provided the distinguished well defined solution
is found -- that discriminator decoding asymptotically performs as
typical decoding for a random code. I.e., with dually coupled codes
and (to the distinguished solution convergent)\noun{ Gauss} approximated
discriminator decoding the capacity is attained.

\begin{anm}
(\emph{Complexity}) The complexity of decoding is of the order of
the sum of the constituent trellis complexities and thus generally
increases with the number of codes employed. For a fixed number of
constituent codes of fixed trellis state complexity and \noun{Gauss}
approximated discriminators the complexity thus remains of the order
$O(n)$. 
\end{anm}

\begin{anm}
(\emph{Number of Solutions}) For a coupling with many constituent
codes one obtains a large number of non linear optimisations that
have to be performed simultaneously. The non linearity of the common
problem should thus increase with the number of codes. Another explanation
is that then many times typicality is assumed. The probability of
some non typical event then increases. This may increase the number
of stable solutions of the algorithm or introduce instability. \\
This behaviour may be mitigated by the use of punctured codes. The
punctured positions define beliefs, too, which gives that the transfer
vector $\mathbf{w}$ is generally longer  than $n$. The transfer
complexity is thus increased, which should lead to better performance.
Note that this approach is implicitly used for LDPC codes. 
\end{anm}

\subsection{Channel Maps}

In the last sections only memory-less channel maps as given in Remark~\ref{anm:BSC_AWGN}
were considered. A general channel is given by a stochastic map \[
\mathcal{K}:\mathbf{S}\to\mathbf{R}\mbox{ defined by }P_{\mathbf{R}|\mathbf{S}}(\mathbf{r}|\mathbf{s}).\]
We will here only consider channels where signal and {}``noise''
are independent. In particular we assume that the channel $\mathcal{K}$
is given by some known deterministic map \[
\mathcal{H}:\mathbf{s}\mapsto\mathbf{v}=(v_{1},\ldots,v_{n})\]
and $\mathbf{r}=\mathbf{v}+\mathbf{e}$ with the additive noise $\mathbf{E}$
defined by $P_{\mathbf{E}}(\mathbf{e})$. 

A code map $\mathcal{C}$ prior to the transmission together with
the map $\mathcal{H}$ may then be considered as a concatenated map.
The concatenation is hereby (for the formal representation by dually
coupled code see the proof of Theorem~\ref{thm:Both-direct-coupling-are-dual})
equally represented by the dual coupling of the {}``codes''\[
\mathbb{C}^{(1)}:=\{\mathbf{c}^{(1)}=(\mathbf{c},\mathbf{z}):\mathbf{c}\in\mathbb{C}\}\mbox{ and }\mathbb{C}^{(2)}:=\{\mathbf{c}^{(2)}=(\mathbf{s},\mathbf{v}):\mathbf{s}\in\mathbb{S}\mbox{ and }\mathcal{H}:\mathbf{s}\mapsto\mathbf{v}\}\]
where $\mathbf{z}=(z_{1},\ldots,z_{n})$ is undefined, i.e., no restriction
is imposed on $\mathbf{z}$. Moreover, $\mathbf{c}$ is punctured
prior to transmission and only $\mathbf{v}+\mathbf{e}$ is received.
Discriminator based decoding thus applies and one obtains \[
P_{C_{i}}^{\otimes}(x|\mathbf{m})\propto\sum_{\mathbf{u}\in\mathbb{U}}\frac{P_{C_{i}}^{(1)}(x,\mathbf{u}|\mathbf{w}^{(2)})P_{C_{i}}^{(2)}(x,\mathbf{u}|\mathbf{r},\mathbf{w}^{(1)})}{P_{C_{i}}(x,\mathbf{u}|\mathbf{w}^{(1)},\mathbf{w}^{(2)})}\]
as by the definition of the dually coupled code \[
P_{C_{i}}^{(1)}(x,\mathbf{u}|\mathbf{m}^{(1)})=P_{C_{i}}^{(1)}(x,\mathbf{u}|\mathbf{w}^{(2)})P(\mathbf{u}|\mathbf{r})\]
 and by the independence assumption $P_{C_{i}}(x,\mathbf{u}|\mathbf{m})=P_{C_{i}}(x,\mathbf{u}|\mathbf{w}^{(1)},\mathbf{w}^{(2)})P(\mathbf{u}|\mathbf{r})$
are independent of the channel. 

\begin{anm}
(\emph{Trellis}) If a trellis algorithm exists to compute $P_{C_{i}}^{(2)}(x|\mathbf{r})$
then one may compute the symbol probabilities $P_{C_{i}}^{(2)}(x|\mathbf{r},\mathbf{w}^{(1)})$,
the mean values and variances of $\mathbf{u}$ under $P_{C_{i}}^{(2)}(x,\mathbf{u}|\mathbf{r},\mathbf{w}^{(1)})$
with similar complexity.
\end{anm}
\begin{exa}
A linear time invariant channel with additive white \noun{Gauss}ian
noise $\underline{\mathbf{E}}(t)$ is given by the map\[
\underline{r}(t)=\int_{-\infty}^{\infty}\underline{s}(t-\tau)\underline{h}(\tau)\mathrm{d}\tau+\underline{e}(t).\]
Here, we assume a description in the equivalent base band. I.e., the
signals $\underline{\mathbf{r}}(t)$ and $\underline{\mathbf{s}}(t)$
as well as the noise may be complex valued -- indicated by the underbar.
The noise is assumed to be white and thus exhibits the (stationary)
correlation function $\mathrm{E}_{\underline{\mathbf{E}}(t)}[\underbar{e}(t)\underbar{e}^{\ast}(t+\tau)]=\sigma_{\underline{\mathbf{E}}(t)}^{2}\cdot\delta(\tau).$ 

For amplitude shift keying modulation one employs the signal\[
\underline{s}(t)=\sum_{i=-\infty}^{\infty}s_{i}w(t-i\mathrm{T})\mbox{ with }w(\tau)\mbox{ being the waveformer}.\]
 With a matched filter and well chosen whitening filter one obtains
an equivalent (generally complex valued) \emph{discrete} channel \begin{equation}
\mathcal{Q}:\mathbf{s}\mapsto\mathbf{r}\mbox{ with }r_{i}=\sum_{j=0}^{\mathrm{M}}s_{i-j}q_{j}+e_{i}\label{eq:Lin_channel_invariant}\end{equation}
defined by $\mathbf{q}=(q_{0},\ldots,q_{\mathrm{M}})$ and independent
\noun{Gauss} noise $\mathrm{E}_{\mathbf{E}}[e_{i}e_{j}^{*}]=\sigma_{E}^{2}\delta_{i-j}.$ 

For binary phase shift keying one has $s_{i}=\mathrm{A}x_{i}$ and
$x_{i}\in\mathbb{B}$.\\
 For quaternary phase shift keying the map is given by \[
s_{i}=\frac{A}{\sqrt{2}}(x_{2i}+\mathrm{j}x_{2i+1}),\]
$\mathrm{j}^{2}=-1$, and $x_{i}\in\mathbb{B}.$ In both cases a trellis
for $\mathbf{S}$ may be constructed with logarithmic complexity proportional
to the memory $\mathrm{M}$ of the channel $\mathbf{q}=(q_{0},q_{1},\ldots,q_{\mathrm{M}})$
times the number of information Bits per channel symbol $S_{i}$.
Note, moreover, that a time variance of the channel does not change
the trellis complexity.
\end{exa}

\subsection{Channel Detached Discrimination}

Overall one obtains for a linear modulation and linear channels with
additive noise the discrete probabilistic channel map \begin{equation}
\mathcal{K}:\mathbf{s}\mapsto\mathbf{r}=\mathbf{s}\mathbf{Q}+\mathbf{e}.\label{eq:linear_channel_general}\end{equation}
For uncorrelated \noun{Gauss} noise $\mathbf{E}$ this gives the probabilities
(without prior knowledge about the code words) %
{} \[
P(\mathbf{c}|\mathbf{r})\propto\exp_{2}(-\frac{\log_{2}(e)}{2\sigma_{E}^{2}}\Vert\mathbf{r}-\mathbf{c}\mathbf{Q}\Vert²).\]
 If the channel has large memory $\mathrm{M}$ and/or if a modulation
scheme with many Bits per symbol $s_{i}$ is used then the trellis
complexity of a trellis equalisation becomes prohibitively large.
To use the channel map as a constituent code will then not give a
practical algorithm.\\
Reconsider therefore the computation of the discriminated symbol probabilities
$P_{C_{i}}^{\otimes}(x|\mathbf{m})$ under the assumption that the
employed code is already a dually coupled code with, to again simplify
the notation, only two constituent codes. 

To apply the discriminator based approach one thus needs to compute
\[
P_{C_{i}}^{\otimes}(x|\mathbf{m})\propto\sum_{\mathbf{u}\in\mathbb{U}}\frac{P_{C_{i}}^{(1)}(x,\mathbf{u}|\mathbf{m}^{(1)})P_{C_{i}}^{(2)}(x,\mathbf{u}|\mathbf{m}^{(2)})}{P_{C_{i}}(x,\mathbf{u}|\mathbf{m})}.\]
Obviously, the complexity of the computation of the symbol probabilities
$P_{C_{i}}^{(l)}(x,\mathbf{u}|\mathbf{m}^{(l)})$ of the constituent
codes under the channel maps is generally prohibitively large. However,
one may equivalently (see (\ref{eq:axelsmeckerei}) on Page~\pageref{eq:axelsmeckerei})
compute \begin{equation}
P_{C_{i}}^{\otimes}(x|\mathbf{m})\propto\sum_{\mathbf{u}\in\mathbb{U}}\frac{P_{C_{i}}^{(1)}(x,\mathbf{u}|\mathbf{w}^{(2)})P_{C_{i}}^{(2)}(x,\mathbf{u}|\mathbf{w}^{(1)})}{P_{C_{i}}(x,\mathbf{u}|\mathbf{w}^{(1)},\mathbf{w}^{(2)})}\exp_{2}(u_{0})\label{eq:channelwithu_o}\end{equation}
where $u_{0}$ represents the channel probabilities. An alternative
method to compute $P_{C_{i}}^{\otimes}(x|\mathbf{m})$ is thus to
first compute \[
P_{C_{i}}^{\otimes}(x,\mathbf{u}|\mathbf{w}^{(1)},\mathbf{w}^{(2)})\propto\frac{P_{C_{i}}^{(1)}(x,\mathbf{u}|\mathbf{w}^{(2)})P_{C_{i}}^{(2)}(x,\mathbf{u}|\mathbf{w}^{(1)})}{P_{C_{i}}(x,\mathbf{u}|\mathbf{w}^{(1)},\mathbf{w}^{(2)})}\]
by the constituent distributions $P_{C_{i}}^{(l)}(x,\mathbf{u}|\mathbf{w}^{(h)})$
for $h\neq l$ and $P_{C_{i}}(x,\mathbf{u}|\mathbf{w}^{(1)},\mathbf{w}^{(2)})$
to then sum the by $\exp_{2}(u_{0})$ multiplied distributions $P_{C_{i}}^{\otimes}(x,\mathbf{u}|\mathbf{w}^{(1)},\mathbf{w}^{(2)}).$
In the distributions $P_{C_{i}}^{\otimes}(x,\mathbf{u}|\mathbf{w}^{(1)},\mathbf{w}^{(2)})$
the variable $u_{0}=\log_{2}(P(\mathbf{c}|\mathbf{r}))$ thereby relates
to the channel probabilities. The discrimination itself is detached
from the channel information, i.e., done only by the $\mathbf{w}^{(l)}$. 

This approach gives for linear channels and a \noun{Gauss} approximation
a surprisingly small complexity. This is the case as for linear channel
maps the computation of the {}``channel moments'', i.e., the moments
depending on $u_{0}$ is not considerably more difficult than the
computation of the code moments above. To illustrate consider the
channel dependent means, i.e., the expectation $\mathrm{E}_{C_{i}}^{(l)}[u_{0}|x,\mathbf{w}^{(h)}]$
where one obtains \begin{align}
\mathrm{E}_{C_{i}}^{(l)}[u_{0}|x,\mathbf{w}^{(h)}] & :=\sum_{\mathbf{c}\in\mathbb{C}_{i}^{(l)}(x)}u_{0}\, P(\mathbf{c}|\mathbf{w}^{(h)})=\sum_{\mathbf{c}\in\mathbb{C}_{i}^{(l)}(x)}\log_{2}(P(\mathbf{c}|\mathbf{r}))\, P(\mathbf{c}|\mathbf{w}^{(h)})\nonumber \\
 & \phantom{:}=\mathrm{E}_{C_{i}}^{(l)}[\log_{2}(P(\mathbf{c}|\mathbf{r}))|x,\mathbf{w}^{(h)}]=\mbox{const}+\frac{\log_{2}(e)}{2\sigma_{E}^{2}}\mathrm{E}_{C_{i}}^{(l)}[\Vert\mathbf{r}-\mathbf{c}\mathbf{Q}\Vert²|x,\mathbf{w}^{(h)}].\label{eq:estchannelmeans}\end{align}
This is similar to the computation of the variances on Page~\pageref{sub:Moments}.
Generally holds that the means and correlations can be computed for
linear channels with complexity that increases only linearly with
the channel memory $\mathrm{M}$. This result follows as the expectations
for the channels remain computations of moments, but now with vector
operations. The computation of the  variance of $u_{0}$ (for a channel
with memory) is, e.g., equivalent to the computation of a fourth moments
in the independent case. 

Generally holds that the mean values $\mathrm{E}_{C_{i}}^{(l)}[u_{0}|x,\mathbf{w}^{(h)}]$
are only computable up to a constant. This is under a \noun{Gauss}
assumption and (\ref{eq:channelwithu_o}) equivalent to a shift of
$u_{0}$ in $\exp_{2}(u_{0})$ by this constant. However, this will
lead to a proportional factor, which vanishes in the computation of
$P_{C_{i}}^{\otimes}(x|\mathbf{m})$. This unknown constant may thus
be disregarded. 

\begin{anm}
(\emph{Constituent} \emph{Code}) This approach applies by \[
P_{C_{i}}^{(l)}(x|\mathbf{m}^{(l)})\propto\sum_{\mathbf{u}\in\mathbb{U}}P_{C_{i}}^{(l)}(x,\mathbf{u}|\mathbf{w}^{(h)})\exp_{2}(u_{0})\mbox{ for }l\neq h\]
to the constituent codes $\mathbf{C}^{(l)}$, too: One may likewise
compute the constituent beliefs via the moments and a \noun{Gauss}
approximation and thus apply Algorithm~\ref{alg:Iteration-with-Approximated}.
\end{anm}
The \noun{Gauss} approximation for $u_{0}$ surely holds true if the
channel is short compared to the overall length as then many independent
parts contribute. With (\ref{eq:channelwithu_o}) one can thus apply
the iterative decoder based on \noun{Gauss} approximated discriminators
for linear channels with memory without much extra complexity.

\begin{anm}
(\emph{Matched Filter}) Note that one obtains by (\ref{eq:estchannelmeans})
for the initialisation $\mathbf{w}^{(l)}=\mathbf{0}$, $l=1,2$ that
$\hat{L}_{i}^{{\scriptscriptstyle \boxtimes}}(\mathbf{m})$ is proportional
to the {}``matched filter output'' given by $\mathbf{q}_{i}\mathbf{r}^{H}$.
Moreover, in all steps of the algorithm only $\hat{\mathbf{L}}^{{\scriptscriptstyle \boxtimes}}(\mathbf{m})$
is directly affected by the channel map. 
\end{anm}

\subsection{Estimation }

In many cases the transmission channel is unknown at the receiver.
This problem is usually mitigated by a channel estimation prior to
the decoding. However, an independent estimation needs -- especially
for time varying channels \cite{schnee} -- considerable excess redundancy.
The optimal approach would be to perform decoding, estimation, and
equalisation simultaneously. 

\begin{exa}
Assume that it is known that the channel is given as in (\ref{eq:Lin_channel_invariant}),
but that the channel parameters $\mathbf{q}=(q_{0},\ldots,q_{L})$
are unknown. Moreover, assume that the transmission is in the base
band, which gives that the $q_{i}$ are real valued. The aim is to
determine these values together with the code symbol decisions. To
consider them in the \emph{same} way, i.e., by decisions one needs
to reduce the (infinite) description entropy. We therefore assume
a quantisation of $\mathbf{q}$ by a binary vector $\mathbf{b}$.
This may, e.g., be done by\[
q_{i}=\mbox{q}\sum_{j=0}^{B_{i}-1}b_{l(i)+j}\exp_{2}(j),\mbox{ }l(i)=l(i-1)+B_{i-1}\mbox{, }l(0)=0\mbox{, and }b_{i}\in\mathbb{B}.\]
Note that one uses the additional knowledge $|q_{i}|<\mbox{q}\exp_{2}(B_{i})$
under this quantisation. Moreover, the quantisation error tends to
zero with the quantisation step size $\mbox{q}$. Finally, surely
a better quantisation can be found via rate distortion theory.
\end{exa}
The example shows that one obtains with an appropriate quantisation
additional binary unknowns $b_{j}$. Thus one needs additional parameters
$w_{n+j}^{(l)}$ that discriminate these Bits. Moreover, again a probability
distribution is needed for these $w_{n+j}^{(l)}$. Here it is assumed
that the distribution given in (\ref{eq:defwahrscheinlich}) is just
extended to these parameters. Note that this is equivalent to assuming
that code Bits $c_{i}$ and {}``channel Bits'' $b_{j}$ are independent. 

The code symbol discriminated probabilities remain under the now longer
$\mathbf{w}$ as in (\ref{eq:channelwithu_o}). Additionally one obtains
discriminated channel symbol  probabilities given by\[
P_{B_{i}}^{\otimes}(x|\mathbf{m})\propto\sum_{\mathbf{u}\in\mathbb{U}}\frac{P_{B_{i}}^{(1)}(x,\mathbf{u}|\mathbf{w}^{(2)})P_{B_{i}}^{(2)}(x,\mathbf{u}|\mathbf{w}^{(1)})}{P_{B_{i}}(x,\mathbf{u}|\mathbf{w}^{(1)},\mathbf{w}^{(2)})}\exp_{2}(u_{0}).\]
A \noun{Gauss} approximated discrimination is thus as before, however,
one needs to compute new and more general expectations. E.g., for
the general linear channel of (\ref{eq:linear_channel_general}) one
needs to compute the expectation given by\[
\mathrm{E}_{C_{i}}^{(l)}[u_{0}|x,\mathbf{w}^{(h)}]=\mbox{const}-\frac{\log_{2}(e)}{2\sigma_{E}^{2}}\mathrm{E}_{C_{i}}^{(l)}[\Vert\mathbf{r}-\mathbf{c}\mathbf{Q}(\mathbf{b})\Vert²|x,\mathbf{w}^{(h)}]\]
and equivalently for $\mathrm{E}_{B_{i}}^{(l)}[u_{0}|x,\mathbf{w}^{(h)}]$. 

The expectations are generalised because $\mathbf{Q}$ is a map of
the random variables $b_{j}$. With the quantisation of the example
above this map is linear in\textbf{ $\mathbf{b}$.} This first gives
that $\mathbf{c}\mathbf{Q}(\mathbf{b})$ can be considered as a quadratic
function in the binary random variables $\mathbf{x}$ and $\mathbf{b}$.
The computation of the means is thus akin to the one of fourth moments
and a known independent channel. 

Overall this gives  that the complexity for the computation of the
means and variances is for unknown channels {}``only'' twice as
large as for a known channel (of the same memory). It may, however,
still be computed with reasonable complexity. Hence, again an iteration
based on \noun{Gauss} approximated discriminators can be performed. 

\begin{anm}
(\emph{Miscellaneous}) Note that without some known {}``training''
sequence in the code word the iteration will by the symmetry usually
stay at $\mathbf{w}^{(l)}=0.$ Note, moreover, that this approach
is easily extended to time variant channels as considered in~\cite{schnee}
or even to more complex, i.e., non linear channel maps. The complexity
then remains dominated by the complexity of the computation of the
means and correlations.
\end{anm}

\section{Summary}

In this paper first (dually) coupled codes were discussed. A dually
coupled code is given by a juxtaposition of the constituent parity
check matrices. Dually coupled codes provide a straightforward albeit
prohibitively complex computation of the overall word probabilities
$P^{(a)}(\mathbf{s}|\mathbf{r})$ by the constituent probabilities
$P^{(l)}(\mathbf{s}|\mathbf{r})$. However, for these codes a decoding
by belief propagation applies.

The then introduced concept of discriminators is summarised by augmenting
the probabilities by additional (virtual) parameters $\mathbf{w}^{(l)}$
and $\mathbf{u}$ to $P(\mathbf{s},\mathbf{u}|\mathbf{r},\mathbf{w}^{(1)},\mathbf{w}^{(2)})$.
This is similar to the procedure used for belief propagation but there
the parameter $\mathbf{u}$ is not considered. Such carefully chosen
probabilities led (in a globally maximal form) again to optimum decoding
decisions of the coupled code. However, the complexity of decoding
with globally maximal discriminators remains in the order of a brute
force computation of the ML decisions. 

It was then shown that local discriminators may perform almost optimally
but with  much smaller complexity. This observation then gave rise
to the definition of well defined discriminators and therewith again
an iteration rule. It was then shown that this iteration theoretically
admits any element of the typical set of the decoding problem as fixed
point. 

In the last chapter the central limit theorem then led to a \noun{Gauss}
approximation and a low complexity decoder. Finally (linear) channel
maps with memory were considered. It was shown that under additional
approximations equalisation and estimation may be accommodated into
the iterative algorithm with only little impact on the complexity.

\clearpage
\appendix

\section{Appendix}

\subsection{Trellis Based Algorithms\label{sub:Trellis-Based-Algorithms}}

\begin{floatingfigure}[r]{0.4\columnwidth}%
\noindent \begin{centering}
\includegraphics{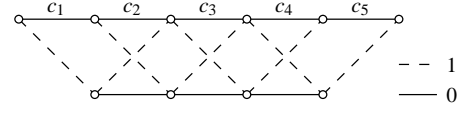}
\par\end{centering}

\caption{Trellis of the (5,4,2) Code}
\end{floatingfigure}%
The trellis is a layered graph representation of the code space $\mathbb{E}(\mathbf{C})$
such that every code word $\mathbf{c}=(c_{1},\ldots,c_{n})$ corresponds
to a unique path through the trellis from left to right. For a binary
code every layer of edges is labelled by one code symbol $c_{i}\in\mathbb{Z}_{2}=\{0,1\}$.
The complexity of the trellis is generally given by the maximum number
of edges per layer.\\
As example the trellis of a {}``single parity check'' code of length
$5$ with $\mathbf{H}=(11111)$ is depicted in the figure to the right.
Each of the $2^{4}$ paths in the trellis defines $c_{1}$ to $c_{5}$
of a code word $\mathbf{c}$ of even weight.

Here only the basic ideas needed to perform the computations in the
trellis are presented. A formal description will be given in another
paper~\cite{Axel}. The description here reflects the operations
performed in the trellis. I.e., only the \emph{lengthening} (extending
one path) and the \emph{junction} (combining two incoming paths of
one trellis node) are considered.

This is first explained for the \noun{Viterbi}~\cite{Forney:_Viterbi_1973}
algorithm that finds the code word with minimal distance. The {}``lengthening''
is given by an addition of the path correlations as depicted in Figure~\ref{fig:Basic-Operations}~(a).
For the combination -- the {}``join'' operation -- only the path
of maximum value is kept. This is equivalent to a minimisation operation
for the distances. This is reflected in the name of the algorithm,
which is often called \emph{min}-\emph{sum} algorithm. 

On the other hand, the BCJR~\cite{BCJR} algorithm (to compute $P_{C_{i}}^{(c)}(x|\mathbf{r})$)
is often called \emph{sum}-\emph{product} algorithm as the lengthening
is performed by the product of the path probabilities. The combination
of two paths is given by a sum. These operations are summarised in
Figure~\ref{fig:Basic-Operations} (b).%
\begin{figure}[H]
\begin{centering}
\subfigure[{\sc Viterbi} Algorithm]{\includegraphics{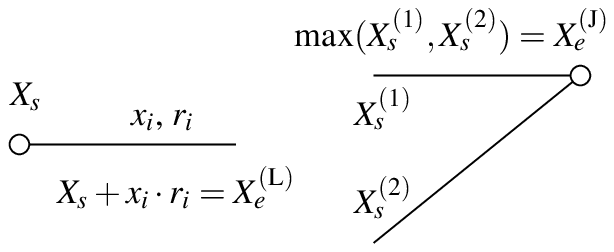}}~~~~~~~~~\subfigure[BCJR Algorithm]{\includegraphics{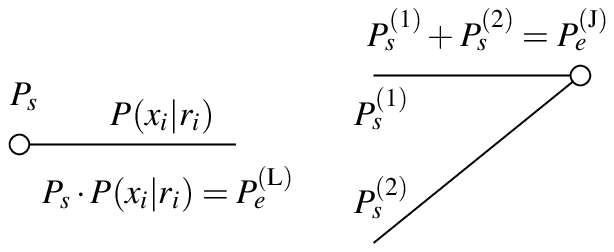}}
\par\end{centering}

\caption{\label{fig:Basic-Operations}Basic Operations in the \noun{Viterbi}
and BCJR Algorithms}

\end{figure}

\begin{anm}
(\emph{Forward}-\emph{Backward} \emph{Algorithm}) For the \noun{Viterbi}
algorithm the ML code word is found by following the selected paths
(starting from the end node) in backward direction. 

The operations of the BCJR algorithm (in forward direction) give at
the end directly the probabilities $P_{C_{n}}^{(c)}(x|\mathbf{r})$.
To compute all $P_{C_{i}}^{(c)}(x|\mathbf{r})$ the BCJR algorithm
has be performed into both directions. 

The same holds true for the algorithms below. This is here not considered
any further -- but keep in mind that only by this two way approach
symbol based distributions or moments can be computed with low complexity.
\end{anm}
In the following we shall reuse the notation of Figure~\ref{fig:Basic-Operations}
and use the indexes $s$ and $e$ before respectively after the lengthen
or join operation.

\subsubsection{Discrete Sets }

To compute a \emph{hard decision distribution} one can just count
the number of words of a certain distance to $\mathbf{r}$. Let this
number be denoted $D(t)$ for weight $t\in\mathbb{Z}$. 

This can be done in the trellis by using for the lengthening operation
from $D_{s}(t)$ to $D_{e}^{(\mathrm{L})}(t)$ by \begin{align*}
D_{e}^{(\mathrm{L})}(t) & =\begin{cases}
D_{s}(t-1) & \mbox{ for }c_{i}\neq r_{i}\\
D_{s}(t) & \mbox{ for }c_{i}=r_{i}.\end{cases}\end{align*}
The junction of paths becomes just \[
D_{e}^{(\mathrm{J})}(t)=D_{s}^{(1)}(t)+D_{s}^{(2)}(t).\]

Given $D(t)$ and a BSC with error probability $p$ one may compute
the probability of having words of distance $t$ by \[
P(t|\mathbf{r})\propto D(t)\cdot p^{t}(1-p)^{n-t}.\]
This can also be done directly in the trellis by\[
p_{e}^{(\mathrm{J})}(t)=p_{s}^{(1)}(t)+p_{s}^{(2)}(t)\mbox{ and }p_{e}^{(\mathrm{L})}(t)=\begin{cases}
p\cdot p_{s}(t-1) & \mbox{ for }c_{i}\neq r_{i}\\
(1-p)\cdot p_{s}(t) & \mbox{ for }c_{i}=r_{i}.\end{cases}\]

\subsubsection{Moments}

For the mean value \[
\mu=\mathrm{E}[\mathbf{r}\mathbf{c}^{T}]=\sum_{i=1}^{n}\mathrm{E}[r_{i}c_{i}]\mbox{ holds }\mathrm{E}[\sum_{j=1}^{i}c_{j}r_{j}|c_{i}]=\mathrm{E}[\sum_{j=1}^{i-1}c_{j}r_{j}]+c_{i}r_{i}.\]
This directly gives that one obtains for the lengthening\[
P_{e}^{(\mathrm{L})}=P_{s}\cdot2^{rc_{i}}\mbox{ and }\mu_{e}^{(\mathrm{L})}=\mu_{s}+r_{i}c_{i}.\]
The junction is just the probability weighted sum of the prior computed
input means given by \[
P_{e}^{(\mathrm{J})}=P_{s}^{(1)}+P_{s}^{(2)}\mbox{ and }\mu_{e}^{(\mathrm{J})}=\frac{P_{s}^{(1)}}{P_{e}^{(L)}}\mu_{s}^{(1)}+\frac{P_{s}^{(2)}}{P_{e}^{(L)}}\mu_{s}^{(2)}.\]
Hence, the BCJR algorithm for the probabilities needs to be computed
at the same time. Note that the obtained mean values are then readily
normalised. 

To compute the {}``energies'' $\mathrm{S}=\mathrm{E}[(\sum_{j=1}^{i}c_{j}r_{j})^{2}]$
one uses in the same way that \[
\mathrm{S}=\mathrm{E}[(\sum_{j=1}^{i}c_{j}r_{j})^{2}|c_{i}]=\mathrm{E}[(\sum_{j=1}^{i-1}c_{j}r_{j})^{2}]+2c_{i}r_{i}\cdot\mathrm{E}[\sum_{j=1}^{i-1}c_{j}r_{j}]+(c_{i}r_{i})^{2}.\]
This additionally gives -- to the then necessary computation of means
and probabilities -- that lengthening and junction are now given by\[
S_{e}^{(\mathrm{L})}=S_{s}+2r_{i}c_{i}\cdot\mu_{s}+(r_{i}c_{i})^{2}\mbox{ and }S_{e}^{(\mathrm{J})}=\frac{P_{s}^{(1)}}{P_{e}^{(L)}}S_{s}^{(1)}+\frac{P_{s}^{(2)}}{P_{e}^{(L)}}S_{s}^{(2)}.\]
Here again the normalisation is already included. Correlation and
higher order moment trellis computations are derived in the same way.
However, for an $l-$th moment all $l-1$ lower moments and the probability
need to be additionally computed. Moreover, the description gives
that these moment computations may be performed likewise for any linear
operation $\mathbf{c}\mathbf{Q}$ (defined over the field of real
or complex numbers) then using vector operations.

\subsubsection{Continuous Sets}

Another possibility to use the trellis is to compute (approximated)
histograms for $u=\mathbf{w}\mathbf{c}^{T}$ with $w_{i}\in\mathbb{R}$
and $c_{i}\in\mathbb{B}$. It is here proposed (other possibilities
surely exist) to use -- as in the hard decision case above -- a vector
function $(h(t),\mu)$ with $t\in\mathbb{Z}$ and $|t|\leq\mathrm{Q}$
and the mean value $\mu$. I.e., the values of $u$ with non vanishing
probability are assumed to be in a vicinity the mean value $\mu$
(computed above) or\[
p(u|\mathbf{m})=0\mbox{ for }|u-\mu|>\mathrm{Q}\varepsilon.\]
Thus $(h(t),\mu)$ is defined to be the approximation of \[
h(t)\approx\int_{t\varepsilon}^{(t+1)\varepsilon}\!\!\! p(u-\mu|\mathbf{m})\mathrm{d}u.\]
Here, densities are used to simplify the notation. It is now assumed
that the mean values are computed as above, which gives that the lengthening
is the trivial operation \[
(h_{e}^{(\mathrm{L})}(t),\mu^{(\mathrm{L})})=(h_{s}(t),\mu_{s}+c_{i}w_{i}).\]
The junction, however, cannot be easily performed as usually the mean
values do not fit on each other. Here, it is assumed that the density
has for any interval the form of a rectangle. Note that this is again
a maximum entropy assumption.

This gives the approximation of the histogram $h_{e}^{(\mathrm{J})}(t)$
by the junction operation to be \[
(h_{e}^{(J)}(t),\mu^{(\mathrm{J})})=(\frac{P_{s}^{(1)}}{P_{e}^{(L)}}\breve{h}_{s}^{(1)}(t)+\frac{P_{s}^{(2)}}{P_{e}^{(L)}}\breve{h}_{s}^{(2)}(t),\frac{P_{s}^{(1)}}{P_{e}^{(L)}}\mu_{s}^{(1)}(t)+\frac{P_{s}^{(2)}}{P_{e}^{(L)}}\mu_{s}^{(2)}).\]
 and $ $\[
\breve{h}_{s}^{(j)}(t-\left\lfloor (\mu_{s}^{(j)}-\mu_{e}^{(\mathrm{L})})\varepsilon\right\rfloor )=a(\mu_{s}^{(j)},\mu_{e}^{(\mathrm{L})})\cdot h_{s}^{(j)}(t)+b(\mu_{s}^{(j)},\mu_{e}^{(\mathrm{L})})\cdot h_{s}^{(j)}(t+1),\]
with $\left\lfloor z\right\rfloor $ the integer part, $\mbox{trunc}(z):=z-\left\lfloor z\right\rfloor $,
\[
a(\mu_{s}^{(j)},\mu_{e}^{(\mathrm{L})})+b(\mu_{s}^{(j)},\mu_{e}^{(\mathrm{L})})=1,\mbox{ and }b(\mu_{s}^{(j)},\mu_{e}^{(\mathrm{L})})=\mbox{trunc(}(\mu_{s}^{(j)}-\mu_{e}^{(\mathrm{L})})\varepsilon).\]

\subsection[\noun{Gauss}ian Approximated Belief]{Computation of $\hat{L}_{i}^{\otimes}(\mathbf{m})$\label{sub:Appendix_a2}}

Equation (\ref{eq:Pboxtimes}) gives the logarithmic probability ratio
\[
\hat{L}_{i}^{\otimes}(\mathbf{m})=r_{i}+\breve{L}_{i}^{(1)}(\mathbf{m}^{(1)})+\breve{L}_{i}^{(2)}(\mathbf{m}^{(1)})+\hat{L}_{i}^{{\scriptscriptstyle \boxtimes}}(\mathbf{m}).\]

The first three terms can be computed as before. For the computation
of $\hat{L}_{i}^{{\scriptscriptstyle \boxtimes}}(\mathbf{m})$ use
that \begin{equation}
\hat{P}_{C_{i}}^{{\scriptscriptstyle \boxtimes}}(x|\mathbf{m})\propto\intop_{\mathbb{U}}\frac{\hat{p}_{C_{i}}^{(1)}(\mathbf{u}|x,\mathbf{m}^{(1)})\cdot\hat{p}_{C_{i}}^{(2)}(\mathbf{u}|x,\mathbf{m}^{(2)})}{\hat{p}_{C_{i}}(\mathbf{u}|x,\mathbf{m})}\mathrm{d}\mathbf{u}=:\intop_{\mathbb{U}}\hat{p}_{C_{i}}^{{\scriptscriptstyle \boxtimes}}(x,\mathbf{u}|\mathbf{m})\mathrm{d}\mathbf{u}.\label{eq:GaussDiscProps}\end{equation}
To compute (\ref{eq:GaussDiscProps}) a multiplication of multivariate
Gauss distributions has to be performed.  The moments of the multivariate
distributions $\hat{p}_{C_{i}}^{(l)}(\mathbf{u}|x,\mathbf{m}^{(l)})$
and $\hat{p}_{C_{i}}(\mathbf{u}|x,\mathbf{m})$ are defined by \[
\mu_{i,j}^{(l)}(x)=\mathrm{E}_{\mathbf{C}^{(l)}|C_{i}}[u_{j}|x,\mathbf{m}^{(l)}]\mbox{ and }A_{i,j,k}^{(l)}(x)=\mathrm{E}_{\mathbf{C}^{(l)}|C_{i}}[(u_{j}-\mu_{i,j}^{(l)})(u_{k}-\mu_{i,k}^{(l)})|x,\mathbf{m}^{(l)}]\]
and likewise for $\mu_{i,j}(x)$ and $A_{i,j,k}(x)$. 

The  multivariate \noun{Gauss} distributions are of the form \[
\hat{p}_{C_{i}}(\mathbf{u}|x,\mathbf{m})=\frac{1}{\sqrt{|2\pi\mathbf{A}_{i}(x)|}}\exp\left(-(\mathbf{u}\!-\!\mathbf{\mu}_{i}(x))[2\mathbf{A}_{i}(x)]^{-1}\!(\mathbf{u}\!-\!\mathbf{\mu}_{i}(x))^{T}\right).\]
 Set \[
\mathbf{B}_{i}^{(l)}\!(x)=\left[\mathbf{A}_{i}^{(l)}\!(x)\right]^{-1}\!\mbox{ and }\mathbf{B}_{i}(x)=\left[\mathbf{A}_{i}(x)\right]^{-1}.\]
The operation in (\ref{eq:GaussDiscProps}) then leads to \[
\hat{p}_{C_{i}}^{\,{\scriptscriptstyle \boxtimes}}(x,\mathbf{u}|\mathbf{m})=\frac{\exp\left(\hat{\mbox{C}}_{i}^{\,{\scriptscriptstyle \boxtimes}}\!(x,\mathbf{m})-(\mathbf{u}\!-\,\hat{\mathbf{\mu}}_{i}^{\,{\scriptscriptstyle \boxtimes}}\!(x))\left[2\hat{\mathbf{A}}_{i}^{{\scriptscriptstyle \boxtimes}}\!(x)\right]^{-1}\!(\mathbf{u}\!-\!\hat{\mathbf{\mu}}_{i}^{\,{\scriptscriptstyle \boxtimes}}{\!(x))}^{T}\right)}{\sqrt{|2\pi\hat{\mathbf{A}}_{i}^{{\scriptscriptstyle \boxtimes}}\!(x)|}}\]
 with \[
\left[\hat{\mathbf{A}}_{i}^{{\scriptscriptstyle \boxtimes}}\!(x)\right]^{-1}=\mathbf{B}_{i}^{(1)}\!(x)+\mathbf{B}_{i}^{(2)}\!(x)-\mathbf{B}_{i}(x)\]
by a comparison of the terms $\mathbf{u}(.)\mathbf{u}^{T}$,\[
\hat{\mathbf{\mu}}_{i}^{{\scriptscriptstyle \boxtimes}}\!(x)=\left(\mathbf{\mu}_{i}^{(1)}\!(x)\mathbf{B}_{i}^{(1)}\!(x)+\mathbf{\mu}_{i}^{(2)}\!(x)\mathbf{B}_{i}^{(2)}\!(x)-\mathbf{\mu}_{i}(x)\mathbf{B}_{i}(x)\right)\hat{\mathbf{A}}_{i}^{{\scriptscriptstyle \boxtimes}}\!(x),\]
by a comparison of the in $\mathbf{u}$ linear terms, and\[
\begin{array}{ccc}
2\hat{\mbox{C}}_{i}^{{\scriptscriptstyle \boxtimes}}\!(x,\mathbf{m}) & = & \hat{\mathbf{\mu}}_{i}^{{\scriptscriptstyle \boxtimes}}\!(x)\left[\hat{\mathbf{A}}_{i}^{{\scriptscriptstyle \boxtimes}}(x)\right]^{-1}\!\hat{\mathbf{\mu}}_{i}^{{\scriptscriptstyle \boxtimes}T}\!(x)+\mathbf{\mu}_{i}(x)\mathbf{B}_{i}(x)\mathbf{\mu}_{i}^{T}(x)-\mathbf{\mu}_{i}^{(1)}\!(x)\mathbf{B}_{i}^{(1)}\!(x)\mathbf{\mu}_{i}^{(1)T}\!(x)\\
 &  & -\mathbf{\mu}_{i}^{(2)}\!(x)\mathbf{B}_{i}^{(2)}\!(x)\mathbf{\mu}_{i}^{(2)T}\!(x)-\log{\displaystyle \frac{|\mathbf{A}_{i}^{(1)}\!(x)||\mathbf{A}_{i}^{(2)}\!(x)|}{|\hat{\mathbf{A}}_{i}^{{\scriptscriptstyle \boxtimes}}\!(x)||\mathbf{A}_{i}(x)|}}\end{array}\]
by a consideration of the remaining constant.

From the definition of the multivariate distributions then follows
that \[
\hat{P}_{C_{i}}^{{\scriptscriptstyle \boxtimes}}(x|\mathbf{m})\propto\intop_{\mathbb{U}}\hat{p}_{C_{i}}^{{\scriptscriptstyle \boxtimes}}(x,\mathbf{u}|\mathbf{m})\mathrm{d}\mathbf{u}=\exp(\hat{\mbox{C}}_{i}^{{\scriptscriptstyle \boxtimes}}\!(x|\mathbf{m})),\]
respectively $\hat{L}_{i}^{{\scriptscriptstyle \boxtimes}}(\mathbf{m})={\displaystyle \frac{1}{2}}\mbox{log}_{2}(e)\cdot(\hat{\mbox{C}}_{i}^{{\scriptscriptstyle \boxtimes}}\!(+1|\mathbf{m})-\hat{\mbox{C}}_{i}^{{\scriptscriptstyle \boxtimes}}\!(-1|\mathbf{m})).$

\bibliographystyle{plain}
\bibliography{ref2,referenzen}
\end{document}